\providecommand{\keywords}[1]
{
  \small	
  \textbf{{Keywords---}} #1
}
\providecommand{\JEL}[1]
{
  \small	
  \textbf{{JEL---}} #1
}
\DeclareMathOperator*{\argmax}{arg\,max}
\providecommand{\U}[1]{\protect\rule{.1in}{.1in}}
\newtheorem{theorem}{Theorem}
\newtheorem{definition}{Definition}
\newtheorem{lemma}{Lemma}
\newtheorem{proposition}{Proposition}
\begin{document}
\newcommand{\setsize}[1]{{\left|#1\right|}}

\newcommand{\floor}[1]{
{\lfloor {#1} \rfloor}
}
\newcommand{\bigfloor}[1]{
{\left\lfloor {#1} \right\rfloor}
}

\newcommand{\super}[1]{^{(#1)}}

\newcommand{\tinteval}[1]{\left[#1\right]}
\newcommand{\distcond}[2]{\left.#1\vphantom{\big|}\right|_{#2}}
\newcommand{\evalat}[2]{\left.#1\vphantom{\big|}\right|_{#2}}

%
%
\newcommand{\given}{|}
\newcommand{\wgiven}{\,\mid\,}

\newcommand{\prob}[2][]{\text{\bf Pr}\ifthenelse{\not\equal{}{#1}}{_{#1}}{}\!\left[{\def\givenn{\middle|}#2}\right]}
\newcommand{\expect}[2][]{\text{\bf E}\ifthenelse{\not\equal{}{#1}}{_{#1}}{}\!\left[{\def\givenn{\middle|}#2}\right]}

\newcommand{\tparen}{\big}
\newcommand{\tprob}[2][]{\text{\bf Pr}\ifthenelse{\not\equal{}{#1}}{_{#1}}{}\tparen[{\def\given{\tparen|}#2}\tparen]}
\newcommand{\texpect}[2][]{\text{\bf E}\ifthenelse{\not\equal{}{#1}}{_{#1}}{}\tparen[{\def\given{\tparen|}#2}\tparen]}

\newcommand{\sprob}[2][]{\text{\bf Pr}\ifthenelse{\not\equal{}{#1}}{_{#1}}{}[#2]}
\newcommand{\sexpect}[2][]{\text{\bf E}\ifthenelse{\not\equal{}{#1}}{_{#1}}{}[#2]}

\newcommand{\lbr}[1]{\left\{#1\right\}}
\newcommand{\rbr}[1]{\left(#1\right)}
\newcommand{\cbr}[1]{\left[#1\right]}

\newcommand{\suchthat}{\,:\,}

\newcommand{\partialx}[2][]{{\tfrac{\partial #1}{\partial #2}}}
\newcommand{\nicepartialx}[2][]{{\nicefrac{\partial #1}{\partial #2}}}
\newcommand{\dd}{{\,\mathrm d}}
\newcommand{\ddx}[2][]{{\tfrac{\dd #1}{\dd #2}}}
\newcommand{\niceddx}[2][]{{\nicefrac{\dd #1}{\dd #2}}}
\newcommand{\grad}{\nabla}

\newcommand{\symdiff}{\triangle}
\newcommand{\abs}[1]{\left\lvert{#1}\right\rvert} 
\newcommand{\set}[1]{\left\{ {#1} \right\} } 
\newcommand{\indicate}[1]{{\bf 1}\left[#1\right]}
\newcommand{\reals}{\mathbb{R}}
\newcommand{\posreals}{\reals_+}
\newcommand{\supp}{\text{supp}}

\newcommand{\primed}{^{\dagger}}
\newcommand{\good}{G}
\newcommand{\bad}{B}
\newcommand{\no}{N}
\newcommand{\rsignal}{R}
\newcommand{\lsignal}{L}
\newcommand{\rstate}{1}
\newcommand{\lstate}{0}

\newcommand{\inc}{{\rm Inc}}

\newcommand{\state}{\theta}
\newcommand{\states}{\Theta}
\newcommand{\prior}{D}
\newcommand{\cost}{c}
\newcommand{\effort}{e}
\newcommand{\val}{v}

\newcommand{\histories}{\mathcal{H}}
\newcommand{\history}{h}
\newcommand{\mess}{m}
\newcommand{\messages}{M}
\newcommand{\contract}{\mathcal{R}}
\newcommand{\reward}{r}
\newcommand{\rewardMenu}{\mathcal{M}}
\newcommand{\randomDevice}{\varsigma}
\newcommand{\util}{u}
\newcommand{\stoptime}{\tau}
\newcommand{\score}{P}
\newcommand{\expectScore}{U_{\score}}
\newcommand{\posterior}{\mu}
\newcommand{\subgradient}{\xi}

\newcommand{\signal}{s}
\newcommand{\signals}{S}

\newcommand{\cgame}{\mathcal{G}}

\title{Incentivizing Forecasters to Learn:\\ 
\normalsize Summarized vs. Unrestricted Advice\thanks{A preliminary version of this paper has been accepted in the Twenty-Fifth ACM Conference on Economics and Computation (EC'24) as a one-page abstract with the title \emph{Optimal Scoring for Dynamic Information Acquisition}. Part of this work was completed while Jonathan Libgober was visiting Yale University, and Yingkai Li was a postdoc at Yale University. The authors thank Dirk Bergemann, Alex Bloedel, Tilman B\"orgers, Juan Carrillo, Tan Gan, Yingni Guo, Marina Halac, Jason Hartline, David Kempe, Nicolas Lambert, Elliot Lipnowski, Andrew McClellan, Mallesh Pai, Harry Pei, Larry Samuelson, Philipp Strack, Guofu Tan, Kai Hao Yang and numerous seminar audiences for helpful comments and suggestions. Yingkai Li thanks the Sloan Research Fellowship FG-2019-12378 and the NUS Start-up Grant for financial support.}}

\author{Yingkai Li\thanks{Department of Economics, National University of Singapore.
Email: \texttt{yk.li@nus.edu.sg}.}
\and 
Jonathan Libgober\thanks{
Department of Economics, University of Southern California. 
Email: \texttt{libgober@usc.edu}.}
}
\date{}
\maketitle
\thispagestyle{empty}
\begin{abstract}

\noindent How should forecasters be incentivized to acquire the most information when learning takes place over time? We address this question in the context of a novel dynamic mechanism design problem in which a designer incentivizes an expert to learn by conditioning rewards on an event's outcome and the expert's reports. Eliciting summarized advice at a terminal date maximizes information acquisition if an informative signal either fully reveals the outcome or has predictable content. Otherwise, richer reporting capabilities may be required. Our findings shed light on incentive design for consultation and forecasting by illustrating how learning dynamics shape the qualitative properties of effort-maximizing contracts.
\end{abstract}

\noindent \keywords{Scoring rules, dynamic contracts, dynamic moral hazard, Poisson learning.} 

\noindent \JEL{D82, D83}



\newpage

\setcounter{page}{1}

\section{Introduction}

\subsection{Overview}
This paper asks whether adding complexity to a contract by making rewards sensitive to the timing of reports can incentivize more information acquisition than simpler, time-independent contracts. We ask this question in a minimal model where only a single prediction is sought and there is no exogenous time pressure. We show that the answer depends on whether incentivizing effort at one belief necessarily weakens incentives at another.

Specifically, we study a novel mechanism design problem that extends past work on \emph{scoring rule design}\footnote{This literature, starting with \cite{brier1950verification}, considered how to evaluate (i.e., provide scores for) weather forecasts. We review this work in more depth in Section \ref{sect:literature}.} to a setting featuring dynamic acquisition of costly information. A scoring rule is a contract that provides a reward---over which the agent has fixed value---with some probability depending on (i) the agent's (reported) belief about the likelihood of an event and (ii) that event's outcome. While much past work studying scoring rules is concerned with incentivizing forecasters to \emph{truthfully report} information, we share a focus with a smaller line of work on how to achieve this end \emph{as well as} incentivize its acquisition. 

A leading application of our work---and scoring rule design more generally---is to the practical question of how to best incentivize forecasters. Forecasting takes a plethora of forms, ranging from familiar (e.g., weather or economic outlooks) to predictions more broadly (e.g., national security threat assessments or medical diagnoses). Our interest is in situations where the event over which a forecast is sought is idiosyncratic. Thus, the only way to obtain advice is via asking a forecaster to actively gather information, while the forecaster cannot rely on passive knowledge to make an informed prediction.

The motivation for considering dynamic information acquisition together with forecaster incentive design stems from the observation that active learning takes time. Consider, for example, a decision maker hiring an analyst to conduct due diligence; e.g., before acquiring a company or funding a startup. In this context, the analyst's task is to forecast whether the investment target will meet its promised outcomes, making this determination by analyzing its complex financial records. In applications where this information is extensive and difficult to interpret, it would be up to the analyst to decide \emph{when} to stop looking for a red flag. This timing decision makes the forecaster's problem inherently dynamic.

We focus on the question of whether eliciting a single prediction, made after all information is acquired, induces truthful reporting \emph{and} maximizes incentives for information acquisition. Our results clarify when more complex contracting capabilities are more conducive to informed forecasts than simpler ones. Some relationships---such as when a consultant is hired at arm's length to report at a prespecified date---may limit the scope for reports to be provided at any time. To the extent that closer relationships enable continuous reporting, our main results address whether this capability provides an avenue for such relationships to yield benefits.

\subsection{Model Summary}
 
In our model, an agent (forecaster) chooses privately---over time---whether to exert costly effort to learn about an uncertain future binary event. We refer to the outcome of this event as the \emph{state}, and assume the initial prior over it is commonly known. 
We assume that the private information of the agent, when exerting effort, arrives through a discretized Poisson bandit process. This Poisson learning technology allows us to tractably capture key qualitative features of dynamic information acquisition, enabling sharp comparisons across contract designs despite the complexities stemming from the combination of dynamic learning with a rich contracting space.\footnote{In particular, the agent's continuation problem typically fails to be stationary for general contracts. Much existing work on costly information acquisition relies on stationarity to obtain tractability, requiring us to develop new techniques to analyze contract design in our setting.}

When exerting effort, the agent may (privately) observe two kinds of signals, each (in our main model) with fixed probability: (a) a ``null signal'' or (b) a  ``Poisson signal.'' We interpret a Poisson signal as some particular sought-after information---e.g., a red flag in the due diligence application or an indication of an adversary's capabilities in a national-security context.  In contrast to some related work, a Poisson signal in our model need \emph{not} reveal the outcome or suggest only one outcome.  A red flag might narrow the set of failure scenarios without confirming whether any will materialize; likewise, an intelligence analyst might learn that an adversary has the capability to act on a given date without learning their intentions. This structure is relevant to many expert-forecasting problems.

Past work on forecaster incentives has emphasized that a primary motivator is reputation \citep{marinovic2013forecasters}. If the forecaster is employed by a decision maker, the latter's recommendation has a fixed value insofar as it shapes external perceptions. The aforementioned literature on scoring rule design used this observation to note that the truthful revelation of information could be incentivized by conditioning endorsements on predictions together with realized outcomes. But such schemes can not only incentivize forecasters to truthfully reveal their private information, but also to exert unobservable effort.\footnote{See \cite{zellner2021survey} for a recent survey on forecasting, including a discussion of the relevance of scoring rules for forecaster incentives. \cite{Gneiting2007} provides theoretical background on scoring rules, while \cite{Frongillo2023} provides a more recent survey.} In our model, the designer's contract specifies how to provide such an endorsement, modeled as a reward of fixed value (i.e., a score).
The objective of the designer is to incentivize the agent to acquire the most precise information by exerting maximum effort through the choice of how such endorsements are provided.

To provide sharp guidance on whether dynamic reporting provides the strongest incentives, we keep the contracting environment otherwise identical between single-elicitation and dynamic cases, except for the ability to receive reports from the agent at any time. Given this, the designer's problem in either contracting environment is to decide how to provide the endorsement or recommendation as a function of (a) the forecaster's static or dynamic prediction and (b) the outcome that arises.

Note that a crucial feature of our model is that both effort and learning are private. This reflects our interest in cases where the sought-after signal itself requires expertise to recognize and interpret. 
In particular, the assumption that learning is private separates us from the literature on contracting for experimentation, 
which typically allows contracts to increase rewards in later periods to incentivize continued effort.
This is not the case in our problem: with private learning, the agent can conceal or misreport results, undermining any direct performance-based incentives.

The literature on contract design for forecasters has a much shorter tradition when applied to \emph{dynamic} settings. We share this focus with \cite{DPS2018} who study the design of dynamic contracts to \emph{screen forecaster ability} in the absence of moral hazard. While our models make similar assumptions about the available contracts, the key difference is our focus on dynamic moral hazard rather than initial adverse selection.




\subsection{Results and Intuition}

\noindent Our interest is in whether maximum effort can be implemented by a contract in which the principal promises to give the forecaster the desired endorsement with a probability depending on (i) a single belief report and (ii) the realized outcome. Such a contract is static if the reward options offered to the agent do not depend on time in any way. The challenge is that the incentives necessary to induce effort depend on the forecaster's current belief, and these belief-contingent incentive requirements generally conflict as beliefs evolve over time. Our first result, \cref{thm:optimal_dynamic_menu}, characterizes effort-maximizing dynamic contracts themselves. Such contracts admit a decreasing no-information reward path: increasing rewards over time instead creates profitable ``shirk-and-lie'' deviations. This characterization sharply reduces the relevant contract space and provides the starting point for our main question, namely whether time variation in rewards is necessary for effort-maximization.

We briefly describe why belief evolution creates a fundamental tension for incentive provision---and why moving beyond static contracts could potentially help resolve this tension. Consider first a one-shot information acquisition problem, and notice that the scoring rule that provides the greatest gain from effort typically depends on the prior. Intuitively, if rewards place too much weight on the state that is currently more likely, an agent may prefer to stop and report a favorable signal rather than exert effort; the agent may similarly lack incentives to learn if rewards place insufficient weight on the initially less-likely state. However, the same logic implies that the reward structure required for continued effort \emph{flips} once the agent becomes highly confident in the \emph{other} state. Thus, rewards maximizing incentives at one belief can sharply reduce incentives at another, suggesting that different rewards should be available at different times (and under different beliefs). While this argument uses extreme beliefs to illustrate, the same issue arises throughout the belief space. We present a more precise illustration of how effort-maximizing rewards should be calibrated to the agent's belief---and that the incentive constraints associated with different beliefs place conflicting requirements on rewards---in \cref{sect:discussions}. 

Now, if beliefs in the dynamic setting were constant before a Poisson signal arrival---i.e., no information is conveyed by the failure to find a red (or green) flag---such adjustments would clearly be unnecessary: the optimization problem is identical at every point in time. We refer to this case as a \emph{Stationary environment} and confirm that indeed static contracts suffice then (\cref{thm:stationary}). The aforementioned tension emerges when beliefs evolve over time. As the agent becomes more confident in one state, should rewards in that state decrease to maintain incentives? The answer turns out to be no in the following cases:  

\begin{enumerate}
    \item If a red (or green) flag reveals the future outcome perfectly: That is, a \emph{Perfect-learning} environment (\cref{thm:perfect}) where signals fully reveal the state; 
    \item If only a red flag can arrive: That is, a \emph{Single-signal} environment (\cref{thm:single_source}) where a Poisson signal arrival always moves beliefs in one direction.
\end{enumerate}

\noindent Outside of these cases, dynamic reports may expand the set of strategies that an agent can be induced to follow. We identify a set of parameters jointly violating these conditions for which richer contracts are necessary (\cref{thm:dynamic optimal}).\footnote{In particular, as we describe in Section \ref{sec:dynamic_optimal}, these parameters essentially amount to a ``sufficient violation'' of the conditions of \cref{thm:perfect,thm:single_source}.}

Our key insight for perfect-learning and single-signal environments is that, despite belief drift, adjusting rewards over time is counterproductive. Lowering rewards in increasingly-likely states might seem to sharpen incentives at later beliefs. But doing so reduces the agent's continuation value from working, weakening earlier incentives. In these environments, any such time-varying contract can be replaced by a static scoring rule that provides weakly stronger incentives at all times.
However, when beliefs drift and signals can push posteriors in either direction without fully revealing the state, conflicts arise that cannot be resolved by static contracts alone. 

Our unifying message is that the benefits from dynamic contracts require informative, but imperfect, Poisson signals that can move beliefs in the same direction as the belief drift absent signal arrival. For \emph{these} signals, adjusting rewards over time can strengthen incentives. In addition, the environment must feature \emph{sufficiently important dynamics}. \cref{apx:two_period} shows that static and dynamic implementations coincide with only two periods---an agent cannot be incentivized to work for two periods by a dynamic contract if this is impossible under a static contract. Our characterization in \cref{thm:stationary,thm:perfect,thm:single_source} is notable because the conditions on the learning environment \emph{do not} depend on the time horizon or effort cost. The necessity of dynamic contracts emerges once the horizon is sufficiently long.

We mention that effort-maximizing contracts can in general be computed numerically via linear programming, as we describe in \cref{sec:structure,sec:numerical}. This numerical procedure yields explicit solutions for effort-maximizing contracts, illustrating concretely how dynamics can influence contract design. The analytical results that we obtain serve a complementary goal: isolating conditions on primitives with a transparent economic interpretation.

Two further points clarify how dynamics influence the design of optimal contracts even in cases where a static implementation exists (\cref{subsec:scoring_under_dynamic}). First, it need not be the case that the effort-maximizing scoring rule provides the strongest incentives at the prior, since incentives must be balanced as the agent's belief changes, as alluded to above. Second, the optimal scoring rule may require offering intermediate options that provide strictly positive rewards even for incorrect predictions---contrasting with results for static cases \citet{li2022optimization,szalay2005}. This distinction reflects that beliefs evolve: an option that is ``intermediate'' early may be ``extreme'' later. While our analysis also speaks to other aspects of scoring rule design, we focus on the problem of static implementation as this is a sharp, qualitative contract characteristic whose practical relevance is transparent.   

\subsection{Related Literature} \label{sect:literature}

Our paper joins a long line of work in economic theory asking how to incentivize information acquisition or experimentation. A key theoretical novelty that arises in such settings is the introduction of \emph{endogenous adverse selection} since different effort choices (typically themselves subject to moral hazard) will provide the agent with different beliefs over the relevant state. This basic interaction, where an agent exerts effort under moral hazard to acquire information, has been analyzed under varying assumptions regarding the underlying information acquisition problem and contracting abilities.\footnote{For instance, static information acquisition technologies where information is acquired after contracting \citep{krahmer2011optimal} or where the outcome of experimentation may be contractable (\citet{yoder_2022}, as well as \cite{chade_kovrijnykh_2016} in a repeated setting).} 

As noted above, much of the literature on scoring rule design focuses exclusively on the \emph{elicitation} of information (see, for instance \cite{Mcc-56,sav-71,lam-11}, as well as \citet{chambers2021dynamic} for the dynamic setting). To the best of our knowledge, \cite{osband1989} is the earliest work sharing our focus on the question of incentivizing the \emph{acquisition} of information. Other work relevant to the application of forecasting is \cite{elliott2016forecasting}, which reviews the statistical properties of forecasting models. Aside from \cite{DPS2018}, other papers focused on the problem of screening forecasters include \citet{DPS2023,Dasgupta2023}. More recent work in economics and computer science related to scoring rules include \citet*{young_turks_yes_men_2022,Z-11,carroll2019robust,li2022optimization,neyman2021binary,li2023multidimensional,whitmeyer2023,chen2021scoringrule,bloedel-segal24}. Our main point of departure from this line of work stems from our focus on dynamics.\footnote{While \citet{neyman2021binary,li2023multidimensional} and \citet{chen2021scoringrule} allow dynamic information acquisition, all explicitly assume contracts must be static. \citet{bloedel-segal24} study dynamic contracts, but their framework models dynamics across different agents, whereas our model focuses on dynamics in which a single agent acquires information over time.}

The Poisson information acquisition technology has been a workhorse for the analysis of how to structure \emph{dynamic} contracts for experimentation.\footnote{\citet{McClellan2022,henry2019} consider related models where information acquisition instead uses a \emph{Brownian motion technology}, and an agent decides when to stop experimenting.} \citet{BH1998,BH2005} were early contributions studying a contracting problem under the assumption that a ``success'' reveals the state. The subsequent literature has considered variations on this basic environment.\footnote{For instance, \citet{Horner2013} relaxes commitment; \citet{HKL2016} allow for ex-ante adverse selection and transfers; \citet{Guo2016} considers delegation without transfers; \citet{Fu2024PrivateExploration} studies optimal approval rules in a private sequential experimentation problem with selectively disclosed hard evidence, comparing environments with conclusive good news and conclusive bad news.} The closest to our work is \cite{Gerardi2012}, who assume a Poisson arrival technology and, as in the scoring rule literature, allow for state-dependent contracts. Our information acquisition technology generalizes this technology to allow for Poisson signals that may support either state and be inconclusive. Our main message is that dynamic contracts can outperform static scoring rules only under \emph{both} modifications.

On this note, Poisson bandits have been extensively utilized in economic settings since the influential work of \citet{KRC2005}. An advantage of this framework is that it facilitates qualitative, economically-substantive properties of information acquisition and predicted behavior; a highly incomplete list of examples includes \citet*{strulovici2010,CM2019,damiano_li_suen2020,keller2015breakdowns,bardhi_guo_strulovici2024,lizzeri_shmaya_yariv}. Our exercise essentially amounts to \emph{designing a (possibly dynamic) single-agent decision problem}. Note that the agent's problem need not admit a simple stationary representation for arbitrary mechanisms in our framework.\footnote{\cite{ball_knoepfle_2024} study monitoring using a Poisson framework; while they allow bidirectional signals, their design problem maintains recursivity, unlike ours.}  This contrasts with most of the settings where payoffs are exogenous, in which case such stationarity may be crucial for tractability. Partially for this reason, our approach does not require determining the agent's exact best response following an arbitrary dynamic contract.

\section{Model}
Our model considers an agent (who, depending on the application, may be an individual expert or a team working as a single entity) as a forecaster. A mechanism designer shares a common prior with the agent over an uncertain event $\theta \in \Theta=\{0,1\}$ (e.g., whether an investment will achieve a certain target outcome, whether an adversary will attack on a certain date, etc.); we let $\prior$ denote the initial probability that $\theta=1$.

The agent is able to acquire information about $\theta$ at discrete times $\{1, 2, \dots, T\}$.\footnote{All of our results extend if we consider a variable time interval $\Delta$ and take the high-frequency limit as $\Delta \rightarrow 0$.} We refer to $\theta$ as the unknown \emph{state}. This state is contractible and is only realized after time $T$ (e.g., a successful prototype or an attempted attack from the adversary). 
For conceptual simplicity, we take $T < \infty$, and interpret $T$ as the deadline for the designer to make a payoff relevant decision regarding the unknown state (e.g., whether to invest in the company or attack the adversary). 
The detailed decision environment plays no role in our analysis; we assume only that the designer weakly prefers more information (a la Blackwell).

\subsection{Information Acquisition}  \label{sec:infoacq}

At any time $t\leq T$, the agent can acquire information by paying a cost $c > 0$. When this cost is paid, a piece of (falsifiable) evidence informative of $\theta$ may be discovered by the agent with some probability. We refer to this evidence as a \emph{Poisson signal}. We take the arrival probability of the Poisson signal to be $\lambda_{\theta}$. Without loss of generality, we assume that $\lambda_{1} \geq \lambda_{0}$, so the agent's belief drifts toward state $0$ as no Poisson signal arrives. 

If no Poisson signal arrives, the agent observes a \emph{null signal}, which we denote by $\no$. If the agent does not exert effort, then a null signal is observed with a probability of 1. 
When the Poisson signal arrives, various pieces of information may potentially be observed by the agent, leading to different beliefs. 
Specifically, we denote the set of \emph{non-null} signals as $S$, and the agent observes a signal $s\in S$ with probability $\lambda_{\state}^s$ by exerting effort, 
where $\sum_{s\in S}\lambda_{\state}^s = \lambda_{\state}$ for all states $\state\in\{0,1\}$. 
By Bayes' rule, at any time $t\in\{1,\dots,T\}$, assuming that the agent has exerted effort for all periods until $t$,
we let $\posterior^{\no}_t$ denote the posterior belief that $\theta=1$ if no Poisson signal arrived before time $t$ (including~$t$),
and let $\posterior^{s}_t$ denote the agent's posterior when receiving Poisson signals $s$ exactly at time $t$: 
\begin{equation*}
\posterior^{\no}_t = \frac{\posterior^{\no}_{t-1} (1-\lambda_{1})}{\posterior^{\no}_{t-1}(1-\lambda_{1})+(1-\posterior^{\no}_{t-1})(1-\lambda_{0})},\end{equation*}
\begin{equation*}
\posterior^{\signal}_{t} = \frac{\posterior^{\no}_{t-1} \lambda_{1}^{\signal}}{\posterior^{\no}_{t-1}\lambda_{1}^{\signal}+(1-\posterior^{\no}_{t-1})\lambda_{0}^{\signal}},
\end{equation*}
where $\posterior^{\no}_0 = \prior$ is the prior belief. 

Once the Poisson signal arrives, no further information can be acquired (e.g., if a red flag reveals the main point of concern so that additional scrutiny will not meaningfully alter the assessment of the investment's viability; if an adversary's capabilities are determined, no further information is relevant for assessing attack probability, etc.).\footnote{The assumption of a single-signal arrival represents an extreme case of information attrition in \cite{Strulovici2022}, which discusses various compelling practical instances where the number of available signals is naturally limited. In our case, this feature enhances tractability by allowing us to associate the amount of information produced with the length of time the agent works absent the arrival of a Poisson signal. It also avoids known complications in determining the agent's payoffs under Poisson learning when signals are not perfectly revealing.} 
Note that our information acquisition technology generalizes Poisson bandit learning (as in \cite{HKL2016}, for instance), because signals (a) need not reveal the state and (b) can increase the posterior probability of either state. 

In addition, our main results on static contracts implementing maximal effort focus on the following canonical cases of this model: 

\begin{enumerate}
\item \textbf{Stationary environments,} where
$\lambda_{1}= \lambda_{0}$. 

\item \textbf{Perfect-learning environments,} where either $\lambda^s_0 = 0$ or $\lambda^s_1=0$ for every $s \in S$. 

\item \textbf{Single-signal environments,} where $\abs{S}=1$. 
\end{enumerate}
Intuitively, in stationary environments, the mere passage of time conveys no information: silence (null signals) leaves beliefs unchanged, so only the content of an arriving non-null signal can move the posterior. In perfect-learning environments, every non-null signal is state-exclusive. Once any such signal arrives, the state is fully revealed. In single-signal environments, there is only one kind of evidence; thus, when the signal arrives, it shifts beliefs in a single direction, toward the state that more readily generates that signal. If beliefs drift toward $0$ when the Poisson signal does not arrive, they \emph{must} jump toward $1$ whenever it does in the single-signal environment. 

To reiterate, effort choices and signal arrivals are private information of the agent. The assumption that this signal is private reflects the notion that it requires expertise to recognize and interpret. Moreover, the signals are not verifiable and, hence, can be arbitrarily fabricated or misrepresented by the agent at no cost. The assumption that effort is private reflects the notion that the principal cannot easily monitor the agent's activity.

\subsection{Contracting} \label{sec:model:contracting}

We will focus on a menu representation of dynamic contracts for incentivizing the agent to exert costly effort.\footnote{A more general format based on arbitrary communication and its equivalence to our menu representation is provided in \cref{subapx:contract_definition}. } 
Specifically, the designer offers the agent a menu of rewards
\begin{align*}
\rewardMenu:=\{\rewardMenu_{t} \subseteq [0,1]^2\}_{t=1}^T.
\end{align*} 
At any time $t\leq T$, if the agent picks a reward profile $\reward_t=(\reward_{t,0},\reward_{t,1})\in \rewardMenu_t$, the agent receives a reward of $\reward_{t,\state}$ after time $T$ when the state $\state$ is publicly revealed.
As mentioned in the introduction, we can also interpret the reward as the probability of receiving an endorsement that positively influences the forecaster's reputation. Such reward bounds are also widely assumed in the literature on evaluating forecasters \citep[e.g.,][]{DPS2018}, as well as in the literature on information elicitation more generally.\footnote{While we interpret the reward as non-monetary, our model also applies to applications where the designer faces explicit budget constraints set by a third party. \citet{anthony2007management} discusses firms imposing budget constraints on different divisions; National meteorological agencies, such as the U.S.~National Weather Service (NWS) receive public funding for operations; the Congressional Budget Office (CBO) operates under a fixed annual budget and provides forecasts for Congress. In such cases, designers are restricted to using this budget to incentivize information acquisition. } 
Note that a crucial feature of this menu representation is that the agent only makes one irrevocable choice for the reward vectors. 
This is without loss of generality since we have assumed that the Poisson signals can arrive at most once over the entire time horizon. 

We assume that the agent does not discount the future.\footnote{We discuss the extensions with discounted utilities in \cref{subsec:non-invariant}.} That is, by exerting effort for $Z$ periods and receiving a reward of $\reward\in[0,1]$ in the end, the utility of the agent is 
\begin{align*}
\reward - cZ. 
\end{align*}

Given a menu $\rewardMenu$, let $Z_{\rewardMenu}$ denote the number of periods the agent would exert effort in the absence of a Poisson signal.\footnote{We do not obtain a closed form for $Z_{\rewardMenu}$, as the best response strategy of the agent can potentially be complex given an arbitrary menu $\rewardMenu$. We will provide a more transparent description in \cref{sub:stopping} when we simplify the best response strategies of the agent.} 
In our Poisson learning environment, the induced experiment (the distribution over posteriors) is uniquely determined by $Z_{\rewardMenu}$. Moreover, a larger $Z_{\rewardMenu}$ yields Blackwell more informative signals. 
Therefore, the designer's problem is
\[
  \rewardMenu^* \in \arg\max_{\rewardMenu} Z_{\rewardMenu}.
\]

\noindent This objective reflects the designer's goal of providing the strongest possible incentives for learning.

Contracts must address both hidden effort and hidden information: the principal does not observe whether effort was exerted, when a Poisson signal arrived, or which signal was observed. These frictions play distinct but interacting roles. Hidden effort creates the need to provide incentives for information acquisition. Private and nonverifiable learning constrains how those incentives can be provided, because rewards designed to encourage effort must also preserve the agent's incentives to select the intended report. Private signal arrival further allows the agent to conceal when information was obtained, making the timing of available rewards strategically relevant. The comparison between static and dynamic elicitation therefore asks whether time-contingent menus can strengthen effort incentives while maintaining reporting incentives as beliefs evolve.  Section \ref{sub:bothincentives} outlines the resulting incentive compatibility constraints that arise given the class of contracts we consider. 

\subsection{Static Implementation}
Since the agent's signal is private, the designer cannot compel early selection: at any date $t$, the agent can guarantee himself access to any reward that will appear in the future simply by waiting. This observation immediately implies the following Lemma:

\begin{lemma}[Shrinking Menus are Without Loss] \label{lem:shrinking_menus}
For any menu profile $\rewardMenu=(\rewardMenu_t)_{t=1}^T$, define
\[
\widetilde \rewardMenu_t=\bigcup_{t'\geq t}\rewardMenu_{t'}.
\]
Then $\widetilde{\rewardMenu}$ is weakly shrinking and induces the
same feasible reward choices, and hence the same agent behavior, as
$\rewardMenu$.
\end{lemma}

The weakly shrinking property follows immediately from the definition of $\widetilde \rewardMenu_{t}$, i.e.:

 \begin{align}\label{eq:nested_menu}
\widetilde{\rewardMenu}_t \supseteq \widetilde{\rewardMenu}_{t'} \qquad \text{for all } 1\le t\le t'\le T.
\end{align} 

\noindent In light of this nested (weakly shrinking) structure of dynamic contracts, we now ask whether time variation is necessary at all, or if a single, time-invariant menu maximizes effort. The next definition formalizes this possibility, which we call a \emph{static implementation}.

\begin{definition}[Static Implementation]
\label{def:static_implementation}\,\\
An optimal menu $\rewardMenu$ has a static implementation if there exists $\widehat{\rewardMenu}$ such that $Z_{\widehat{\rewardMenu}} = Z_{\rewardMenu}$
and 
\begin{align*}
\widehat{\rewardMenu}_{t} = \widehat{\rewardMenu}_{t'}, \qquad
\text{ for any } 1\leq t\leq t'\leq T.
\end{align*}
\end{definition}

\noindent Under a static implementation, the designer does not need to monitor the exact time at which the agent observes the Poisson signal to incentivize maximum effort. Given a menu with a static implementation, the agent can defer the choice to time $T$ and select the reward based on the aggregated information acquired over the entire process. Under a static implementation, advice is summarized at the end of the interaction.

Using terminology from the information elicitation literature, a static implementation essentially offers a \emph{scoring rule} at time $T$: Specifically, a scoring rule 
\begin{align*}
\score:\Delta(\states)\times\states\to[0,1]
\end{align*} 
maps the agent's report---which is required to take the form of a posterior belief---and the realized state to a reward. Scoring rules are far simpler than arbitrary dynamic menus: they require only a summary (in the form of a belief report) rather than real-time monitoring and time-varying rewards. In contrast, richer dynamic contracts may require closer integration within the decision-maker's organization. Our goal is to characterize whether unrestricted dynamic contracting is necessary to incentivize maximum information acquisition:

\medskip

\noindent \textbf{Main Question:} \emph{Does the effort-maximizing menu have a static implementation?}

\section{Structures of Dynamic Incentives}
\label{sec:structure}

Before answering our main question, we provide some useful results that help us formulate the approach we take. \cref{sect:discussions} provides some additional discussion and intuition. 

\subsection{Simplifying Effort Strategies to Stopping Strategies}
\label{sub:stopping}
The agent's information acquisition strategy can be arbitrary and complex in dynamic environments. 
But in our model,  
it is without loss of generality for the agent to front-load all effort and adopt a \emph{stopping strategy}---i.e., a strategy in which the agent simply decides \emph{when} to stop working in the absence of a Poisson signal arrival (exerting effort until then).  

More precisely, let $\sigma$ denote the (random) time of the first Poisson signal (with $\sigma=\infty$ if no such signal arrives).
A stopping strategy with stopping time $\stoptime\leq T$ prescribes effort at 
any time period $t\leq\min\{\stoptime,\sigma\}$ and no effort thereafter: effort ceases after the first occurrence of a Poisson signal or $\{t=\stoptime\}$.
Slightly abusing notation, we also use $\stoptime$ to represent the stopping strategy with a stopping time $\stoptime$.
Let $\stoptime_{\rewardMenu}$ be the stopping strategy implemented under the menu $\rewardMenu$.\footnote{We break ties by maximizing the stopping time if there are multiple stopping strategies.}

\begin{lemma}[Stopping Strategies are Without Loss]
\label{lem:agent_effort_response}\,\\
Given any menu of rewards $\rewardMenu$ and any best response of the agent with maximum effort length $Z_{\rewardMenu}$ conditional on not receiving any Poisson signal,\footnote{If the agent randomizes, we let $Z_{\rewardMenu}$ denote the maximum effort length among all realizations. } 
there exists a stopping strategy~$\stoptime_{\rewardMenu}$ with $\stoptime_{\rewardMenu}\geq Z_{\rewardMenu}$
that is also optimal for the agent. 
\end{lemma}

The proof of \cref{lem:agent_effort_response} is simple. Since the agent only learns about the state when exerting effort, and since effort can always be concealed until a later date, any prescribed effort strategy can be modified to one where effort is ``front-loaded,'' with all effort choices moved earlier without increasing its cost or reducing the feasible reporting options. A best response can thus be chosen to exert effort consecutively until either a Poisson signal arrives or the relevant stopping time is reached.

Given \cref{lem:agent_effort_response}, the designer's problem reduces to
\begin{align*}
\rewardMenu^* \in \arg\max_{\rewardMenu} \stoptime_{\rewardMenu}.
\end{align*}

\subsection{The Incentive Structure of Dynamic Contracts}
\label{sub:bothincentives}

\subsubsection{Information Revelation Incentives}
\label{subsec:optimal_menu}
Since it is without loss of generality to consider menus with a shrinking choice set (see \Cref{eq:nested_menu}), 
it is in the agent's best interest to select a reward from the available menu immediately once he hits his stopping time in a stopping strategy, 
i.e., when he receives a Poisson signal or decides not to exert effort anymore. 

Given any menu, the designer can infer the posterior belief of the agent based on his choice of rewards. 
Moreover, only rewards that maximize the on-path beliefs can be chosen. 
Specifically, at any time $t$, for any Poisson signal $s\in S$, 
let $r^s_t\in\rewardMenu_t$ be the reward vector chosen by the agent at time $t$ when his belief is $\posterior^s_t$. 
Moreover, let $\reward^N_{\stoptime_{\rewardMenu}}$ 
be the reward vector chosen by the agent at stopping time $\stoptime_{\rewardMenu}$ when his belief is $\posterior^N_{\stoptime_{\rewardMenu}}$.
Any rewards beyond the collection of 
$\{\reward^{\signal}_t\}_{t\leq \stoptime_{\rewardMenu},\signal\in\signals}\cup\{\reward^{\no}_{\stoptime_{\rewardMenu}}\}$
will never be chosen by the agent. 
Let $\util(\posterior, \reward) \triangleq \expect[\state\sim\posterior]{\reward(\state)}$ denote the agent's expected utility with belief $\mu$ under the reward function $\reward$. 

\begin{lemma}[Simplified Menu Representation]\label{lem:menu}
\,\\
Given any menu $\rewardMenu$ implementing stopping time $\stoptime_{\rewardMenu}$, 
let $\{\reward^{\signal}_t\}_{t\leq \stoptime_{\rewardMenu},\signal\in\signals}\cup\{\reward^{\no}_{\stoptime_{\rewardMenu}}\}$ 
be the collection of rewards chosen by the agent at some on-path history. 
At any $t \leq \stoptime_{\rewardMenu}$, it is equivalent to offer a menu 
$\rewardMenu_t\triangleq\{\reward^{\signal}_{t'}\}_{t\leq t'\leq \stoptime_{\rewardMenu},\signal\in\signals}\cup\{\reward^{\no}_{\stoptime_{\rewardMenu}}\}$ to the agent such that  
\begin{align}
\reward^{\signal}_{t} \in \argmax_{\reward \in \rewardMenu_t} \util(\posterior_t^{\signal}, \reward), 
\qquad \forall \signal\in \signals. \tag{IC-Reporting}\label{constraint:IC}
\end{align}
\end{lemma}
\noindent \noindent In essence, \cref{lem:menu} is the relevant version of the taxation principle for our environment.

For any time $t\leq \stoptime_{\rewardMenu}$ and any Poisson signal $\signal\in\signals$, let
$\util_t^\signal(\rewardMenu)\triangleq \util(\posterior_t^\signal,\reward_t^\signal)$
be the utility of the agent when his belief is $\posterior_t^\signal$.
Similarly, for any $t\leq \stoptime_{\rewardMenu}$, let
\[
\reward_t^\no \in \argmax_{\reward\in \rewardMenu_t}\util(\posterior_t^\no,\reward),
\qquad
\util_t^\no(\rewardMenu)\triangleq \util(\posterior_t^\no,\reward_t^\no)
\]
with the convention that $\posterior_0^\no$ is the prior $\prior$ and $\rewardMenu_0 = \rewardMenu_1$.
We omit the dependence on~$\rewardMenu$ whenever there is no risk of confusion.

\subsubsection{Effort Incentives}

We introduce a notion that allows us to describe the incentives for exerting effort at a given time. Specifically, we decompose the dynamic problem into a sequence of static problems. Each static problem, in turn, considers the relevant continuation incentives of the agent under the dynamic contract.  

Each static game in our decomposition is indexed by a pair of times, $t\leq t' \leq T$, as well as a menu of rewards $\rewardMenu$. We define the \emph{continuation game between $t$ and $t'$}, denoted $\cgame^{\rewardMenu}_{t,t'}$, as the \emph{static} decision problem with prior belief~$\posterior^{\no}_{t-1}$ where the agent makes a one-time binary effort choice at time $t$: (a) Exert no effort or (b) Exert effort up to and including time~$t'$ or until a Poisson signal arrives. We let $C(t,t',c)$ denote the cost of option (b) in the continuation game between $t$ and $t'$ when the cost of effort in each period is $c$. Letting $f^s_t(t')$ be the probability of receiving Poisson signal $s$ at time $t'$ conditional on not receiving Poisson signals before time $t$, and letting $F^s_t(t')$ be the corresponding cumulative probability: 
\begin{align*}
C(t,t',c)=\rbr{1-\sum_{\signal\in\signals}F^{\signal}_t(t')}
\cdot c \cdot(t' - t+1) 
+ \sum_{\hat{t}=t}^{t'} \sum_{\signal\in\signals} f^{\signal}_t(\hat{t}) \cdot c \cdot(\hat{t} - t+1).
\end{align*}
\noindent Indeed, this formula is simply the expected total cost of exerting effort from time $t$ until the earlier of $t'$ or the first Poisson signal arrival. 

An important special case is when the time $t'$ is itself the stopping time for the menu $\rewardMenu$, with $\stoptime_{\rewardMenu} \leq T$. In this case,
we say $\cgame^{\rewardMenu}_{t,\stoptime_{\rewardMenu}}$ is the continuation game at time~$t$ for menu $\rewardMenu$. 
In what follows, we omit the superscript of $\rewardMenu$ when it is clear from context. Note that given any menu of rewards $\rewardMenu$ with stopping time $\stoptime_{\rewardMenu}$, at any time $t\leq T$, it is straightforward to show that the following are equivalent:
\begin{enumerate}
\item $t \leq \stoptime_{\rewardMenu}$, i.e., the agent has an incentive to exert effort at time $t$ in this dynamic environment;
\item there exists $t' \geq t$ such that the agent has an incentive to exert effort in the continuation game $\cgame_{t,t'}$ (corresponding to some feasible continuation under $\rewardMenu$).
\end{enumerate}
When $t \leq \stoptime_{\rewardMenu}$, a feasible choice of $t' \geq t$ that ensures sufficient effort incentives in the continuation game is $t' = \stoptime_{\rewardMenu}$, as this is consistent with the agent playing a best response.

The bottom line is that to design a menu of rewards that implements any $\stoptime\leq T$, 
it suffices to ensure that the agent has an incentive to exert effort in continuation games $\cgame_{t,\stoptime}$ for all $t \leq \stoptime$. Now, fixing $c$ and the desired stopping time $\stoptime$, there typically will exist multiple effort-maximizing contracts, as incentives could be slack given the cost $c$. 
An alternative is to consider an auxiliary problem of \emph{maximizing effort incentives}. 
We describe formally what maximizing effort incentives means: given any continuation game $\cgame_{t,t'}$, 
let $\Delta(\cgame_{t,t'})$ be the difference in expected reward between exerting effort and not exerting effort. 
That is,  
\begin{align*}
\Delta(\cgame_{t,t'})=\rbr{1-\sum_{\signal\in\signals}F^{\signal}_t(t')}
\cdot u(\posterior^{\no}_{t'}, \reward^{\no}_{t'})
+ \sum_{\hat{t}=t}^{t'} \sum_{\signal\in\signals} f^{\signal}_t(\hat{t}) \cdot u(\posterior^{\signal}_{\hat{t}}, \reward^{\signal}_{\hat{t}})
- u(\posterior^{\no}_{t-1}, \reward^{\no}_{t-1}).
\end{align*}
The last term $u(\posterior^{\no}_{t-1}, \reward^{\no}_{t-1})$ is the agent's utility for not exerting effort in the continuation game $\cgame_{t,t'}$.
The index in this term is $t-1$ since the prior belief in this continuation game is $\posterior^{\no}_{t-1}$, 
and if the agent decides not to exert effort from $t$ onward, 
the agent could actually make the report at time $t-1$ to get his favorite reward option $\reward^{\no}_{t-1}$. 
Thus, the agent has an incentive to exert effort in the continuation game $\cgame$ if and only if $\Delta(\cgame_{t,t'}) \geq C(t,t',c)$.

By the stopping-strategy reduction, the effort component of implementing $\stoptime_{\rewardMenu}$ is summarized by the following constraints:\begin{align}
\Delta(\cgame_{t, \stoptime_{\rewardMenu}}) \geq C(t,\stoptime_{\rewardMenu},c) \qquad \forall t \leq \stoptime_{\rewardMenu}. \tag{IC-Effort}\label{constraint:effort_IC}
\end{align}

\noindent This reduces the relevant effort constraints to \eqref{constraint:effort_IC}. Moreover, effort incentives are stronger when the difference in expected rewards between exerting and not exerting effort is larger.

\subsection{The Structure of Effort-Maximizing Dynamic Contracts}

We can now present a characterization of effort-maximizing contracts that arise in light of the simplifications above. To summarize, constraints \eqref{constraint:IC} and \eqref{constraint:effort_IC} capture the two dimensions of the agent's incentives: The former requires the agent to select the intended reward given the information acquired, while the latter requires him to prefer the prescribed continuation of experimentation to stopping. Notice that these constraints \emph{also} accommodate joint deviations of effort and reporting. If the agent stops exerting effort after the no-information history at time $t-1$, he may select whichever reward in $\rewardMenu_{t-1}$ is most attractive at belief $\posterior_{t-1}^{\no}$. By definition, his payoff from the best such report is $\util(\posterior_{t-1}^{\no}, \reward_{t-1}^{\no})$, which is the outside option subtracted in $\Delta(\cgame_{t, \stoptime_{\rewardMenu}})$. Thus, IC-Effort compares the prescribed continuation with stopping followed by an optimal report, rather than only with truthful stopping. Together with the stopping-strategy reduction, imposing these constraints at every no-information history accounts for deviations involving fewer periods of effort, while IC-Reporting governs reporting deviations following a Poisson signal.

For our characterization, it is useful to classify signals according to whether they move beliefs to the left or right of the no-information posterior.
We say that a signal $\signal$ is \emph{left-biased} if $\posterior_t^\signal\leq \posterior_t^\no$ for all $t$, and \emph{right-biased} otherwise.
Given any time $t\leq \stoptime_{\rewardMenu}$, define
\begin{equation}\label{eq:minorant}
G_t(\posterior)
\triangleq
\sup_{\reward:\states\to[0,1]}
\left\{
\util(\posterior,\reward):
\util(\posterior_{t'}^\no,\reward)\leq \util_{t'}^\no,\ \forall t'\in\{0,\dots,t\}
\right\}.
\end{equation}
Since each reward induces an affine function of $\posterior$, $G_t$ is the greatest \emph{convex minorant} of the no-information points $\{(\posterior_{t'}^\no,\util_{t'}^\no)\}_{t'=0}^t$ generated by feasible reward vectors bounded in $[0,1]^2$.

\begin{theorem}[Effort-Maximizing Dynamic Contracts]\label{thm:optimal_dynamic_menu}\,\\
For any prior $\prior$ and any signal arrival rates $\lambda$,
there exists an effort-maximizing menu $\rewardMenu$ with stopping time $\stoptime_{\rewardMenu}$ and an associated reward sequence
$\{\reward_t^\no,\reward_t^\signal\}_{t\leq \stoptime_{\rewardMenu},\signal\in\signals}$
such that:\footnote{Even though our menu representation (\cref{lem:menu}) does not require the specification of $\reward_t^\no$ for $t< \stoptime_{\rewardMenu}$, we include them here to highlight the structure of the optimal contract.}
\begin{enumerate}
\item \textbf{Decreasing no-information rewards:} for all $1\leq t\leq t'\leq \stoptime_{\rewardMenu}$ and $\state\in\states$,
\[
\reward_{t',\state}^\no \leq \reward_{t,\state}^\no.
\]
Moreover, $\reward_{t,0}^\no=1$ whenever $\reward_{t,1}^\no>0$. Equivalently, the sequence $\reward_t^\no$ first decreases in the reward for state $1$, and once that reward reaches $0$, it decreases in the reward for state $0$.

\item \textbf{Left-biased signals:} if $\signal$ is left-biased, then one can choose
\[
\reward_t^\signal=\reward_t^\no
\qquad \text{for all } t\leq \stoptime_{\rewardMenu}.
\]

\item \textbf{Right-biased signals:} if $\signal$ is right-biased, then letting $G_t$ be defined by \eqref{eq:minorant},
\[
\util(\posterior_t^\signal,\reward_t^\signal)=G_t(\posterior_t^\signal),
\]
and the affine function $\posterior\mapsto \util(\posterior,\reward_t^\signal)$ is a supporting line to $G_t$ at $\posterior_t^\signal$.
\end{enumerate}
\end{theorem}

\noindent
Compared to the menu representation in \cref{lem:menu}, \cref{thm:optimal_dynamic_menu} reduces the relevant structure to the path of no-information rewards $\{\reward_t^\no\}_{t\leq \stoptime_{\rewardMenu}}$.
Part 1 gives a one-switch structure: the reward first decreases in state $1$, and once that reward reaches $0$, it decreases in state $0$.
Parts 2 and 3 then pin down all signal-contingent rewards from that path.
Because a reward vector has only two coordinates, any right-biased reward is pinned down by at most two binding no-information constraints, together, possibly, with one boundary constraint from the box $[0,1]^2$.

A useful way to understand Parts 2 and 3 is to fix the no-information path
$\{\reward_t^\no\}_{t\leq \stoptime_{\rewardMenu}}$ from Part 1.
Because menus are shrinking, any reward selected at date $t$ must also be feasible for every earlier no-information type.
Thus, conditional on this path, the reward assigned after signal $\signal$ at date $t$ solves the one-shot problem
\[
\max_{\reward:\states\to [0,1]} \util(\posterior_t^\signal,\reward)
\qquad
\text{s.t.}\qquad
\util(\posterior_{t'}^\no,\reward)\leq \util_{t'}^\no
\quad \forall t'\in\{0,\dots,t\}.
\]
This auxiliary problem chooses the reward that is best for the signal type while not raising the stopping utility of any earlier no-information type.
When $\signal$ is left-biased, the canonical no-information reward $\reward_t^\no$ solves this problem.
When $\signal$ is right-biased, the value of the problem is exactly $G_t(\posterior_t^\signal)$, so the optimal reward is a supporting line to the convex minorant.

When we later specialize to the binary-signal case $\signals=\{\lsignal,\rsignal\}$, with $\lsignal$ left-biased and $\rsignal$ right-biased, \cref{thm:optimal_dynamic_menu} implies that
\[
\reward_t^\lsignal=\reward_t^\no
\qquad \text{for all } t\leq \stoptime_{\rewardMenu},
\]
while $\reward_t^\rsignal$ is given by the supporting line to $G_t$ at $\posterior_t^\rsignal$.

The intuition for the decreasing reward structure in Part 1 is the same as in the introduction:
if the agent expects weakly higher rewards later, he may shirk immediately and imitate a later report or defer reporting an early signal in order to enjoy the later reward.
Decreasing the no-information reward can instead strengthen incentives.
After a null signal, the posterior drifts toward state $0$, making rewards tilted toward state $0$ more attractive and reducing the value of additional information.
Lowering those rewards over time restores dispersion in continuation payoffs and thereby encourages continued effort.

\subsection{Discussions and Intuitions} \label{sect:discussions}

\paragraph{An auxiliary perspective on effort incentives} The design of the optimal menu can be reduced to the problem of maximizing the effort incentive for all continuation games before the stopping time. 
More concretely, for any stopping time $\stoptime \leq T$, we can solve the following linear program:
\begin{align}
c_{\stoptime} := \max_{\rewardMenu,\tilde{c}} \quad & \quad \tilde{c} \tag{Effort Maximization}\label{program:effort_max}\\
\text{s.t.}\quad & \Delta(\cgame^{\rewardMenu}_{t, \stoptime}) \geq C(t,\stoptime,\tilde{c}), 
\quad\forall t\leq \stoptime. \nonumber\\
&\rewardMenu \text{ satisfies \eqref{constraint:IC}.}\nonumber
\end{align}
Because $C(t, \stoptime, c)$ is increasing in the per-period cost $c$, the set of costs at which a fixed stopping time $\stoptime$ is implementable is downward closed (i.e., the same stopping time is implementable even when costs are lower). Consequently, $c_{\stoptime}$ is the \emph{cutoff cost} for implementing $\stoptime$. For any primitive effort cost $c$, the designer's original problem is therefore equivalently:

\[ 
\stoptime^{*}(c) = \max \{\stoptime \leq T : c_{\stoptime} \geq c\}. 
\]

To find the optimal menu, we can enumerate all $\stoptime\leq T$ and identify the largest $\stoptime^*$ such that $c_{\stoptime^*} \geq c$. 
The reward menu implementing $\stoptime^*$ is an optimal menu.

\paragraph{Optimal menu for continuation games}
We first consider the design of the optimal menu that maximizes the reward difference in the continuation game given a pair of time $t\leq \stoptime$. 
In such static environments, this is essentially the optimization of the scoring rules, which is fully characterized in \citet{li2022optimization}. 

\begin{figure}[t]
\centering
\begin{tikzpicture}[xscale=1,yscale=0.8]
\hspace{-10pt}
\draw [<->] (0,5.3) -- (0,0) -- (5.2,0);
\draw [dotted] (0, 3) -- (5, 3) -- (5, 0);

\draw [thick] (0, 5) -- (3.125, 1.875) -- (5, 3);
\draw [dashed] (0, 0) -- (3.125, 1.875) -- (5, 0);

\draw (-0.3, 5) node {$r_0$};

\draw (0, -0.4) node {$0$};
\draw (5, -0.4) node {$1$};

\draw (5.5, 0) node {$\mu$};
\draw (0, 5.6) node {$\expectScore(\posterior)$};


\draw (-0.3, 3) node {$r_1$};



\end{tikzpicture}
\caption{\footnotesize Expected score $\expectScore(\posterior)$ of a V-shaped scoring rule $\score$ with parameters $r_0,r_1$. }
\label{fig:v-shaped}
\end{figure}

\begin{definition}[V-shaped Scoring Rules]
\,\\
A scoring rule $\score$ is V-shaped with parameters $r_0,r_1$ if 
\begin{align*}
\score(\posterior,\state) = 
\begin{cases}
r_1 & \posterior \geq \frac{r_0}{r_1+r_0} \text{ and } \state = 1\\
r_0 & \posterior < \frac{r_0}{r_1+r_0} \text{ and } \state = 0\\
0 & \text{otherwise}.
\end{cases}
\end{align*}
We say $\score$ is a V-shaped scoring rule with a kink at $\hat{\mu}\in[0,1]$ 
if the parameters $r_0,r_1$ satisfy
$r_0 = 1, r_1 = \frac{1-\hat{\mu}}{\hat{\mu}}$ if $\hat{\mu}\geq \frac{1}{2}$
and $r_0 = \frac{\hat{\mu}}{1-\hat{\mu}}, r_1 = 1$ if $\hat{\mu} < \frac{1}{2}$.
\end{definition}
The terminology of the scoring rule as ``V-shaped'' comes from the property that the expected score $\expectScore(\posterior) \triangleq \expect[\state\sim\posterior]{\score(\posterior,\state)}$ is a V-shaped function,
which is illustrated in \Cref{fig:v-shaped}.
Furthermore, given any V-shaped scoring rule $\score$ with a kink at $\prior$, 
the agent with prior belief $\prior$ is indifferent between reward options $(r_0,0)$ and $(0,r_1)$. 

\begin{proposition}[\citealp{li2022optimization}]\label{prop:static_moral_hazard}
For any $t\leq \stoptime$, the reward menu offering the V-shaped scoring rule with kink at $\mu_{t-1}^{\no}$ in all periods maximizes the reward difference in continuation game $\cgame_{t,\stoptime}$.
\end{proposition}

\noindent Intuitively, V-shaped scoring rules maximize the expected reward at all posteriors, subject to (a) the constraint that the indirect utility is convex (as a consequence of incentive compatibility) and (b) the expected reward at the prior being constant (so that incentives for the agent to exert effort are maximized). For any fixed information structure, adding curvature to the indirect utility in \Cref{fig:v-shaped} would only decrease the agent's expected utility under that information structure, diminishing effort incentives. And moving the kink increases the payoff from not exerting effort by more than the payoff from exerting effort. 

\paragraph{Dynamic effort incentives}
\cref{prop:static_moral_hazard} illustrates the tensions involved in designing the optimal menu in dynamic environments. As the agent's posterior evolves over time, the priors of the continuation games at different time periods vary, leading to inconsistencies in the scoring rules that maximize incentives for effort across time.
In particular, to implement maximum effort in dynamic environments, the moral hazard constraints often bind at \emph{both} time~$1$ and the stopping time $\tau$ when the signals are perfectly revealing. 
A V-shaped scoring rule with a kink at $\posterior^N_{\tau}$ would yield insufficient incentives for the agent to exert effort at time~$0$, preventing the agent from working in the first place.  
Conversely, a V-shaped scoring rule with a kink at $\posterior^N_{0}$ results in insufficient incentives for the agent to exert effort at time~$\tau$, causing the agent to stop prematurely. 
To balance the incentives for exerting effort across different time periods, the V-shaped scoring rule may need a kink located at some interior belief. Moreover, as we will discuss, the optimal scoring rule need not always take a V-shaped form to provide balanced incentives in dynamic environments.

\paragraph{Comparison to contracts for experimentation}
In standard experimentation models with publicly observed outputs \citep[e.g.,][]{HKL2016}, the principal can condition rewards on outputs, potentially increasing rewards over time to offset rising effort costs.  In our setting, this approach is infeasible: an agent with favorable early ``outputs'' (i.e., Poisson signals) could strategically withhold information to claim higher rewards later. Instead, to strengthen incentives for continued effort, the optimal dynamic contract decreases rewards over time. This reduction in rewards is beneficial because it minimizes the agent's utility from stopping early, thereby encouraging continued effort. On top of this, the arrival of ``output'' yields private information for the agent. This additional endogenous private information creates an extra complication in our model.

\section{Numerical Illustrations} \label{sec:numerical}

This section presents numerical computations of optimal contracts for an important special case of our model which we will revisit later, namely when the set of Poisson signals is binary: 
\begin{align*}
S=\{\lsignal,\rsignal\}.
\end{align*}
Furthermore, we assume that $\posterior_t^\lsignal<\posterior_t^\no<\posterior_t^\rsignal$---thus, the left-biased signal $\lsignal$ moves beliefs toward $\lstate$, and the right-biased signal $\rsignal$ moves beliefs toward $\rstate$.  Specializing to this case, \cref{thm:optimal_dynamic_menu} implies that
\[
\reward_t^\lsignal=\reward_t^\no
\qquad \text{for all } t\leq \stoptime_{\rewardMenu},
\]
while $\reward_t^\rsignal$ is given by the supporting line to $G_t$ at $\posterior_t^\rsignal$.

Note that since the optimization program \eqref{program:effort_max} is a linear program for any $\stoptime\leq T$, it suggests a general method for how to numerically find an  optimal menu by applying standard linear programming algorithms. The solution to the linear program not only tells us how high $c$ can be, but also identifies a particular contract that can implement efforts for $\tau$ periods given this $c$.

\begin{figure}[t]
    \centering
    \hspace{-85pt}
    \subfloat{\includegraphics[width=0.45\textwidth]{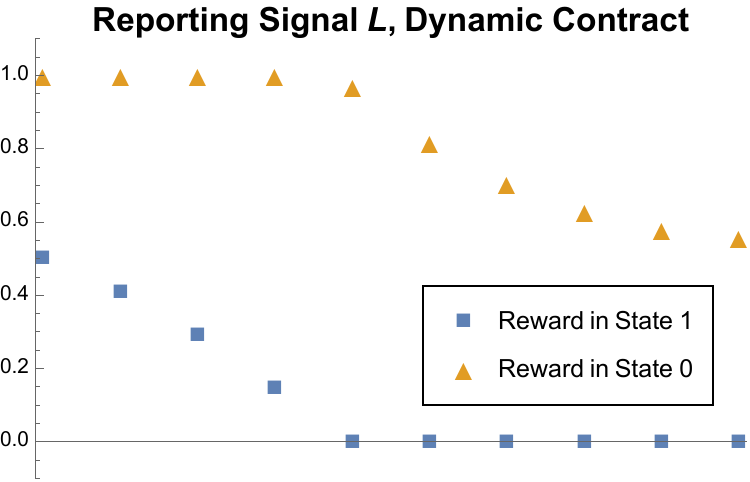}
   \hspace{-40mm}  \small{Time}
}
    \hspace{100pt}
    \subfloat{\includegraphics[width=0.45\textwidth]{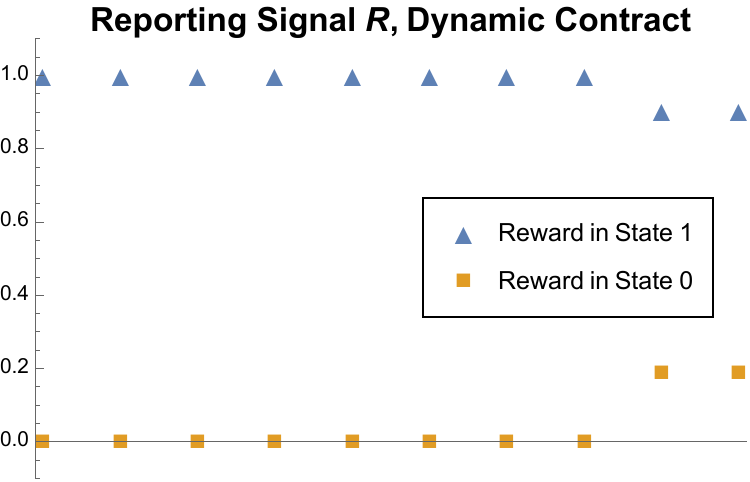}
  \hspace{-40mm}  \small{Time}
}

\caption{\footnotesize Solution to the linear program for the effort-maximizing contract with 10 periods, allowing for reporting at any time---i.e., unrestricted advice. The parameters are chosen as $\lambda_{1}^{\rsignal}=1/3, \lambda_{1}^{\lsignal}=\lambda_{0}^{\rsignal}=1/10, \lambda_{0}^{\lsignal}=1/5, \prior=2/3$.}
    \label{fig:DynamicSolutionNumerical}
\end{figure}

Figure \ref{fig:DynamicSolutionNumerical} presents a solution for certain representative parameter values. We compute the optimal menu for implementing a stopping time $\stoptime = 10$. 
The left side of \Cref{fig:DynamicSolutionNumerical} illustrates the decreasing reward structure in the optimal menu characterized in \cref{thm:optimal_dynamic_menu}: rewards following $\lsignal$ signals are first decreasing for state $1$, then decreasing for state $0$ after their rewards for state $1$ hit $0$. 
Moreover, the realized rewards following $\rsignal$ signals are not necessarily decreasing over time in all states;
in particular, rewards in state $0$ may increase over time, although due to incentive compatibility, such an increase in rewards in state $0$ implies a decrease in rewards in state $1$.

We now describe how to modify the linear program to restrict it to \emph{static} contracts. Notice that the optimal dynamic menu allows the principal to discriminate over time: An agent who observes signal $s$ at time $t$ cannot choose the reward function that would have been selected had that signal arrived at time $t' < t$. Such deviations \emph{are} possible under single-elicitation mechanisms.
Therefore, the incentive compatibility constraints in the static mechanisms would instead require that for any $t\leq \stoptime_{\rewardMenu}$,
\begin{align}
\reward^{\signal}_{t} \in \argmax_{\reward \in \rewardMenu_0} \util(\posterior_t^{\signal}, \reward), 
\qquad \forall \signal\in \signals. \tag{Static-IC}\label{constraint:static_IC}
\end{align}
By replacing the \eqref{constraint:IC} with \eqref{constraint:static_IC} in \eqref{program:effort_max}, we obtain the linear program for optimization over scoring rules in this dynamic environment. 

\begin{figure}[t]
    \centering
    \hspace{-85pt}
    \subfloat{\includegraphics[width=0.45\textwidth]{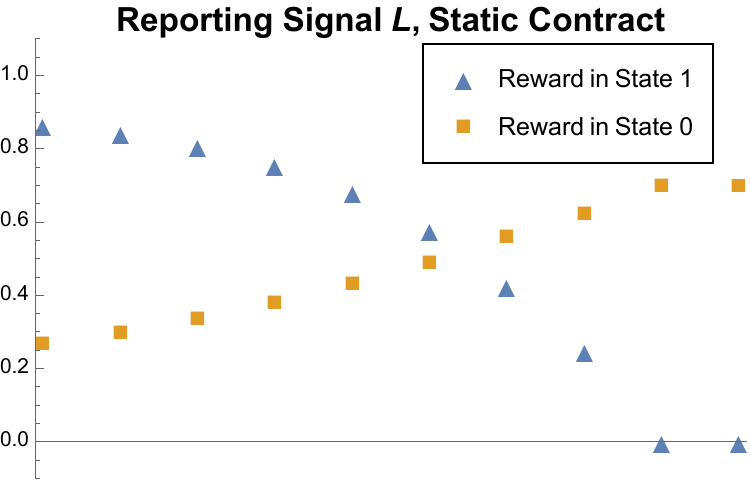}
   \hspace{-40mm}  \small{Time}
}
    \hspace{100pt}
    \subfloat{\includegraphics[width=0.45\textwidth]{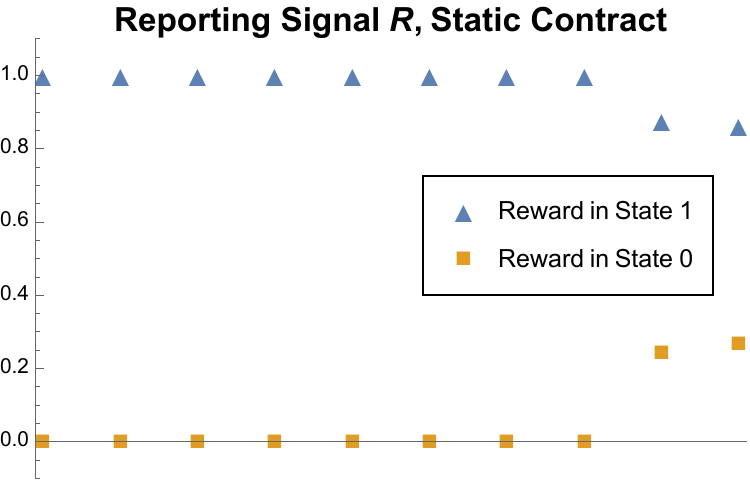}
  \hspace{-40mm}  \small{Time}
}
\caption{\footnotesize Solution to the linear program for the effort-maximizing contract with 10 periods, assuming that the contract involves a single elicitation at time 10---i.e., summarized advice. Note that while time is illustrated, any reward can be selected at any time, and the rewards presented are those selected consistent with incentive compatibility. The parameters are chosen as $\lambda_{1}^{\rsignal}=1/3, \lambda_{1}^{\lsignal}=\lambda_{0}^{\rsignal}=1/10, \lambda_{0}^{\lsignal}=1/5, \prior=2/3$.}
    \label{fig:StaticSolutionNumerical}
\end{figure}

We illustrate the solution to this linear program in \Cref{fig:StaticSolutionNumerical}. Here, the different rewards correspond to what the agent would optimally select from the \emph{same} menu if the signal were to arrive at that time, even though all reports are made after time $10$. 
A notable feature is that the rewards in state $0$ following signal $\lsignal$ are increasing over time, rather than decreasing as in the dynamic case. Intuitively, the dynamic contract in \Cref{fig:DynamicSolutionNumerical} uses a relatively high early reward in state $0$ after $\lsignal$ to encourage the agent to begin working, and then lowers it over time; however, under static contracts, this front-loading would violate incentive compatibility because an agent who worked longer would prefer to ``backdate'' his report and pick that generous early $\lsignal$ reward. To prevent such profitable backdating, the optimal static menu instead makes various offers at the same time, and the agent receiving a $\lsignal$ signal at a later time would be more certain of the state being $0$, selecting a reward option with a higher reward for state $0$. 
Dynamic elicitation avoids this tension because the contract available at time $t$ is not available at later times, so the principal can start with high incentives and taper them without creating incentives to mimic earlier types.

These computations show that a contract implementing maximum effort among \emph{static} contracts can be qualitatively quite different from those that can resort to dynamic elicitation. The linear programs further identify a range of costs such that the principal can implement effort up to time 10 under a dynamic contract, but not using static elicitation (specifically, whenever $0.041 \leq c \leq 0.049$). For this range of cost parameters, dynamic elicitation enables the principal to incentivize \emph{strictly more} effort. Our main results show that no gap arises in the single-signal, stationary, or perfectly revealing environments. Crucially, the parameters for these solutions do not fit under any of these environments. 

These illustrations show that some condition is necessary for static contracting to implement maximum effort. One natural way to try to prove the sufficiency of summarized advice would be to show that the constraints added to \eqref{constraint:static_IC} do not bind under these assumptions. The limitation is that it is not immediately clear how to determine which constraints will bind at which time given the properties of the linear program alone. Hence, this exercise motivates different techniques to determine the \emph{qualitative properties} of such contracts. The message we deliver is that the existence of a gap in this numerical example is attributable to the existence of noisy $\lsignal$ realizations.

\section{Implementing Maximum Effort via Scoring Rules}
\label{sec:static_is_optimal}

We now show that scoring rules implement maximum effort in the three environments highlighted in \cref{sec:infoacq}: stationary, perfect-learning, and single-signal.

The central challenge is that different beliefs require different reward options to maximize incentives for exerting effort. As the agent works without receiving a signal, his no-information belief $\mu^{\no}_t$ drifts toward $0$. When $\mu^{\no}_t$ is close to $1$, strong rewards for reporting $\theta=1$ weaken effort incentives---so the agent would exert effort only if $c$ is small. When $\mu^{\no}_t$ approaches $0$, the situation reverses: strong rewards for reporting $\theta=1$ are necessary to encourage the agent to keep working. Effort-maximizing contracts balance the incentives across different periods as the posterior belief drifts from close to 1 to close to 0. 

We demonstrate the sufficiency of static elicitation constructively: starting from any dynamic contract, we identify a static contract such that the agent's incentives for exerting effort are preserved in all continuation games. Anticipating \cref{sec:dynamic_optimal}, the ``problematic'' beliefs in our failure result arise when the agent obtains an imperfectly informative Poisson signal that moves beliefs toward state $0$.  \cref{sec:perfect} illustrates why the principal should never withhold rewards if the Poisson signal is \emph{conclusive} that the state is $0$; in particular, maximizing rewards at every time implies that these rewards should be made constant. 
\cref{sec:single_source} shows that the principal can convexify the agent's no-information utility to establish a static implementation when only one Poisson signal can arrive. 
While details differ across environments, the unifying principle is that, under these learning technologies, the structure of beliefs is sufficiently constrained to allow rewards to be adjusted in a way that renders the contract effectively static, thus managing the tension between providing incentives in early and late periods.

\subsection{Stationary Environment}
\label{sub:stationary}

In the stationary environment, the agent's belief never changes in the absence of Poisson signals: $\posterior^{\no}_t=\prior$ for all $t\le T$. This removes any intertemporal tension.

\begin{theorem}[Stationary Environment]\label{thm:stationary}\,\\
In the stationary environment, a V-shaped scoring rule with kink at $\prior$ is effort-maximizing.
\end{theorem}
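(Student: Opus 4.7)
The plan is to reduce the theorem to \cref{prop:static_moral_hazard} exploiting stationarity. I would fix an arbitrary contract $\contract$ implementing effort until some stopping time $\stoptime$, let $\score^*$ denote the V-shaped scoring rule with kink at $\prior$, and show that $\score^*$ also implements effort until at least $\stoptime$; since $\contract$ is arbitrary, $\score^*$ is then effort-maximizing.

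The key observation is that under $\contract$, at every time $t \leq \stoptime$ with no prior signal, the agent prefers working over stopping; in particular, the agent weakly prefers effort over no effort in the continuation game $\cgame_{t,\stoptime}$. By stationarity, the prior of $\cgame_{t,\stoptime}$ is $\posterior^{\no}_{t-\Delta} = \prior$ for every such $t$. Applying \cref{prop:static_moral_hazard} to each $\cgame_{t,\stoptime}$, the agent also prefers effort in $\cgame_{t,\stoptime}$ under the V-shaped scoring rule with kink at $\posterior^{\no}_t = \prior$ --- and crucially, this is the \emph{same} scoring rule $\score^*$ for every $t$. I would then translate these continuation-game incentives into dynamic incentives under $\score^*$: because $\score^*$ conditions only on the final report and realized state, at each history where no signal has arrived by time $t$, ``commit to work until $\stoptime$ or a signal arrival'' is a feasible continuation strategy under $\score^*$. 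By the previous step it yields payoff at least the stopping payoff $\min\{\prior, 1-\prior\}$ (the kink value of the expected score at $\prior$), so the agent does not stop at $t$. Properness of the V-shaped scoring rule ensures truthful reporting upon stopping. Hence $\score^*$ implements effort until at least $\stoptime$.

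The only mild conceptual point is that the continuation game restricts the agent to a binary ``all-or-nothing'' choice between $t$ and $\stoptime$, whereas under $\score^*$ the agent could stop at any intermediate time; but this is not an obstacle because to rule out stopping at $t$ it suffices to exhibit one dynamic continuation strategy that beats stopping, and ``commit to $\stoptime$'' is exactly such a strategy by \cref{prop:static_moral_hazard}. Stationarity is what makes the argument uniform in $t$, which is why the paper describes the result as immediate.
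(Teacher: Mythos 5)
Your proposal is correct and takes essentially the same route as the paper: decompose the dynamic problem into the continuation games $\cgame_{t}$, note that stationarity makes every continuation-game prior equal to $\prior$ so that \cref{prop:static_moral_hazard} yields the \emph{same} V-shaped scoring rule with kink at $\prior$ uniformly in $t$, and then convert continuation-game incentives back into dynamic incentives to conclude effort is sustained at least until $\stoptime_{\contract}$. The only minor difference is that the paper inserts a one-line replacement step (pooling all menu options from $t$ to $\stoptime_{\contract}$ into a single static scoring rule, which leaves the stopping payoff unchanged and weakly raises the continuation payoff) before invoking \cref{prop:static_moral_hazard}, whereas you subsume this by reading the proposition as applying to any contract under which effort is preferred in $\cgame_{t}$; this is a presentational rather than substantive gap.
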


\cref{prop:static_moral_hazard} implies that for any continuation game, the effort-maximizing reward depends only on the current belief. Since this belief is constant across time, the same V-shaped rule simultaneously maximizes incentives everywhere. There is no conflict between the incentives at date 0 and those at date $T$ (or any date in between).

The more subtle cases arise when beliefs drift as the agent exerts effort. In those cases, the optimal scoring rule may not take the form of a V-shape with a kink at the prior, reflecting a compromise between early-period incentives (where beliefs are near $\prior$) and late-period incentives (where beliefs approach $\mu^{\no}_{\stoptime}$).

\subsection{Perfect-learning Environment}
\label{sec:perfect}

When signals fully reveal the state, the sufficiency of scoring rules is less immediate.\footnote{This setting generalizes perfect-learning technologies from past work by allowing signals to potentially reveal \emph{either} state.}
Note that when signals fully reveal the state, it is without loss to focus on binary signals $\{\lsignal,\rsignal\}$.

\begin{theorem}[Perfect-learning Environment]\label{thm:perfect}\,\\
In the perfect-learning environment, there exists  $r_{1} \in (0,1]$ such that a V-shaped scoring rule with parameters $r_{0}=1$ and $r_{1}$ (and hence, a kink at $\frac{1}{1+r_{1}} \in [1/2,1)$) implements maximum effort. 
\end{theorem}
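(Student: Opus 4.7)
The plan is to start from any dynamic contract $R$ implementing effort until time $\tau = \tau_R$ in the perfect-learning environment and construct a V-shape with parameters $(r_0, r_1) = (1, r_1^\ast)$ that implements at least as much effort. Perfect learning makes $\mu^G_t \equiv 1$ and $\mu^B_t \equiv 0$: after signal $G$ the agent optimally reports state 1 and collects $r_1$, after $B$ he reports state 0 and collects $1$, and at no-signal belief $\mu$ the agent's best no-effort payoff under the V-shape is $\expectScore(\mu) = \max(\mu r_1, 1 - \mu)$, piecewise-linear with kink at $\hat{\mu} = 1/(1+r_1) \in (1/2,1)$. The no-signal belief $\mu_t^N$ evolves deterministically and monotonically under perfect learning, and the stated form of the theorem corresponds to the case of downward drift, so that $\mu_0^N = D > \mu_\tau^N$ (the opposite case yields the mirror-image V-shape with $r_1=1, r_0 \in (0,1)$).

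I would choose $r_1^\ast$ by balancing the two ``extreme'' moral-hazard constraints: the start-up constraint at $t=0$ (the agent prefers working to not) and the stopping constraint at $t=\tau$ (he prefers stopping to continuing). Proposition \ref{prop:static_moral_hazard} shows that within each continuation game $\cgame_{t,\tau}$ the V-shape with kink at $\mu_{t-\Delta}^N$ is effort-maximizing; since this optimal location varies with $t$, the best compromise is an interior kink. As $r_1$ ranges over $(0,1)$ the kink $\hat{\mu}$ sweeps through $(1/2,1)$, and the slack in the start-up constraint moves continuously in the opposite direction from the slack in the stopping constraint. An intermediate-value argument then yields an $r_1^\ast$ at which both extreme constraints hold with at least the slack they have under $R$, using the hypothesis that $R$ itself implements effort from $0$ to $\tau$.

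The main obstacle I expect is verifying the \emph{intermediate} IC constraints at every $t \in (0,\tau)$. The key structural observation is that $\expectScore(\mu)$ is piecewise-linear in $\mu$ with a single kink at $\hat{\mu}$, while under the V-shape the agent's continuation value $V_t(\mu_t^N)$ from continued effort decomposes into a linear combination of $r_1$, $1$, and $\expectScore(\mu_\tau^N)$, with weights that are linear in $\mu_t^N$ (coming from the Poisson arrival probabilities of $G$ and $B$ in the two states). Combined with the monotonicity of the trajectory $t \mapsto \mu_t^N$, the gap $V_t(\mu_t^N) - \expectScore(\mu_t^N)$ is piecewise-smooth in $t$ with at most one break (where $\mu_t^N$ crosses $\hat{\mu}$). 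I would then argue via a single-crossing or concavity argument, leveraging the martingale property of beliefs and the explicit Poisson form of the signal-arrival probabilities, that nonnegativity of this gap at the two endpoints extends to all interior $t$. The delicate step is handling the behavior at the kink, where $\expectScore$ is non-smooth; I anticipate resolving this requires splitting the trajectory into two sub-intervals according to whether $\mu_t^N$ lies above or below $\hat{\mu}$, and computing the gap explicitly on each.
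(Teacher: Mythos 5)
There is a genuine gap at the heart of your argument: the intermediate-value step. Monotonicity of the two extreme slacks in $r_1$ (which is correct, and is essentially \cref{lem:monotone_in_r1}) plus continuity only gives you an $r_1^\ast$ at which the start-up and stopping constraints of the \emph{static rule} are balanced against each other; it gives no comparison with the dynamic contract $R$. Your assertion that at this $r_1^\ast$ ``both extreme constraints hold with at least the slack they have under $R$'' is exactly the content of the theorem, and nothing in your sketch supplies a mechanism for it: a priori the set of $r_1$ for which the agent starts working at $D$ and the set for which he keeps working down to $\mu^N_{\tau_R}$ could be disjoint, in which case every static rule implements strictly less effort than $R$. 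That this can really happen is shown by \cref{thm:dynamic optimal} in the noisy-signal environment, to which your continuity-plus-IVT reasoning would apply verbatim; so the argument as stated proves too much, and the perfect-learning structure must enter precisely at this step, not only in the verification of intermediate constraints. The paper closes this gap with two explicit replacement arguments tied to $R$: first (\cref{lem:perfect_learning_bad_signal_reward}) the bad-news and no-signal rewards are pushed to $(1,0)$, using the fact that under perfect learning these events occur with probability one in state $0$, so the reward increase is enjoyed fully when exerting effort but only with probability $1-\mu^N_t$ when deviating; second, since the dynamic IC forces $r^{\good}_{t,1}$ to be weakly decreasing, all good-news options are collapsed to the single option $r^{\good}_{\hat t}$ at the crossing time $\hat t$, and one checks continuation game by continuation game that the induced drop in the no-effort payoff, $\mu^N_{t-\Delta}\bigl(r^{\good}_{t-\Delta,1}-r^{\good}_{\hat t,1}\bigr)$, weakly exceeds the drop in the payoff from continuing to work. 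It is this accounting, not an IVT, that transfers implementability from $R$ to a V-shaped rule; the choice of the \emph{best} $r_1$ by balancing the two endpoint constraints is a separate, later step (\cref{prop:optimal_r1}).

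Your treatment of the intermediate constraints is closer to the mark: under a V-shaped rule the no-effort payoff has slopes $-1$ and $r_1$ while the value of working is convex in $\mu^N_t$ with slope in $[-1,r_1]$, so the set of beliefs at which the agent is willing to work is an interval and checking the two endpoints suffices; this is the structure behind \cref{fig:v-shaped_binding_incentive_at_boundary}. But that observation only helps once you already have a candidate $r_1$ whose endpoint constraints dominate those of $R$, which is the step currently missing.
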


A key feature of perfect learning is that once a signal arrives, the problem ends: the agent becomes certain of the state. Thus, the only interior beliefs are those at which the agent has not observed any informative signal. This greatly simplifies the structure, as offering rewards
in both states can only weaken incentives: it is without loss to give rewards only in the
state the agent believes is more likely.

The proof follows two steps, each replacing part of an arbitrary dynamic contract with another contract that (weakly) strengthens effort incentives. Our discussion uses the notation for binary signals introduced in Section \ref{sec:numerical}.

\medskip

\noindent \textbf{Step One: Maximizing rewards in state $\lstate$.}
The first step is to show that the principal should always maximize the reward for the state realization of $0$ when a left-biased signal~$\lsignal$ is observed or no Poisson signal ever arrives.

To see why, fix an optimal contract implementing stopping time $\stoptime_{\rewardMenu}$, and consider the agent's decision in the period $\stoptime_{\rewardMenu}$. Since the agent will stop working in the next period whether or not a Poisson signal is observed, the agent's final payoff depends only on the signal observed in that period. His payoff from exerting effort at time $\stoptime_{\rewardMenu}$ is therefore: 

\begin{equation*} 
-c + (1-\mu_{\stoptime_{\rewardMenu}-1}^{\no})(\lambda_{0}^{\lsignal}\reward_{\stoptime_{\rewardMenu},0}^{\lsignal}+(1-\lambda_{0}^{\lsignal})\reward_{\stoptime_{\rewardMenu},0}^{\no})
+\mu_{\stoptime_{\rewardMenu}-1}^{\no} \lambda_{1}^{\rsignal} \reward_{\stoptime_{\rewardMenu},1}^{\rsignal}\end{equation*}

\noindent Since rewards are decreasing over time by \cref{thm:optimal_dynamic_menu}, and since the agent can always select the largest available reward, this expression is equal to: 

\begin{equation} 
-c + (1-\mu_{\stoptime_{\rewardMenu}-1}^{\no})\reward_{\stoptime_{\rewardMenu},0}^{\lsignal}+\mu_{\stoptime_{\rewardMenu}-1}^{\no} \lambda_{1}^{\rsignal} \reward_{\stoptime_{\rewardMenu},1}^{\rsignal} \label{eq:perfwork} \end{equation}

\noindent But stopping immediately and reporting 0, without exerting any effort at all, yields a payoff: 
\begin{equation} 
(1-\mu_{\stoptime_{\rewardMenu}-1}^{\no}) r_{\stoptime_{\rewardMenu}-1,0}^{\lsignal} \label{eq:perfshirk}
\end{equation}

\noindent Taking the difference between (\ref{eq:perfwork}) and (\ref{eq:perfshirk}) reveals that exerting effort is more favorable when $r_{\stoptime_{\rewardMenu},0}^{\lsignal}-r_{\stoptime_{\rewardMenu}-1,0}^{\lsignal}$ is larger. Since $r_{\stoptime_{\rewardMenu},0}^{\lsignal} \leq r_{\stoptime_{\rewardMenu}-1,0}^{\lsignal}$, raising both rewards---up to 1---can only increase this difference and therefore the gain from exerting effort. 

Now consider the agent's incentives at time $t$ more generally if the principal offers the full reward for a report of left-biased signal $\lsignal$. Adding this reward will always improve the continuation payoff from exerting effort by \emph{at least} $(1-\mu_{t-1}^{\no})(1-r_{t,0}^{\lsignal})$---following the previous logic, this change would exactly be the increase in payoffs if rewards following signal $\lsignal$ were constant after time~$t$. 
If, instead, they are strictly decreasing, then the change yields a strictly higher gain from effort. However, the gain from stopping effort is always \emph{at most} $(1-\mu_{t-1}^{\no})(1-r_{t-1,0}^{\lsignal})$---again, it would be exactly this amount if the agent were to report signal $\lsignal$ when shirking; otherwise, there would be no change at all. Since the gain from exerting effort is always at least as large as the gain from shirking, this modification always (at least weakly) strengthens the incentives to exert effort.  

\begin{figure}
\centering
\begin{tikzpicture}[xscale=0.8,yscale=0.8]
\hspace{-10pt}
\draw [<->] (0,5.2) -- (0,0) -- (5.2,0);
\draw [dotted] (0, 5) -- (5, 5) -- (5, 0);

\draw (-0.3, 5) node {$1$};

\draw (0, -0.3) node {$0$};
\draw (5, -0.3) node {$1$};

\draw [red] (0,5) -- (5,0);
\draw [dashed] (0,0) -- (5,4);

\draw [dotted] (0,4) -- (5,4);
\draw (-0.5, 4) node {$r^{\rsignal}_{\hat{t},1}$};

\draw [dotted] (2.78,0) -- (2.78,2.22);
\draw (2.78, -0.4) node {$\mu^{\no}_{\hat{t}}$};

\draw [thick] plot [smooth, tension=0.5] coordinates { (1.5,0.8) (2.3, 1.4 ) (2.78,2.22)};
\draw [thick] plot [smooth, tension=0.5] coordinates {(2.78,2.22) (3.5, 3.5) (4.3, 4.1)(4.5, 4.4)};
\end{tikzpicture}
\caption{\footnotesize Illustration of the second step of the proof of \cref{thm:perfect}}
\label{fig:secondstepthm2}
\end{figure}

\medskip

\noindent \textbf{Step Two: Collapsing the rewards after signal $\rsignal$ into a single payment}
The second step shows that it suffices to offer just one menu option in case signal $\rsignal$ is observed. The idea is to replace all rewards following signal $\rsignal$ with the \emph{single} menu option $\reward^{\rsignal}_{\hat{t}}$, where $\hat{t}$ is selected as the minimum time (or equivalently the highest belief $\mu_{\hat{t}}^{\no}$) at which the agent would prefer rewards $(1,0)$ to $\reward^{\rsignal}_{\hat{t}}$ in the absence of a Poisson signal arrival. 
\Cref{fig:secondstepthm2} illustrates this replacement: the red line represents the agent's expected payoff if choosing the reward $(1,0)$, while the thick black line represents the agent's payoff if stopping effort and pretending to have obtained the $\rsignal$ signal at a given on-path belief. The replacement we identify is illustrated by the dashed black line. 

Why does this modification weakly strengthen the incentives to exert effort? When $\mu_{t-1}^{\no} \leq \mu_{\hat{t}}^{\no}$ or equivalently $t \geq \hat{t}+1$, the agent's payoff from not exerting effort in the continuation game $\cgame_t$ remains unchanged under the new contract: the replacement is chosen precisely so that the reward option $(1,0)$ is selected at any such no information belief. But since rewards decrease over time, this modification ensures a larger reward following a right-biased signal $\rsignal$, strengthening the incentives to exert effort. 
The case of $\mu_{t-1}^{\no} > \mu_{\hat{t}}^{\no}$ or equivalently $t \leq \hat{t}$ is more subtle: Here, the payoff from both exerting effort \emph{and} stopping in the continuation game $\cgame_t$ \emph{decreases}. 
However, it turns out that the payoff from stopping decreases by more. Indeed, the decrease in the expected utility from stopping at no information belief $\mu_{t-1}^{\no}$ is $\mu_{t-1}^{\no}(\reward_{t-1,1}^{\rsignal}-\reward_{\hat{t},1}^{\rsignal})$. 
The decrease in expected utility from exerting effort is at most $\mu_{t-1}^{\no}(\reward_{t,1}^{\rsignal}-\reward_{\hat{t},1}^{\rsignal})$ since there is a chance that an $\rsignal$ signal is never observed, which implies that the probability of this change in rewards being relevant when exerting effort is less than $\mu_{t-1}^{\no}$. 
Thus, the payoff from exerting effort decreases by less than the payoff from stopping due to the decreasing rewards structure. Therefore, the agent is still willing to exert effort under this replacement. See \cref{apx:perfect} for complete formal details.

\subsection{Single-signal Environment}
\label{sec:single_source}
We now consider the environment where absent Poisson signals, $\posterior_t^{\no}$ drifts toward~0, while a single Poisson signal causes beliefs to jump upward.

\begin{theorem}[Single-signal Environment]\label{thm:single_source}\,\\ 
In the single-signal environment, there exists a scoring rule implementing maximum effort.
\end{theorem}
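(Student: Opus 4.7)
The plan is to follow the general replacement strategy described at the start of Section~3: starting from any effort-maximizing dynamic contract $\contract$ with stopping time $\stoptime$, I would construct a scoring rule that weakly strengthens the agent's incentive to exert effort in every continuation game $\cgame_t$. Two features of the single-signal environment simplify the picture: only the signal $\good$ can arrive, and absent it the no-signal belief $\posterior^{\no}_t$ drifts monotonically toward state~$0$, while $\posterior^{\good}_t$ is a deterministic Bayes update of $\posterior^{\no}_{t-\Delta}$. Consequently, the menu offered by $\contract$ reduces to a post-signal component $\{\reward^{\good}_t\}_{t\leq\stoptime}$ together with a terminal no-signal option $\reward^{\no}_{\stoptime}$, each targeted at a single belief.

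The first substantive step is to collapse the post-signal rewards into a single option $\bar\reward^{\good}$. Because a signal at time $t$ arrives only after all effort costs up to $t$ have already been sunk, I can replace each $\reward^{\good}_t$ by an element of $\{\reward^{\good}_{t'}\}_{t'}$ that the agent with belief $\posterior^{\good}_t$ weakly prefers without degrading effort incentives at any earlier time. Dynamic IC of the original contract, combined with the monotonicity of $\posterior^{\good}_t$ in $t$, lets me pick a single extreme reward vector (the one with the largest state-$1$ component) that dominates at every post-signal belief. The result is a two-option menu $\{\bar\reward^{\good},\reward^{\no}_{\stoptime}\}$, which is already a scoring rule indexed by a binary report.

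If this two-option menu still incentivizes effort in every $\cgame_t$, we are done; this will be the case when the stop-now utility $\util^{\no}_t$ under the menu lies below the continuation value $V_t(\posterior^{\no}_t)$ at every $t\leq\stoptime$. Otherwise -- the case flagged in the Section~3 overview as requiring an extra construction -- I would add a third menu option $(a,b)$ with $a,b>0$, i.e., a payoff that is strictly positive in \emph{both} states. The calibration target is to place $(a,b)$ so that its indirect-utility line sits (i) strictly below the V-shaped envelope defined by $\bar\reward^{\good}$ and $\reward^{\no}_{\stoptime}$ at the extreme beliefs $\posterior^{\good}_t$ and $\posterior^{\no}_{\stoptime}$, yet (ii) strictly above the stop-now envelope at every intermediate $\posterior^{\no}_t$. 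Such a ``safety-net'' option would be chosen only at the intermediate no-signal beliefs where effort incentives were lacking, raising $V_t$ there while leaving the endpoint IC constraints at $\posterior^{\good}_t$ and $\posterior^{\no}_{\stoptime}$ untouched.

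The step I expect to be the main obstacle is exactly this calibration in the presence of both post-signal and no-signal beliefs moving with $t$. Existence of the separating $(a,b)$ hinges on the fact that in the single-signal environment the two belief trajectories $\posterior^{\good}_t$ and $\posterior^{\no}_t$ move together monotonically and remain separated by a Bayes-factor gap that depends only on $\lambda_1^{\good}/\lambda_0^{\good}$ -- a single-crossing property that fails in the general two-signal environment. Once this geometric step is established, truthful reporting across the three-option menu and the preservation of effort incentives in every $\cgame_t$ should follow by the same indirect-utility arguments used in the proofs of \cref{thm:stationary} and \cref{thm:perfect}, applied history by history.
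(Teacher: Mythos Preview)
Your first substantive step---collapsing $\{\reward^{\good}_t\}$ to a single $\bar\reward^{\good}$ by picking ``the one with the largest state-$1$ component''---does not go through in the single-signal environment. In the perfect-learning case this works because every post-signal belief equals $1$, so a single vector is trivially optimal at all of them. Here $\posterior^{\good}_t$ varies with $t$ (monotonically, as you note), and a vector that is best at a high $\posterior^{\good}_t$ can be strictly worse at a lower $\posterior^{\good}_t$ than the option originally offered there. Replacing all $\reward^{\good}_t$ by this single vector can therefore lower the continuation value in some $\cgame_t$ by more than it lowers the stop-now utility, so effort incentives need not be preserved. Your fallback of adding one ``safety-net'' option $(a,b)$ is the informal picture drawn in the paper around \Cref{fig:additional_menu}, but that figure illustrates why a V-shaped rule can fail, not a proof that one extra option always repairs it; nothing in your sketch rules out needing several intermediate options, and the calibration you describe (placing one line below the envelope at the endpoints yet above it everywhere in between) is not generically possible for an arbitrary convex envelope.

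The paper's route is structurally different and bypasses the collapse entirely. The key step is \cref{lem:convex_no_info}: among effort-maximizing dynamic contracts one can always find one whose no-information utility $\util^{\no}_t$ is convex in $\posterior^{\no}_t$. Any convex function is the indirect utility of \emph{some} scoring rule, so this already gives a static implementation---except that the induced rewards may fall outside $[0,1]$. The proof then identifies the last time $\hat t$ at which a feasible tangent exists, flattens the utility curve above $\posterior^{\no}_{\hat t}$ onto that tangent, and verifies case by case (for $t\geq\hat t$ and $t<\hat t$) that in every continuation game the drop in stop-now utility is at least the drop in continuation value. The resulting scoring rule can have arbitrarily many pieces (one per kink of the convex $\util^{\no}$), which is why your two- or three-option ansatz is too restrictive. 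If you want to salvage your plan, the missing ingredient is precisely an analogue of \cref{lem:convex_no_info}; once you have convexity, the right object to manipulate is the no-information indirect utility, not the post-signal rewards.
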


\noindent \cref{thm:single_source} is more subtle than \cref{thm:perfect} because imperfect signals might require positive rewards in both states to implement optimal dynamic efforts. This creates additional complexity: we cannot simply ``flatten'' rewards as in \cref{thm:perfect} without carefully tracking how these changes affect incentives across all beliefs.

\medskip

\noindent \textbf{Step One: Convexifying the no-information utility} Our first step is to replace the arbitrary dynamic contract with one where the no-information utility is ``convexified''---i.e., replacing the contract with one where $\util^{\no}_t$ is convex in $\posterior^{\no}_t$. While this property holds for any (static) scoring rule, this property need not hold for dynamic contracts.

We describe the argument. Define $\bar{t}$ to be the earliest time such that the no-information utility exhibits a nonconvexity. A nonconvexity in the no-information utility implies that $\bar{t} \leq \stoptime_{\rewardMenu}$. One of our observations is that the dynamic incentives imply that the effort constraint cannot be binding at $\bar{t}+1$ under non-convexity.\footnote{See \Cref{fig:overlapping_belief_discrete} in \cref{apx:single_source} for an illustration.} 
Intuitively, the no-information utility at time $t$ cannot be larger than the convex combination between (1) the reward of receiving a signal~$\rsignal$ immediately at time $t+1$; and (2) the continuation payoff at time $t+1$ for not receiving any Poisson signal at time $t$. 
If the effort constraint is binding at time $\bar{t}+1$, the latter equals the no information utility at time $\bar{t}+1$, which leads to a contradiction due to the non-convexity of the no information utility.

On the other hand, if the agent's incentive to exert effort is slack at time $\bar{t}+1$, we may raise some rewards in the menu option $\reward_{\bar t}^{\no}$ to smooth the non-convex piece without violating earlier IC constraints. Repeating this argument pushes convexity forward through time until the entire $\util_t^{\no}$ becomes convex.

\medskip

\noindent \textbf{Step Two: Identifying a static implementation of the scoring rule} Since convex functions are equal to the upper envelope of the linear functions below them, a natural conjecture in light of Step One is that the principal could offer a static contract providing the set of reward functions that are tangent to $\util_{t}^{\no}$ at some belief. This contract provides the same payoff if effort is not exerted; moreover, no higher rewards could be provided if Poisson signals arrive without increasing the no-information utility above $\util_{t}^{\no}$.

However, this argument does not work since the constructed rewards may lie outside of $[0,1]$ for an arbitrary no-information utility function; 
for instance, consider $\util^{\no}_t = (\mu^{\no}_t)^2$. 
A dynamic contract that implements this no-information utility is a constant reward $(\mu^{\no}_t)^2$ at time $t$ regardless of the realization of the state. 
However, to implement this utility function using a scoring rule, 
by \cref{prop:static_truthful},
the menu option for belief $\posterior\in[0,1]$ must be $(-\posterior^2,2\posterior-\posterior^2)$, 
which violates the ex post individual rationality constraint. 

\begin{figure}
\centering
\begin{tikzpicture}[xscale=0.8,yscale=0.8]
\hspace{-10pt}
\draw [<->] (0,3.1) -- (0,0) -- (7.3,0);

\draw [line width=0.8mm, dashed, gray] plot [smooth, tension=1] coordinates { (1, 2) (3, 1.5) (4, 1.5)};
\draw [line width=0.8mm, dashed, gray] (4, 1.49) -- (5, 1.85);

\draw [thick] plot [smooth, tension=1] coordinates { (1, 2) (3, 1.5) (4, 1.5)};
\draw [thick] plot [smooth, tension=0.8] coordinates { (4, 1.5) (4.5, 1.9) (5, 3)};





\draw (7, -0.3) node {$1$};
\draw (0, -0.3) node {$0$};

\draw [dashed] (0, 2.32) -- (7, 0);
\draw [red] (0, 0) -- (7, 2.6);
\draw [dotted] (7, 0) -- (7, 2.6);

\draw [dotted] (4, 0) -- (4, 1.5);
\draw (4, -0.4) node {$\mu^{\no}_{\hat{t}}$};

\end{tikzpicture}
\caption{\footnotesize The black curve is the convex no-information utility of the agent, and $\hat{t}$ is the minimum time with a bounded tangent line (red line). 
The thick dashed line is the no-information utility of the agent
given a feasible scoring rule that offers a menu option that corresponds to the red line instead of the black curve for belief $\posterior \geq \posterior^{\no}_t$.
}
\label{fig:overlapping_bounded}
\end{figure}
Intuitively, this kind of a violation of the constraint that rewards lie in $[0,1]$ arises because the no-information utility is too convex.
In this case, we can flatten the no-information utility by decreasing the reward at earlier times---analogous to Step Two of \cref{thm:perfect}. 
Under this flattening, 
the no-information utility decreases weakly more than the continuation payoff in all continuation games, 
yielding stronger incentives to exert effort. \Cref{fig:overlapping_bounded} illustrates this idea, with details provided in \cref{apx:single_source}.

\section{Insufficiency under Noisy Learning and Slow Drift}

\label{sec:dynamic_optimal}

We now turn to environments with noisy signals and slow drift. We specialize to the binary-signal environment from Section \ref{sec:numerical}. Noisy signals satisfy: 

\[\lambda_1^{\rsignal} > \lambda_0^{\rsignal} > 0 \text{ and } 0 < \lambda_1^{\lsignal} < \lambda_0^{\lsignal},\] i.e., learning is imperfect, and either signal may arise under either state. Slow drift means: 
\[ \lambda_0^{\rsignal}+\lambda_0^{\lsignal} \lessapprox \lambda_1^{\rsignal}+\lambda_1^{\lsignal}  \] 
so that the belief moves gradually towards $0$ when no Poisson signal arrives. Importantly, the single-signal assumption implies that, for a fixed arrival rate of 
$\rsignal$, beliefs drift at the maximal rate conditional on non-arrival. In this sense, the single-signal environment represents an extremal benchmark opposite to settings with slow drift. 

Intuitively, increasing a late reward now affects rewards for exerting effort \emph{less} than shirking incentives under noisy signals: e.g., if the reward for the state being $0$ is increased, the agent may not enjoy this increase when exerting effort if he receives a right-biased signal $\rsignal$ while the state is $0$.
Thus, the logic of \cref{thm:perfect} fails. Furthermore, since beliefs drift toward state 0, rewards for receiving a left-biased signal $\lsignal$ can be lowered over time without incentivizing the agent to shirk and lie. As described in \cref{sect:discussions}, these modifications are in the direction of those that maximize the incentives for the agent to exert effort.  

To state our results, we first provide a formal bound on how long the horizon should be so that the time horizon is not a binding constraint. We define this time as $T_{\lambda,\prior,c}$: 

\begin{lemma}\label{lem:max_stop_time}
Fix any signal arrival rates satisfying $\lambda_{1}^{\rsignal} > \lambda_{0}^{\rsignal}$. Let $\posterior_{\lambda,c} \triangleq \min\{\frac{1}{2},\frac{c}{\lambda^{\rsignal}_1 - \lambda^{\rsignal}_0}\}$ 
and let $T_{\lambda,\prior,c}$ be the maximum time such that $\posterior^{\no}_{T_{\lambda,\prior,c}-1} \geq  \posterior_{\lambda,c}$. The stopping time $\stoptime_{\rewardMenu}$ satisfies 
$\stoptime_{\rewardMenu} \leq T_{\lambda,\prior,c}$, given any prior $\prior \in (0,1)$, cost of effort $c$, and contract $\rewardMenu$ with rewards belonging to $[0,1]$.
\end{lemma}
\noindent Intuitively, $T_{\lambda,\prior,c}$ is the maximum calendar time the agent can be incentivized to exert effort 
in any contract when the prior is $\prior$. The following result provides sufficient conditions under which more complex dynamic structures are necessary in an effort-maximizing contract:

\begin{theorem}[Strictly Less Effort Under Scoring Rules]\label{thm:dynamic optimal}\,\\
Fix any prior $\prior \in (0,\frac{1}{2})$, any cost of effort $c$,
and any constant $\kappa_0>0, \frac{1}{4}\geq\bar{\kappa}_1 > \underline{\kappa}_1 > 0$. There exists $\epsilon > 0$ such that for any $\lambda$ satisfying: 
\begin{itemize}
    \item $\lambda^{\rsignal}_1 - \lambda^{\rsignal}_0 \geq \frac{1}{\prior}(c + \kappa_0)$;
    \hfill (sufficient-incentive)
    \item $\lambda_1^{\lsignal},\lambda_0^{\lsignal}, \lambda_0^{\rsignal}, \lambda_1^{\rsignal} \in [\underline{\kappa}_1, \bar{\kappa}_1]$; 
    \hfill (noisy-signal) 
    \item $\lambda_1^{\rsignal}+\lambda_1^{\lsignal} \in (\lambda_0^{\rsignal}+\lambda_0^{\lsignal}, \lambda_0^{\rsignal}+\lambda_0^{\lsignal} +\epsilon) $, \hfill (slow-drift) 
    \item and $T\geq T_{\lambda,\prior,c}$, \hfill (sufficient-horizon)
\end{itemize}
any static scoring rule implements effort strictly less than the maximum. 
\end{theorem}

\noindent The first and last assumptions prevent trivial horizon or zero-incentive constraints. The substantive assumptions are noisy signals and slow drift. The former implies the technology is far enough from the case considered in \cref{thm:perfect}, while the latter reflects the technology is far enough from the case considered in \cref{thm:single_source} as described above.

Our proof considers a particular contract that can outperform any static contract:

\begin{definition}[Myopic-incentive Contract]\label{def:myopic-incentive}\,\\
When prior $\prior\in (0, \frac{1}{2})$, a contract $\rewardMenu$ with menu options $\{\reward^{\signal}_t\}_{t\leq \stoptime_{\rewardMenu},\signal\in\signals}\cup\{\reward^{\no}_{\stoptime_{\rewardMenu}}\}$ 
is a \emph{myopic-incentive contract} if $\reward^{\rsignal}_t = (0,1)$ and $\reward^{\lsignal}_t = (\frac{\posterior^{\no}_{t-1}}{1-\posterior^{\no}_{t-1}},0)$ for any $t\geq 1$, 
and $\reward^{\no}_{\stoptime_{\rewardMenu}}=\reward^{\lsignal}_{\stoptime_{\rewardMenu}}$. 
\end{definition}

\noindent Intuitively, this contract offers the full reward when guessing state 1 and  $\frac{\posterior^{\no}_{t-1}}{1-\posterior^{\no}_{t-1}}$ when guessing state 0. 
The rewards are carefully chosen such that an agent with belief $\posterior^{\no}_{t-1}$ is indifferent between these two reward options. 
Note that since $\posterior_{t-1}^{\no}$ decreases over time, so does the reward when the agent guesses state 0. This property is necessary for the constructed contract to be incentive-compatible. 

The terminology `myopic-incentive' reflects that this contract reoptimizes rewards period-by-period. 
The rewards $\reward^{\rsignal}_t = (0,1)$ and $\reward^{\lsignal}_t = (\frac{\posterior^{\no}_{t-1}}{1-\posterior^{\no}_{t-1}},0)$ form a V-shaped scoring rule with a kink at belief $\posterior^{\no}_{t-1}$, which is the optimal scoring rule for the continuation game with the prior being $\posterior^{\no}_{t-1}$.
We mention that the myopic-incentive contract typically does \emph{not} provide maximal incentives to exert effort in all continuation games---after all, rewards decrease over time, so an agent who exerts effort and receives a Poisson signal in later periods may receive a lower reward. However, \Cref{lem:contract_approx_optimal} in the Online Appendix shows that these contracts nevertheless provide incentives \emph{close} to the effort-maximizing contract under slow-drift. Intuitively, slow drift ensures that the aforementioned decreases in the rewards over time have a minor impact on the agent's incentives to exert effort at any belief.

\cref{thm:dynamic optimal} follows from noting that the same cannot be said for the effort-maximizing \emph{static} contract. Letting $\rewardMenu^{*}$ denote the effort-maximizing dynamic contract, for a contract to perform only slightly worse than  $\rewardMenu^{*}$, it is still necessary to provide incentives to exert effort at times close to $\stoptime_{\rewardMenu^{*}}$. \cref{sub:comparative statics of static} illustrates that when signals are noisy, such static contracts are too weak to incentivize the agent to exert effort at time $0$, leading to a contradiction.  Simply put, dynamic contracts can adapt incentives to both the initial beliefs as well as the later beliefs; static contracts cannot. 

We mention that the conditions in \cref{thm:dynamic optimal} facilitate a comparison in the amount of effort implemented by the myopic-incentive contract and an arbitrary (static) scoring rule. In particular, these conditions avoid the need to provide more structure to make this comparison sharp---for instance, to avoid integer issues which would suggest the increased strength from dynamics are insufficient to induce another period of effort. Nevertheless, the theorem delivers our main qualitative message: a sufficiently strong joint violation of the conditions supporting static elicitation can make dynamic elicitation necessary.

Along these lines, an assumption in \cref{thm:dynamic optimal}, but one that appears relatively harmless, is that $\prior \in (0,\frac{1}{2})$. This restriction ensures the menus in the myopic-incentive contract remain within $[0,1]$. Combined with downward drift, this ensures feasibility and compatibility with truthful reporting in myopic incentive contracts.

\section{Additional Observations and Final Thoughts}
\label{sec:extension}

We conclude with some additional discussion, presenting our final thoughts and proposing several open questions in \cref{sub:conclusion}.

\subsection{Properties of Optimal Scoring Rules Under Dynamics}
\label{subsec:scoring_under_dynamic}

The results in \cref{sec:static_is_optimal} provide conditions such that the optimal contract has a static implementation as scoring rules. 
However, the structure of the optimal scoring rules remains elusive. 
In this section, we show that for perfect-learning and single-signal environments, 
the dynamics in information acquisition still play a crucial role in determining the optimal scoring rules. 
In particular, we show that the optimal scoring rules in those two dynamic environments differ from those in static environments. 

\paragraph{Perfect-learning} 
In \cref{thm:perfect}, we show that a V-shaped scoring rule implements the maximum effort. 
However, the effort-maximizing choice of the parameter $r_{1}$ is not provided. 
In this section, we show that the optimal choice of $r_{1}$ would not lead to a V-shaped scoring rule with a kink at the prior, resulting in a difference compared to the static environments. 

To illustrate the idea, we consider an environment where $T$ is sufficiently large so that the time horizon would not be a binding constraint for exerting costly effort. 
Given a V-shaped scoring rule $\score$ with parameters $r_0 = 1$ and $r_1 \in [0,1]$,
recall that $\util^{\no}_t$ is the agent's utility when not exerting effort after time~$t$. 
Let $U^+_t$ be the value function of the agent at the prior belief $\posterior^{\no}_{t-1}$: That is, the agent's payoff when exerting effort optimally in at least one period starting from (and including) time $t$.
It is straightforward that $U^+_t$ is convex in $\posterior^{\no}_{t-1}$, with its derivative between $-1$ and $r_1$. 
We let $\underline{\mu}(r_1)\leq \bar{\posterior}(r_1)$ be the beliefs such that $U^+_t$ intersects $u^N_t$.\footnote{More formally: Assuming that $U^+_t \geq \util^{\no}_{t-1}$ for some $t\geq 0$, we define $\underline{\mu}(r_1) = \min\{\posterior^{\no}_t, t\geq 0: U^+_t \geq \util^{\no}_{t-1}\}$
and $\bar{\posterior}(r_1) = \max\{\posterior^{\no}_t, t\geq 0: U^+_t \geq \util^{\no}_{t-1}\}$. If $U^+_t < \util^{\no}_{t-1}$ holds for all $t$, then the agent does not work and we take $\underline{\mu}(r_1) =\overline{\mu}(r_1)$ to be undefined.} 
The agent has incentives to exert effort at time $t$ given scoring rule $\score$ if and only if $\posterior^N_t \in [\underline{\mu}(r_1), \bar{\mu}(r_1)]$. 
\Cref{fig:v-shaped_binding_incentive_at_boundary} illustrates how $\underline{\mu}(r_1)$ and $\overline{\mu}(r_1)$ are determined, namely as the intersection between the agent's value function $U_{t}^{+}$ and the payoff attainable without exerting any further effort.

\begin{figure}[t]
\centering
\begin{tikzpicture}[xscale=1,yscale=0.8]
\hspace{-10pt}
\draw [<->] (0,5.2) -- (0,0) -- (5.2,0);
\draw [dotted] (0, 5) -- (5, 5) -- (5, 0);

\draw [thick, blue] plot [smooth, tension=0.5] coordinates { (1, 4) (3, 2.5 ) (4.5, 2.7)};

\draw (-0.3, 5) node {$1$};

\draw (0, -0.4) node {$0$};
\draw (5.1, -0.4) node {$1$};

\draw [red] (0,5) -- (5,0);
\draw [red] (0,0) -- (5,3);

\draw [dotted] (0,3) -- (5,3);
\draw (-0.3, 3) node {$r_1$};

\draw [dotted] (1,0) -- (1,4);
\draw (1, -0.4) node {$\underline{\mu}(r_1)$};
\draw [dotted] (4.5,0) -- (4.5,2.7);
\draw (4.2, -0.4) node {$\bar{\mu}(r_1)$};


\draw [dashed] (0,3.5) -- (5,1.8);
\draw [dotted] (3,0) -- (3,2.5);
\draw (3, -0.4) node {$\mu^{\rsignal}_{t}$};

\draw [dotted] (0,1.8) -- (5,1.8);
\draw (-0.3, 3.5) node {$z_0$};
\draw (-0.3, 1.8) node {$z_1$};

\draw [thick] (1,4) -- (2.28,2.72) -- (3.73, 2.23) -- (4.5,2.7);

\end{tikzpicture}
\caption{\footnotesize The illustration for both perfect-learning and single-signal environments. The red lines are the agent's utility $\util^{\no}_{t-1}$ for not exerting effort and the blue curve is the agent's utility $U^+_t$ from exerting effort in at least one period, both as a function of the no information belief $\posterior^{\no}_{t-1}$.
The black curve is the agent's utility for not exerting effort in the alternative scoring rule with additional menu option $(z_0,z_1)$.}
\label{fig:additional_menu}
\label{fig:v-shaped_binding_incentive_at_boundary}
\end{figure}

\begin{lemma}\label{lem:monotone_in_r1}
Both $\underline{\mu}(r_1)$
and $\bar{\posterior}(r_1)$ are weakly decreasing in $r_1$. 
\end{lemma}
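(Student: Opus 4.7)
My plan is to leverage the geometry of $U^+_t$ and $u^N_{t-\Delta}$ (viewed as functions of $\mu=\mu^N_{t-\Delta}$) together with an envelope-based comparative statics in $r_1$, and then conclude by a case split on whether the kink $\hat\mu(r_1)=1/(1+r_1)$ has crossed the original intersection. The first ingredient is unimodality of $g(\mu;r_1):=U^+_t(\mu)-u^N_{t-\Delta}(\mu)$. The V-shape $u^N(\mu;r_1)=\max(1-\mu,\,r_1\mu)$ has slope $-1$ on $[0,\hat\mu(r_1))$ and slope $r_1$ on $(\hat\mu(r_1),1]$, while $U^+_t$ is convex with slope in $[-1,r_1]$ by the observation already noted in the paper. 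So the slope of $g(\cdot;r_1)$ is non-negative on $[0,\hat\mu(r_1)]$ and non-positive on $[\hat\mu(r_1),1]$, i.e.~$g$ is unimodal with maximum at $\hat\mu(r_1)$. In particular, whenever $\{\mu:g(\mu;r_1)\geq 0\}$ is nonempty it is a closed interval $[\underline\mu(r_1),\bar\mu(r_1)]$ that contains $\hat\mu(r_1)$.

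The second ingredient is the sign of $\partial g/\partial r_1$. Directly, $\partial u^N/\partial r_1=0$ for $\mu<\hat\mu(r_1)$ and $\partial u^N/\partial r_1=\mu$ for $\mu>\hat\mu(r_1)$. For $U^+$, every fixed stopping-and-reporting strategy $\sigma$ yields a payoff $\mu P_1(\sigma)r_1+(1-\mu)P_0(\sigma)-c\,\E_\sigma[\text{effort periods}]$, where $P_1(\sigma)\in[0,1]$ is the conditional probability, given state~$1$, of reporting high and earning $r_1$. Since $U^+_t$ is a supremum of such affine functions of $r_1$, the envelope theorem gives $\partial U^+_t/\partial r_1=\mu\,P_1(\sigma^*)\in[0,\mu]$. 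Combining, $\partial g/\partial r_1\geq 0$ whenever $\mu\leq\hat\mu(r_1)$ and $\partial g/\partial r_1\leq 0$ whenever $\mu\geq\hat\mu(r_1)$.

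The conclusion now follows by case analysis. Fix $r_1'>r_1$ and put $\mu_0:=\underline\mu(r_1)$. If $\mu_0\leq\hat\mu(r_1')$, then $\mu_0\leq\hat\mu(r_1'')$ for every $r_1''\in[r_1,r_1']$, so integrating the sign from the previous paragraph yields $g(\mu_0;r_1')\geq g(\mu_0;r_1)=0$; unimodality then places $\mu_0$ in the new work interval, giving $\underline\mu(r_1')\leq\mu_0$. If instead $\hat\mu(r_1')<\mu_0$, unimodality forces the (nonempty) work interval at $r_1'$ to contain $\hat\mu(r_1')<\mu_0$; by connectedness this interval either also contains $\mu_0$---so $\underline\mu(r_1')\leq\hat\mu(r_1')<\mu_0$---or lies entirely below $\mu_0$, in which case $\bar\mu(r_1')<\mu_0$ and a fortiori $\underline\mu(r_1')<\mu_0$. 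The argument for $\bar\mu$ is symmetric: taking $\mu_1:=\bar\mu(r_1)\geq\hat\mu(r_1)>\hat\mu(r_1')$, the envelope step gives $g(\mu_1;r_1')\leq 0$, and unimodality yields $\bar\mu(r_1')\leq\mu_1$.

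The main obstacle is the second subcase above, where $\hat\mu$ crosses below $\mu_0$ as $r_1$ rises: beyond the crossing the envelope inequality reverses sign, so one cannot directly conclude $g(\mu_0;r_1')\geq 0$. Unimodality---valid for every value of $r_1$---is what rescues the argument, since it pins the work interval around the current kink and therefore forces the entire interval weakly below $\mu_0$ whenever the kink itself has moved there.
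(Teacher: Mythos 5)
Your argument is correct and is essentially the paper's own proof: the paper likewise begins by noting that $\underline{\mu}(r_1)$ lies on the $(1,0)$ branch and $\bar{\mu}(r_1)$ on the $(0,r_1)$ branch of the V (your unimodality step), and then uses exactly your two comparative statics---the no-effort utility is unchanged in $r_1$ below the kink and moves by exactly $\mu$ per unit of $r_1$ above it, while $U^+_t$ moves by between $0$ and $\mu$ per unit---only phrased via fixed-strategy payoff comparisons rather than the envelope theorem. The one slip is the claim that the envelope step gives $g(\mu_1;r_1')\leq 0$ at $\mu_1=\bar{\mu}(r_1)$, which presumes $g(\mu_1;r_1)=0$ and can fail (e.g.\ when the agent strictly prefers effort at the prior, so $\bar{\mu}(r_1)$ equals the prior with slack); the fix stays inside your framework---apply the monotonicity of $g(\cdot\,;r)$ in $r$ to each path belief strictly above $\mu_1$, all of which satisfy $g<0$ at $r_1$ and lie above the kink for every $r''\in[r_1,r_1']$, so they remain outside the work set and $\bar{\mu}(r_1')\leq\mu_1$ follows.
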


In particular, the agent has incentives to exert effort initially 
if and only if $\prior \in [\underline{\mu}(r_1), \bar{\mu}(r_1)]$. \cref{lem:monotone_in_r1} implies that by increasing the reward $r_1$ for predicting state 1 correctly in the scoring rule, 
the agent has incentives to exert effort for longer ($\underline{\mu}(r_1)$ is smaller), 
but the agent has a weaker incentive to exert effort at time $0$ ($\bar{\posterior}(r_1)$ is smaller). 
Therefore, the optimal choice of $r_1$ is 
\begin{align*}
r^*_1 = \max\{r_1 : \bar{\posterior}(r_1) \geq \prior\}.
\end{align*}
Naturally, the kink of the V-shaped scoring rule induced by $r^*_1$ would not be located at the prior $\prior$, as illustrated in \Cref{fig:v-shaped_binding_incentive_at_boundary}. 
Our analysis also highlights the economic intuition behind this difference. 
In dynamic environments, the principal needs to balance the incentives between earlier and later periods. By skewing the kink of the V-shaped scoring rule away from the prior, 
the agent faces a stronger incentive to exert effort in future periods. 

Finally, our observation here also illustrates an interesting dynamic effect in our model: for long time horizons, there exist $\mu_1>\mu_2>\mu_3$ 
such that the agent can be incentivized to work when the no information belief would drift from (a) $\mu_1$ to $\mu_2$
or (b) $\mu_2$ to $\mu_3$, but not when this belief would drift from $\mu_1$ to $\mu_3$.

\paragraph{Single-signal}
A notable feature in the single-signal environment is that, although the effort-maximizing contract can be implemented as a static scoring rule, 
this scoring rule may not be V-shaped. 
Put differently, effort-maximizing scoring rules need not simply involve the agent guessing the state and being rewarded for a correct guess. It may be necessary to reward the agent \emph{even} when the guess is wrong. Consider an interpretation of the minimum reward across the two states as the ``base reward'' and the difference between the rewards as the ``bonus reward.'' From this perspective, 
our results indicate that it may be necessary to consider scoring rules where the base reward is strictly positive. 
This observation may seem counterintuitive, as providing a strictly positive base reward decreases the bonus reward for the agent since total rewards are constrained to the unit interval. A smaller bonus may subsequently lower the agent's incentive to exert effort.

However, the correct intuition is as follows: while providing a strictly positive base reward at time~$t$ decreases the agent's incentive to exert effort at time $t$, it increases the agent's incentive to exert effort at earlier times $t'<t$; indeed, the addition of the positive base reward leads the agent to anticipate higher rewards from exerting effort in cases where the terminal belief is in an intermediate range. 
This modification induces more effort if the agent's incentive constraint for exerting effort initially binds at time~0 but becomes slack at intermediate time $t\in(0, \stoptime_{\rewardMenu})$. 
As illustrated in \Cref{fig:additional_menu}, 
by implementing the effort-maximizing V-shaped scoring rule, the agent's incentive for exerting effort is binding only at the 
extreme time $0$ with belief $\prior = \bar{\posterior}(r_1)$ 
and time $\stoptime_{\rewardMenu}$
with belief $\posterior^{\no}_{\stoptime_{\rewardMenu}} = \underline{\posterior}(r_1)$. 
In this case, since the signals are not perfectly revealing, 
there may exist a time $t$ such that the agent's belief for receiving a right biased signal is $\posterior^{\rsignal}_t \in (\underline{\posterior}(r_1),\bar{\posterior}(r_1))$. 
By providing an additional menu option with a strictly positive base reward in the scoring rule to increase the agent's utility at beliefs $\posterior^{\rsignal}_t$ (e.g., the additional menu option $(z_0,z_1)$ illustrated in \Cref{fig:additional_menu}), 
the agent's incentive constraint for exerting effort at time $0$ is relaxed, and the contract thus provides the agent incentives to exert effort following more extreme prior beliefs without influencing the stopping belief $\underline{\posterior}(r_1)$.

\subsection{Non-invariant Environments}
\label{subsec:non-invariant}

Our model assumes that both the cost of acquiring information and the signal arrival probabilities when exerting effort are fixed over time. On the other hand, we can also allow for more general time-dependent cost functions. 
In particular, there exist settings where the cost of acquiring information is lower closer to the decision deadline, regardless of the previous efforts exerted by the agent. 
In these applications, given a dynamic contract, the best response of the agent may not be a stopping strategy. 
Scoring rules still implement maximal effort in this extension under one of three conditions in \cref{sec:static_is_optimal}---in the sense that given any dynamic contract $\rewardMenu$ and any best response of the agent, 
there exists another contract $\widehat{\rewardMenu}$ that can be implemented as a scoring rule, and the agent's best response in contract $\widehat{\rewardMenu}$ first order stochastic dominates his best response in contract~$\rewardMenu$.\footnote{Let $z_t, \hat{z}_t\in\{0,1\}$ be the effort decision of the agent given contracts $\rewardMenu,\widehat{\rewardMenu}$ respectively conditional on not receiving any Poisson signal before $t$.
We can show that given any time $t$, $\sum_{i\leq t} z_i \leq \sum_{i\leq t} \hat{z}_i$.}
Therefore, the information acquired under contract $\widehat{\rewardMenu}$ is always weakly Blackwell more informative compared to the information acquired under contract~$\rewardMenu$.

An important implication of extending our results to non-invariant costs is that they also apply to settings where the agent discounts future payoffs. Essentially, when the agent discounts future payoffs, it corresponds exactly to exponentially decreasing cost functions, as payments are only made after the state realization after $T$. Consequently, all our main results, such as the optimality of scoring rules in perfect-learning or single-signal environments, extend naturally to this setting.

We can similarly allow for time dependence in the informational environment. Specifically, we can allow for the arrival rates of signals at any time to depend on the amount of effort the agent has exerted until that point. That is, suppose that if the agent has exerted effort for $\tilde{t}$ units of time, then exerting effort produces a right-biased signal that arrives in state~$\theta$ with probability $\lambda_{\theta,\tilde{t}}^{\rsignal}$, and a left-biased signal with probability $\lambda_{\theta,\tilde{t}}^{\lsignal}$ (and no signal with complementary probability). While seemingly minor, this modification induces more richness in the set of possible terminal beliefs as a function of the effort history---for instance, if the terminal beliefs are always in the set $\{\underline{p}, \overline{p}\}$, despite drifting over time. 

Our proof techniques do not make use of the particular belief paths induced by constant arrival rates and hence extend to this case, with the minor exception that Theorem \ref{thm:single_source} requires $\mu_{t}^{\rsignal}$ to be weakly monotone as a function of time (a property that holds when the arrival rate is constant). Otherwise, as long as the parameters stay within each environment articulated in Section \ref{sec:static_is_optimal}, the proofs of these results extend unchanged.

\subsection{Value of Dynamic Contracts}

Our analysis compares contracts in terms of Blackwell informativeness. This is useful because it separates the incentive problem from any particular downstream application. At the same time, the payoff value of the additional informativeness generated by unrestricted advice depends on the decision environment. A small informational improvement may have little value in one application and a large value in another.

This point is particularly stark under the conditions of \cref{thm:dynamic optimal}. Let $\stoptime^{\mathrm S}$ denote the largest stopping time implementable by a scoring rule. By \cref{thm:dynamic optimal}, unrestricted advice can induce strictly more effort, so in particular the agent can be induced to work through time $\stoptime^{\mathrm S}+1$. Since beliefs drift toward state $0$ absent signal arrival and $\lsignal$ is left-biased, the posterior $\posterior_t^{\lsignal}$ is strictly decreasing in $t$. Hence
$\posterior_{\stoptime^{\mathrm S}+1}^{\lsignal}
<
\posterior_{\stoptime^{\mathrm S}}^{\lsignal}$.

Now choose any cutoff
\[
q\in\bigl(\posterior_{\stoptime^{\mathrm S}+1}^{\lsignal},
\posterior_{\stoptime^{\mathrm S}}^{\lsignal}\bigr),
\]
and consider a binary decision problem with actions $a_0$ and $a_1$ such that $a_0$ is optimal if and only if the posterior belief that $\theta=1$ is at most $q$. For example, one can take payoffs
\[
v(a_1,0)=v(a_1,1)=0,\qquad
v(a_0,0)=1,\qquad
v(a_0,1)=-\frac{1-q}{q}.
\]
Since $q<\posterior_{\stoptime^{\mathrm S}}^{\lsignal}<\prior$, the decision maker chooses $a_1$ without additional information. Under any scoring rule, the posterior never falls below $\posterior_{\stoptime^{\mathrm S}}^{\lsignal}$, so summarized advice never changes the decision and has value $0$. Under unrestricted advice, however, a left-biased signal arriving at time $\stoptime^{\mathrm S}+1$ occurs with strictly positive probability, and on that event, the posterior falls below $q$. The decision maker then switches to $a_0$, so unrestricted advice has strictly positive value.

Thus, the additional value of dynamic contracts depends on the decision environment. Under slow drift, the gap between $\posterior_{\stoptime^{\mathrm S}}^{\lsignal}$ and $\posterior_{\stoptime^{\mathrm S}+1}^{\lsignal}$ can be made arbitrarily small. However, a slight increase in informativeness can change the value of information from $0$ to strictly positive. In this sense, the multiplicative gain from unrestricted advice can be unbounded.

\subsection{Final Remarks}
\label{sub:conclusion}
We have articulated how dynamic rewards can expand the set of implementable strategies in a simple yet fundamentally dynamic information acquisition problem. The economic importance of contracting for information acquisition is self-evident, and most natural stories for why information acquisition is costly involve some dynamic element. Our goal has been to take such dynamics seriously, for learning technologies with a natural interpretation in terms of forecasters seeking a particular sought-after piece of falsifiable evidence, and under a class of contracts in line with past work on forecaster incentive provision.

We have shown that whether the decision maker benefits from a contracting environment that facilitates dynamic reporting depends on the nature of the dynamic learning process. Along the way, we discussed how the relevant properties of the information acquisition technology have natural interpretations in various settings of practical interest. As our focus is on contracting under a general class of mechanisms, a fundamental difficulty underlying our exercise is the lack of any natural structure (e.g., stationarity) under an arbitrary dynamic contract. Such assumptions are often critical in similar settings. Despite this fundamental challenge, we provided simple, economically meaningful conditions under which maximum effort is implementable by a scoring rule and identified conditions under which dynamic reporting is strictly necessary.

There are many natural avenues for future work. Empirically, our results show how learning technologies influence the benefits of time variation to rewards. While we focus on a simple forecasting problem, such questions may be of interest when eliciting other kinds of information beyond a prediction of a future event---for example, if the learning process itself is privately known by the expert \citep{chambers2021dynamic}.   More broadly, we view questions regarding whether or not simple contracts are limited in power relative to dynamic mechanisms as a worthwhile agenda overall. Any further insights on these questions would prove valuable toward understanding how dynamics influence mechanism design for information acquisition, for both theory and practice. 
\bibliographystyle{apalike}
\bibliography{ref}


\appendix
\section{Effort-Maximizing Dynamic Contracts}
\label{apx:optimal_dynamic}
\begin{proof}[Proof of \cref{thm:optimal_dynamic_menu}]
By \cref{lem:menu}, it is without loss to work with a shrinking menu representation. 
Because menus are shrinking, any reward used on path at date $t$ is also available at every earlier date, so it must satisfy
\begin{equation}
\util(\posterior_{t'}^\no,\reward)\leq \util_{t'}^\no,
\qquad \forall t'\leq t. \label{eq:proof_dynamic_ic}
\end{equation}

For each $t$, let $\hat{\reward}_t^\no$ maximize $\reward_0$ subject to
\[
\reward = (r_0,r_1) \in [0,1]^2
\qquad\text{and}\qquad
\util(\posterior_t^\no,\reward)=\util_t^\no.
\]
Then $\hat{\reward}_{t,0}^\no=1$ whenever $\hat{\reward}_{t,1}^\no>0$, so each canonical reward is either of the form $(1,b_t)$ or $(a_t,0)$.
Moreover, $\hat{\reward}_t^\no$ is obtained from $\reward_t^\no$ by shifting reward toward state $0$ while keeping utility at $\posterior_t^\no$ fixed, hence
\[
\util(\posterior,\hat{\reward}_t^\no)\leq \util(\posterior,\reward_t^\no)
\qquad \forall \posterior\geq \posterior_t^\no.
\]
Therefore, adding $\hat{\reward}_t^\no$ to all earlier menus does not increase the stopping utility of any earlier no-information type. 
Doing this for every $t$ preserves effort maximization, so without loss we may take the no-information rewards themselves to be canonical.

Now fix $t\leq t'$. Since $\reward_{t'}^\no\in \rewardMenu_t$, 
\[
\util(\posterior_t^\no,\reward_t^\no)\geq \util(\posterior_t^\no,\reward_{t'}^\no). \tag{A.1}
\]
Because both rewards are canonical, (A.1) implies $\reward_{t'}^\no\leq \reward_t^\no$ coordinatewise:
if $\reward_t^\no=(1,b_t)$, then any later canonical reward is either $(1,b_{t'})$ with $b_{t'}\leq b_t$ or $(a_{t'},0)$;
if $\reward_t^\no=(a_t,0)$, then (A.1) rules out any later reward of the form $(1,b)$, so necessarily $\reward_{t'}^\no=(a_{t'},0)$ with $a_{t'}\leq a_t$.
This proves Part 1.

Next fix $t\leq \stoptime_{\rewardMenu}$ and $\signal\in\signals$.
By \eqref{eq:proof_dynamic_ic}, any reward used after signal $\signal$ at date $t$ must satisfy
\[
\util(\posterior_{t'}^\no,\reward)\leq \util_{t'}^\no,
\qquad \forall t'\in\{0,\dots,t\}.
\]
Conversely, adding at date $t$ any reward satisfying these inequalities does not increase the utility from stopping at any earlier no-information belief.
Hence, without loss, $\reward_t^\signal$ solves
\[
\max_{\reward:\states\to[0,1]} \util(\posterior_t^\signal,\reward)
\qquad
\text{s.t.}\qquad
\util(\posterior_{t'}^\no,\reward)\leq \util_{t'}^\no
\quad \forall t'\in\{0,\dots,t\}. \tag{A.2}
\]

If $\signal$ is left-biased, then $\posterior_t^\signal\leq \posterior_t^\no$.
In the relaxed version of (A.2) that keeps only the constraint at $t$, the objective is maximized by shifting as much reward as possible toward state $0$.
By construction, this is exactly the canonical reward $\reward_t^\no$.
Since $\reward_t^\no$ also satisfies all earlier constraints, it solves (A.2), proving Part 2.

If $\signal$ is right-biased, then by definition of $G_t$,
\[
\util(\posterior_t^\signal,\reward_t^\signal)=G_t(\posterior_t^\signal).
\]
Each feasible reward induces an affine function $\posterior\mapsto \util(\posterior,\reward)$ lying below the no-information points $\{(\posterior_{t'}^\no,\util_{t'}^\no)\}_{t'=0}^t$.
Hence $G_t$ is the pointwise supremum of such affine functions, i.e., the greatest convex minorant generated by feasible rewards bounded in $[0,1]^2$.
Since the affine function induced by $\reward_t^\signal$ attains this supremum at $\posterior_t^\signal$, it is a supporting line to $G_t$ at that belief.
This proves Part 3.
\end{proof}

\section{Effort-Maximizing Contracts as Scoring Rules}
\label{apx:proofs_static_is_opt}

\subsection{Stationary Environment}
\begin{proof}[Proof of \cref{thm:stationary}]
For any contract $\rewardMenu$ with stopping time $\stoptime_{\rewardMenu}\in[0,T]$,
to show that there exists a static scoring rule $\score$
such that the agent has incentive to exert effort at least until time $\stoptime_{\rewardMenu}$ given static scoring rule $\score$, 
it is sufficient to show that there exists a static scoring rule $\score$
such that the agent has an incentive to exert effort at any continuation game $\cgame_t$ for any $t\in[0,\stoptime_{\rewardMenu}]$.

First note that to maximize the expected score difference for the continuation game at any time~$t$, 
it is sufficient to consider a static scoring rule. 
This is because at any time $t'\in[t,\stoptime_{\rewardMenu}]$, 
we can allow the agent to pick any menu option from time $t$ to $\stoptime_{\rewardMenu}$. 
This leads to a static scoring rule 
where the agent's expected utility at time $t$ for stopping effort immediately is not affected but the continuation utility weakly increases. 
Finally, by \cref{prop:static_moral_hazard}, the effort-maximizing static scoring rule that maximizes the expected score difference is the V-shaped scoring rule $\score$ with a kink at prior~$\prior$. 
Since the optimal scoring rule remains the same for all continuation games in stationary environments, this scoring rule implements the optimal dynamic efforts. 
\end{proof}

\subsection{Perfect-learning Environment}
\label{apx:perfect}

\begin{proof}[Proof of \cref{thm:perfect}]
As alluded to in the main text, the proof follows two steps: 

\medskip

\noindent \textbf{Step One}: We first show that, for any prior $\prior$ and any signal arrival probabilities $\lambda$, 
there exists an effort-maximizing contract $\rewardMenu$ with a sequence of menu options $\{\reward^{\signal}_t\}_{t\leq \stoptime_{\rewardMenu},\signal\in\signals}\cup\{\reward^{\no}_{\stoptime_{\rewardMenu}}\}$
such that $\reward^{\lsignal}_t = \reward^{\no}_{\stoptime_{\rewardMenu}} = (1,0)$
for any $t\leq \stoptime_{\rewardMenu}$.

For any contract $\rewardMenu$, by applying the menu representation in \cref{lem:menu},
let $\{\reward^{\signal}_t\}_{t\leq \stoptime_{\rewardMenu},\signal\in\signals}$
be the set of menu options for receiving Poisson signals
and let $\reward^{\no}_{\stoptime_{\rewardMenu}}=(z_0,z_1)$ be the menu option for not receiving any Poisson signal before the stopping time $\stoptime_{\rewardMenu}$.
Note that in the perfect-learning environment, it is without loss to assume that $\reward^{\lsignal}_{t,1} = 0$ for any $t\leq \stoptime_{\rewardMenu}$ since the posterior probability of state 1 is 0 after receiving a Poisson signal $\lsignal$. 
Now consider another contract $\widehat{\rewardMenu}$ with menu options
$\{\hat{\reward}^{\signal}_t\}_{t\leq \stoptime_{\rewardMenu},\signal\in\signals}$
and $(\hat{z}_0,\hat{z}_1)$, 
where 
\begin{align}
(\hat{z}_0,\hat{z}_1) = \argmax_{z,z'\in[0,1]} \, z 
\quad\text{s.t.}\quad 
\posterior^{\no}_{\stoptime_{\rewardMenu}} z' + (1-\posterior^{\no}_{\stoptime_{\rewardMenu}}) z = \util^{\no}_{\stoptime_{\rewardMenu}},
\label{eq:program for z}
\end{align}
and for any time $t \leq \stoptime_{\rewardMenu}$ and any signal $\signal\in\signals$, 
\begin{align*}
\hat{\reward}^{\signal}_t = \begin{cases}
\reward^{\signal}_t & \util(\posterior^{\signal}_t, \reward^{\signal}_t) \geq \util(\posterior^{\signal}_t, (\hat{z}_0,\hat{z}_1)) \\
(\hat{z}_0,\hat{z}_1) & \text{otherwise}.
\end{cases}
\end{align*}
Essentially, contract $\widehat{\rewardMenu}$ adjusts the reward function for no information belief $\posterior^{\no}_{\stoptime_{\rewardMenu}}$
such that the reward for state being $0$ weakly increases,
the reward for state being $1$ weakly decreases, 
and the expected reward remains unchanged. 
Moreover, at any time $t\leq \stoptime_{\rewardMenu}$, 
contract $\widehat{\rewardMenu}$ allows the agent to optionally choose the additional option of $(\hat{z}_0,\hat{z}_1)$ to maximize his expected payoff for receiving an informative signal at time $t$.

It is easy to verify that for any signal $\signal\in\signals$ and any time $t\leq \stoptime_{\rewardMenu}$, 
the expected utility of the agent for receiving an informative signal $\signal$ is weakly higher, 
and hence, at any time $t\leq \stoptime_{\rewardMenu}$, the continuation payoff of the agent for exerting effort until time $\stoptime_{\rewardMenu}$ weakly increases in contract $\widehat{\rewardMenu}$. 
Moreover, at any time $t$, the expected reward of the agent for not exerting effort at time $t$ with belief $\posterior^{\no}_{t-1}$
satisfies $\hat{\util}^{\no}_{t-1} \leq \util^{\no}_{t-1}$. 
This is because, by our construction, at any time $t\leq \stoptime_{\rewardMenu}$, 
fewer options are available to the agent in contract~$\widehat{\rewardMenu}$, except for the additional option of $(\hat{z}_0,\hat{z}_1)$. 
However, $\util(\posterior^{\no}_{t-1}, (z_0,z_1)) \geq \util(\posterior^{\no}_{t-1}, (\hat{z}_0,\hat{z}_1))$
since $\posterior^{\no}_{t-1} \geq \posterior^{\no}_{\stoptime_{\rewardMenu}}$
and both options $(z_0,z_1)$ and $(\hat{z}_0,\hat{z}_1)$ give the same expected reward for the posterior belief of $\posterior^{\no}_{\stoptime_{\rewardMenu}}$. 
Combining both observations, we have $\stoptime_{\widehat{\rewardMenu}} \geq \stoptime_{\rewardMenu}$ and $\widehat{\rewardMenu}$ is also an effort-maximizing contract.

Note that in optimization program \eqref{eq:program for z}, 
it is easy to verify that $\hat{z}_1 = 0$ if $\hat{z}_0 < 1$. 
If $\hat{z}_0 = 1$, in this case, at any time $t$,
by the incentive constraint of the agent for any belief $\posterior^{\lsignal}_t$, 
we must have $\hat{\reward}^{\lsignal}_{t, 0} =1$ as well. 
Therefore, the agent receives the maximum reward of~1 whenever he receives a left-biased signal. 
In this case, incentive compatibility implies that $\hat{z}_1\leq \reward^{\rsignal}_{t,1}$ for all $t\leq \stoptime_{\rewardMenu}$.
Therefore, we can also decrease $\hat{z}_1$ and $\reward^{\rsignal}_{t,1}$ for all $t\leq \stoptime_{\rewardMenu}$ 
by $\hat{z}_1$, which does not affect the agent's incentive for effort
and hence the effort-maximizing contract satisfies that $\reward^{\lsignal}_t = \reward^{\no}_{\stoptime_{\rewardMenu}} = (1,0)$
for any $t\leq \stoptime_{\rewardMenu}$. 

Next, we will focus on the case when $\hat{z}_0 < 1$ and hence $\hat{z}_1 = 0$. 
Now consider another contract $\bar{\rewardMenu}$ with menu options
$\{\bar{\reward}^{\signal}_t\}_{t\leq \stoptime_{\widehat{\rewardMenu}},\signal\in\signals}$
and $(\bar{z}_0,\bar{z}_1)$, 
where $(\bar{z}_0,\bar{z}_1) = (1, 0)$
and for any time $t \leq \stoptime_{\rewardMenu}$, $\bar{\reward}^{\rsignal}_t = \hat{\reward}^{\rsignal}_t$
and $\bar{\reward}^{\lsignal}_t = (1, 0)$. We show that this weakly improves the agent's incentive to exert effort until time $\stoptime_{\widehat{\rewardMenu}}$ for any $t\leq \stoptime_{\widehat{\rewardMenu}}$. 
Specifically, for any $t\leq \stoptime_{\widehat{\rewardMenu}}$, the increase in the no information payoff for the continuation game $\cgame_t$ is 
\begin{align*}
\bar{\util}^{\no}_{t-1} - \hat{\util}^{\no}_{t-1} 
\leq (1-\posterior^{\no}_{t-1})(1-\hat{\reward}^{\lsignal}_{t-1, 0}).
\end{align*}
This is because in contract $\bar{\rewardMenu}$, 
either the agent prefers the menu option $\reward^{\rsignal}_{t'}$ for some $t'\geq t-1$, in which case the reward difference is 0, 
or the agent prefers the menu option $(1,0)$, 
in which case the reward difference is at most $(1-\posterior^{\no}_{t-1})(1-\hat{\reward}^{\lsignal}_{t-1, 0})$
since one feasible option for the agent in contract $\widehat{\rewardMenu}$ is $\hat{\reward}^{\lsignal}_{t-1}$ with an expected reward at least $(1-\posterior^{\no}_{t-1})\hat{\reward}^{\lsignal}_{t-1, 0}$.
Moreover, for any time $t\leq \stoptime_{\widehat{\rewardMenu}}$, the increase in continuation payoff for exerting effort from $t$ until $\stoptime_{\widehat{\rewardMenu}}$ is at least 
$(1-\posterior^{\no}_{t-1})(1-\hat{\reward}^{\lsignal}_{t, 0})$.
This is because incentive constraints imply that $\hat{\reward}^{\lsignal}_{t, 0}$ must decrease as $t$ increases, 
which is due to the fact that in the perfect-learning environment, the posterior belief $\posterior^{\lsignal}_{t}$ assigns a probability of 1 to the state being 0 at any time $t$.
Therefore, when the state is $0$, the reward of the agent is deterministically $1$ in contract $\bar{\rewardMenu}$ 
and the reward of the agent is at most $\hat{\reward}^{\lsignal}_{t, 0}$ in contract $\widehat{\rewardMenu}$, 
implying that the difference in expected reward is at least $(1-\posterior^{\no}_{t-1})(1-\hat{\reward}^{\lsignal}_{t, 0})$.
Combining the above observations and observing that $(1-\posterior^{\no}_{t-1})(1-\hat{\reward}^{\lsignal}_{t, 0}) \geq (1-\posterior^{\no}_{t-1})(1-\hat{\reward}^{\lsignal}_{t-1, 0})$,
we have $\stoptime_{\bar{\rewardMenu}}\geq\stoptime_{\widehat{\rewardMenu}}$, 
and hence $\bar{\rewardMenu}$ is also effort-maximizing.

\medskip 

\noindent \textbf{Step Two}: We now show that we can replace all other menu options except $(1,0)$ with a \emph{single} menu option that the agent can select at any time. 
By the previous step, it suffices to focus on contract $\rewardMenu$ with a sequence of menu options $\{\reward^{\signal}_t\}_{t\leq \stoptime_{\rewardMenu},\signal\in\signals}\cup\{\reward^{\no}_{\stoptime_{\rewardMenu}}\}$
such that $\reward^{\lsignal}_t = \reward^{\no}_{\stoptime_{\rewardMenu}} = (1,0)$
for any $t\leq \stoptime_{\rewardMenu}$. 
In addition, since signals are perfectly revealing, 
it is without loss to assume that $\reward^{\rsignal}_{t,0} = 0$
and the incentive constraints imply that $\reward^{\rsignal}_{t,1}$
is weakly decreasing in $t$. 

Let $\hat{t}\in[0,\stoptime_{\rewardMenu}]$ be the minimum time such that an agent with no information belief $\posterior^{\no}_{\hat{t}}$ weakly prefers menu option $\reward^{\no}_{\stoptime_{\rewardMenu}}=(1,0)$ compared to $\reward^{\rsignal}_{\hat{t}}$. 
Since both $\posterior^{\no}_t$ and $\reward^{\rsignal}_{t,1}$ are weakly decreasing in~$t$, 
an agent with posterior belief $\posterior^{\no}_t$ weakly prefers 
$\reward^{\rsignal}_{t'}$ compared to $\reward^{\no}_{\stoptime_{\rewardMenu}}$ for any $t,t'< \hat{t}$,
and weakly prefers~$\reward^{\no}_{\stoptime_{\rewardMenu}}$ 
compared to $\reward^{\rsignal}_{t'}$ for any $t,t'\geq \hat{t}$. 
Now consider another contract $\widehat{\rewardMenu}$ that offers only two menu options, 
$\reward^{\no}_{\stoptime_{\rewardMenu}} = (1,0)$ and $\reward^{\rsignal}_{\hat{t}}$, at every time~$t\leq \stoptime_{\rewardMenu}$. 
Contract $\widehat{\rewardMenu}$ can be implemented as a V-shaped scoring rule with parameters $r_0=1$ and $r_1 = \reward^{\rsignal}_{\hat{t},1} \in[0,1]$. Moreover, at any time $t\leq \stoptime_{\rewardMenu}$,
\begin{itemize}
\item if $t \geq \hat{t}+1$, in the continuation game $\cgame_{t,\stoptime_{\rewardMenu}}$, the agent's utility for not exerting effort is the same in both contract $\rewardMenu$ and $\widehat{\rewardMenu}$ because the agent with no information belief $\posterior^{\no}_{t-1}$ will choose the same menu option $\reward^{\no}_{\stoptime_{\rewardMenu}}$. 
However, the agent's utility for exerting effort is weakly higher in contract $\widehat{\rewardMenu}$ since the reward $\reward^{\rsignal}_{t,1}$ from receiving a right-biased signal at time $t$ weakly decreases in $t$. 

\item if $t\leq \hat{t}$, in the continuation game $\cgame_{t,\stoptime_{\rewardMenu}}$, 
by changing the contract from $\rewardMenu$ to $\widehat{\rewardMenu}$, 
the decrease in agent's utility for not exerting effort is exactly $\posterior^{\no}_{t-1}(\reward^{\rsignal}_{t-1, 1} - \reward^{\rsignal}_{\hat{t}, 1})$
by changing the menu option for no information belief $\posterior^{\no}_{t-1}$ from $\reward^{\rsignal}_{t-1}$ to~$\reward^{\rsignal}_{\hat{t}}$. 
However, the decrease in the agent's utility for exerting effort in $\cgame_{t,\stoptime_{\rewardMenu}}$ is at most $\posterior^{\no}_{t-1}(\reward^{\rsignal}_{t, 1} - \reward^{\rsignal}_{\hat{t}, 1})$ since the decrease in reward for receiving a right-biased signal $\rsignal$ is at most $\reward^{\rsignal}_{t, 1} - \reward^{\rsignal}_{\hat{t}, 1}$ and it only occurs when the state is $1$. 
\end{itemize}
Therefore, given contract $\widehat{\rewardMenu}$, the agent has stronger incentives to exert effort in all continuation games $\cgame_{t,\stoptime_{\rewardMenu}}$ with $t\leq \stoptime_{\rewardMenu}$, 
which implies that $\stoptime_{\widehat{\rewardMenu}} \geq \stoptime_{\rewardMenu}$
and hence contract $\widehat{\rewardMenu}$ is also effort-maximizing.
\end{proof}

\begin{proof}[Proof of \cref{lem:monotone_in_r1}]
Note that it is easy to verify that if there exists a belief such that the agent is incentivized to exert effort,
the intersection belief $\underline{\posterior}(r_1)$ is such that the agent with belief $\underline{\posterior}(r_1)$ would prefer menu option $(1,0)$ to $(0, r_1)$
and $\bar{\posterior}(r_1)$ is such that the agent with belief $\bar{\posterior}(r_1)$ would prefer menu option $(0, r_1)$ to $(1,0)$.

Consider the case of decreasing the reward parameter from $r_1 = z$ to $r_1 = z'$ for $0\leq z'< z\leq 1$. 
The agent's utility for not exerting effort given menu option $(1,0)$ remains unchanged, 
but the agent's utility for exerting effort in at least one period decreases. 
Therefore, $\underline{\posterior}(r_1)$ weakly increases. 
Moreover, given posterior belief $\posterior^{\no}_{t-1}$, 
the agent's utility for not exerting effort given menu option $(0, r_1)$ decreases by $\posterior^{\no}_{t-1}(z-z')$, 
while the agent's utility for exerting effort in at least one period decreases by at most $\posterior^{\no}_{t-1}(z-z')$ since the reward decrease can only occur when the state is $1$. 
Therefore, $\bar{\posterior}(r_1)$ also weakly increases.
\end{proof}

\subsection{Single-signal Environment}
\label{apx:single_source}

\begin{proof}[Proof of \cref{thm:single_source}]
As mentioned in the main text, the proof proceeds in two steps: 

\medskip

\noindent \textbf{Step One:} We first show that an effort-maximizing contract exists with the no-information utility~$\util^{\no}_t$ convex in $\posterior^{\no}_t$. For any contract $\rewardMenu$, let $\underline{\util}_{t}(\posterior)$ be the convex hull of the no information payoff $\util^{\no}_{t'}$ for $t'\leq t$
by viewing $\util^{\no}_{t'}$ as a function of $\posterior^{\no}_{t'}$. 
Consider an effort-maximizing contract $\rewardMenu$ with the following selection:
\begin{enumerate}
\item maximizes the time $\bar{t}$ such that $\underline{\util}_{\bar{t}-1}(\posterior^{\no}_{t}) = \util^{\no}_{t}$ for any time $t\leq \bar{t}-1$;
\item conditional on maximizing $\bar{t}$, selecting the one that maximizes the weighted average no information payoff after time $\bar{t}$, i.e., 
$\sum_{i\geq 0} e^{-i} \cdot \util^{\no}_{\bar{t}+i}$. \label{criterion:util}
\end{enumerate}
Note that the exponential weight $e^{-i}$ is purely for the tie breaking selection and ensures a finite sum of the no information payoff. 
The existence of an effort-maximizing contract given such a selection rule can be shown using standard arguments since, recalling that we have a discrete-time model, the set of effort-maximizing contracts that satisfy the first criterion is compact 
and the objective in the second selection criterion is continuous. 
Let~$\stoptime_{\rewardMenu}$ be the stopping time of the agent for contract $\rewardMenu$. 
We will show that $\bar{t} = \stoptime_{\rewardMenu}+1$. 

First, we observe that it cannot be the case that $\bar{t}=\stoptime_{\rewardMenu}$. 
This is because the agent does not have incentives to exert effort after time $\stoptime_{\rewardMenu}$. 
By increasing the agent's utility $\util^{\no}_{\bar{t}}$ in this case, we can restore the convexity of $\util^{\no}_t$ at $t=\bar{t}$ without violating the effort incentives. 

Now, suppose by contradiction we have $\bar{t} \leq \stoptime_{\rewardMenu} - 1$. 
At any time $t\leq \stoptime_{\rewardMenu}$, 
recall that $\cgame_{t}$ is the continuation game at time $t$ with prior belief $\posterior^{\no}_{t-1}$
such that the agent's utility for not exerting effort is $\util^{\no}_{t-1}$
and the agent's utility for exerting effort in $\cgame_{t}$
is 
\begin{align*}
U_t \triangleq \rbr{1-\sum_{\signal\in\signals}F^{\signal}_t(\stoptime_{\rewardMenu})}
\cdot u(\posterior^{\no}_{\stoptime_{\rewardMenu}}, \reward^{\no}_{\stoptime_{\rewardMenu}})
+ \sum_{t'=t}^{\stoptime_{\rewardMenu}} \sum_{\signal\in\signals} f^{\signal}_t(t') \cdot u(\posterior^{\signal}_{t'}, \reward^{\signal}_{t'}).
\end{align*}
Note that $U_{t}\geq\util_{t-1}^{\no}$ for any $t\leq \stoptime_{\rewardMenu}$. 
We first show that the equality must hold at time $\bar{t}+1$,
i.e., $U_{\bar{t}+1}=\util_{\bar{t}}^{\no}$. 
Let $\bar{u}_t(\posterior)$ be the upper bound on the expected reward at any belief $\posterior$ at time $t$ given that no Poisson signal has arrived before $t$.
Specifically, 
\begin{align*}
\bar{u}_t(\posterior) = \max_{z_0,z_1\in[0,1]} \posterior z_1 + (1-\posterior)z_0
\quad\text{s.t.}\quad
\posterior^{\no}_{t'} z_1 + (1-\posterior^{\no}_{t'})z_0\leq u^{\no}_{t'}, 
\forall t'\leq t.
\end{align*}
It is easy to verify that function $\bar{u}_t(\posterior)$ is convex in $\posterior$ for all $t$
and $\bar{u}_{t}(\posterior^{\no}_{t'})$ is an upper bound on the no information utility $u^{\no}_{t'}$ for all $t' > t$.
Moreover, for any $\posterior \leq \posterior^{\no}_{t}$, 
$\bar{u}_t(\posterior)$ is a linear function in $\posterior$. 
The reward function $\bar{u}_t(\posterior)$ for $t=\bar{t}-1$
and $\posterior \leq \posterior^{\no}_{\bar{t}}$
is illustrated in \Cref{fig:overlapping_belief_discrete} as the blue straight line. 

Since the no information utility is not convex at time $t=\bar{t}$, 
we have $\bar{u}_{\bar{t}-1}(\posterior^{\no}_{\bar{t}}) > u^{\no}_{\bar{t}}$.
In this case, if $U_{\bar{t}+1}>\util_{\bar{t}}^{\no}$, 
by increasing $\util_{\bar{t}}^{\no}$
to $\min\{U_{\bar{t}+1}, \bar{u}_{\bar{t}-1}(\posterior^{\no}_{\bar{t}})\}$, 
the agent's incentive for exerting effort is not violated. 
Moreover, selection rule \eqref{criterion:util} of maximizing the no information utility after time $\bar{t}$ is violated, a contradiction. 
Therefore, we can focus on the situation where the agent's incentive for exerting effort at time $\bar{t}+1$ is binding.

\begin{figure}
\centering
\begin{tikzpicture}[xscale=1,yscale=1]
\hspace{-10pt}
\draw [<->] (0,3.1) -- (0,0) -- (7.3,0);

\draw [thick] plot [smooth, tension=0.7] coordinates { (1, 2) (4, 1.5) (6, 3)};

\draw [dashed] (0.5, 1) -- (4, 1.5);
\draw [dotted] (0.5, 0) -- (0.5, 1);
\draw (0.4, -0.4) node {$\mu^{\no}_{\bar{t}}$};

\draw [dotted] (4, 0) -- (4, 1.5);
\draw (4, -0.4) node {$\mu^{\no}_{t^*}$};

\draw [dotted] (2.5, 0) -- (2.5, 1.6);
\draw (2.5, -0.4) node {$\mu^{\rsignal}_{\bar{t}}$};
\draw [red] (0.5, 1) -- (2.5, 1.6);

\draw [dotted] (1, 0) -- (1, 2);
\draw (1.1, -0.4) node {$\mu^{\no}_{\bar{t}-1}$};

\draw (7, -0.3) node {$\mu$};

\draw [blue] (0, 2.32) -- (1, 2);
\draw [dotted] (1, 2) -- (7, 0);

\end{tikzpicture}
\caption{\footnotesize This figure illustrates the case when $\posterior^{\rsignal}_{\bar{t}} \leq \posterior^{\no}_{t^*}$.
The black curve is the function $\underline{\util}_{\bar{t}}(\posterior)$,
blue line is the function $\bar{\util}(\posterior)$
and the red line is the function $y(\posterior)$. }
\label{fig:overlapping_belief_discrete}
\end{figure}

By the construction of $\rewardMenu$, 
there exists $t\leq \bar{t}$ such that $\underline{\util}_{\bar{t}}(\posterior^{\no}_t) < \util^{\no}_t$. 
Let $t^*$ be the maximum time such that $\underline{\util}_{\bar{t}}(\posterior^{\no}_{t^*})=\util^{\no}_{t^*}$.
That is, $\posterior^{\no}_{t^*}$ is the tangent point such that $u^{\no}_t$ coincides with its convex hull. 
See \Cref{fig:overlapping_belief_discrete} for an illustration.
We consider two cases separately. 
\begin{itemize}
\item $\posterior^{\no}_{t^*} \geq \posterior^{\rsignal}_{\bar{t}}$.
In this case, let $y(\posterior)$ be a linear function of posterior $\posterior$
such that $y(\posterior^{\no}_{\bar{t}}) = \util^{\no}_{\bar{t}}$
and $y(\posterior^{\rsignal}_{\bar{t}}) = \underline{\util}(\posterior^{\rsignal}_{\bar{t}})$. 
Function $y$ is illustrated in \Cref{fig:overlapping_belief_discrete} as the red line. 
Note that in this case, we have $y(\posterior^{\no}_{\bar{t}-1}) < \underline{\util}_{\bar{t}}(\posterior^{\no}_{\bar{t}-1})= \util^{\no}_{\bar{t}-1}$. 
Moreover, $y(\posterior^{\no}_{\bar{t}-1})$ is the maximum continuation payoff of the agent for exerting effort at time $\bar{t}$ given belief $\posterior^{\no}_{\bar{t}-1}$. 
This is because, by exerting effort, either the agent receives a Poisson signal $\rsignal$ at time $\bar{t}$, which leads to posterior belief $\posterior^{\rsignal}_{\bar{t}}$ with expected payoff 
$\underline{\util}(\posterior^{\rsignal}_{\bar{t}}) = y(\posterior^{\rsignal}_{\bar{t}})$, 
or the agent does not receive a Poisson signal, 
which leads to belief drift to $\posterior^{\no}_{\bar{t}}$, 
with the optimal continuation payoff being $U_{\bar{t}+1} = \util^{\no}_{\bar{t}} = y(\posterior^{\no}_{\bar{t}})$. 
However, $y(\posterior^{\no}_{\bar{t}-1}) < \util^{\no}_{\bar{t}-1}$
implies that the agent has a strict incentive not to exert effort at time $\bar{t}$, a contradiction.

\item $\posterior^{\no}_{t^*} < \posterior^{\rsignal}_{\bar{t}}$. 
In this case, consider another contract $\bar{\rewardMenu}$ 
such that the no information utility in contract $\bar{\rewardMenu}$
is $\util^{\no}_{t; \bar{\rewardMenu}} = \underline{\util}(\posterior^{\no}_t)$ for any $t\leq \bar{t}$. 
Note that in contract $\bar{\rewardMenu}$, the expected reward of the agent at any time $t$ for receiving a Poisson signal is the same as in contract $\rewardMenu$, 
while the expected reward for stopping when not receiving Poisson signals weakly decreases. 
Therefore, contract $\bar{\rewardMenu}$ is also an effort-maximizing contract. 
However, the time such that the no information payoff is a convex function is strictly larger in $\bar{\rewardMenu}$, contradicting our selection rule for $\rewardMenu$. 
\end{itemize}

Therefore, we have $\bar{t} = \stoptime_{\rewardMenu}+1$ and the no information utility of the agent is a convex function. 

\medskip
\noindent \textbf{Step Two:} We now show that we can ``flatten'' rewards to strengthen the agent's incentives to exert effort in case the reward constraint is violated. By Step One, there exists a contract $\rewardMenu$ with a sequence of menu options $\{\reward^s_t\}_{s\in S,t\leq \stoptime_{\rewardMenu}}\cup\{\reward^{\no}_{\stoptime_{\rewardMenu}}\}$ in which the no information payoff is convex in the no information belief. 
If $\posterior^{\no}_t < u^{\no}_t$ for all $t\leq \stoptime_{\rewardMenu}$, 
let $\hat{z}_1 = 1$ 
and let $\hat{z}_0\leq 1$ be the maximum reward such that 
$\posterior^{\no}_t + \hat{z}_0(1-\posterior^{\no}_t) \leq u^{\no}_t$ for all $t\leq \stoptime_{\rewardMenu}$. 
Otherwise, let $\hat{z}_0=0$ and let $\hat{z}_1\leq 1$ be the maximum reward such that $\hat{z}_1\cdot \posterior^{\no}_t \leq u^{\no}_t$ for all $t\leq \stoptime_{\rewardMenu}$.
Essentially, the straight line $(\hat{z}_0,\hat{z}_1)$ is tangent to the agent's utility curve for not receiving Poisson signals subject to the reward bound. 
Let $\hat{t}$ be the time corresponding to the rightmost tangent point.
See \Cref{fig:convex_util_belief} for an illustration. 

\begin{figure}
\centering
\begin{tikzpicture}[xscale=0.9,yscale=0.9]
\draw [<->] (0,4.2) -- (0,0) -- (8.2,0);

\draw [thick] (0.5,1.9375) -- (1, 1.875);
\draw [thick] plot [smooth, tension=0.5] coordinates { (1, 1.875) (1.9,1.85) (3, 2.08) (3.6, 2.6) (4, 3.5)};

\draw [thick] plot [smooth, tension=0.7] coordinates { (5, 2) (5.7,3) (6.2,3.1) (7, 3.8) (7.5, 3.85)};

\draw [dotted] (6.2, 0) -- (6.2,3.1);
\draw (6.2, -0.4) node {$\posterior^{\rsignal}_{\hat{t}}$};

\draw [dotted] (2.9, 0) -- (2.9,2);
\draw (2.9, -0.4) node {$\posterior^{\no}_{\hat{t}}$};
\draw [dotted] (8, 0) -- (8,4);
\draw (8, -0.4) node {$1$};
\draw (0, -0.4) node {$0$};
\draw [red] (0,2) -- (2.14634,1.7317);
\draw [red] (2.14634,1.7317) -- (8, 4);

\draw [red, dashed] (2.14634,1.7317) -- (8, 1);
\draw [red, dashed] (0,0.9) -- (2.14634,1.7317);

\draw [dotted] (0, 1) -- (8,1);
\draw (-0.5, 1.2) node {$r^{\no}_{\stoptime_{\contract},1}$};
\draw (-0.5, 2) node {$r^{\no}_{\stoptime_{\contract},0}$};
\draw [dotted] (0, 4) -- (8,4);
\draw (-0.3, 4) node {$\hat{z}_1$};
\draw (-0.3, 0.7) node {$\hat{z}_0$};
\end{tikzpicture}
\caption{\footnotesize The black solid curves are the agent's expected utilities for not exerting effort as a function of his belief at any time $t$. The left curve is the expected utility for beliefs without receiving Poisson signals, and the right curve is the one receiving the Poisson signal $\rsignal$. 
}
\label{fig:convex_util_belief}
\end{figure}

Let $\underline{u}(\posterior)$ be the function that coincides with $u^{\no}_t$ for $\posterior \leq \posterior^{\no}_{\hat{t}}$
and $\underline{u}(\posterior) = (\hat{z}_1-\hat{z}_0)\posterior + \hat{z}_0$. 
Note that $\underline{u}$ is convex. 
Consider another contract $\widehat{\rewardMenu}$ that is implemented by the scoring rule 
$\score(\posterior,\state) = \underline{u}(\posterior) + \xi(\posterior)(\state - \posterior)$ for all $\posterior\in [0,1]$ and $\state\in\{0,1\}$, 
where $\xi(\posterior)$ is a subgradient of~$\underline{u}$. 
It is easy to verify that the implemented scoring rule satisfies the bounded constraint on rewards. 
Next, we show that $\stoptime_{\widehat{\rewardMenu}} \geq \stoptime_\rewardMenu$ and hence contract $\widehat{\rewardMenu}$ must also be effort-maximizing, which concludes the proof of \cref{thm:single_source}. 

In any continuation game $\cgame_{t}$, recall that $u^{\no}_{t-1}$ is the utility of the agent for not exerting effort 
and $U_t$ is the utility of the agent for exerting effort
given contract $\rewardMenu$. 
For any time $t\leq \stoptime_{\rewardMenu}$, the agent has incentive to exert effort at time $t$ given contract $\rewardMenu$
implies that $u^{\no}_{t-1} \leq U_t$. 
Given contract $\widehat{\rewardMenu}$, we similarly define $\hat{u}^{\no}_{t-1}$ and $\hat{U}_t$
and show that for any time $t\leq \stoptime_{\rewardMenu}$, 
$U_t - \hat{U}_t \leq u^{\no}_{t-1}- \hat{u}^{\no}_{t-1}$.
This immediately implies that the agent also has incentive to exert effort at any time $t\leq \stoptime_{\rewardMenu}$ given contract $\widehat{\rewardMenu}$
and hence $\stoptime_{\widehat{\rewardMenu}} \geq \stoptime_\rewardMenu$. 

Our analysis for showing that $U_t - \hat{U}_t \leq u^{\no}_{t-1} - \hat{u}^{\no}_{t-1}$ is divided into two cases. 
\begin{enumerate}[{Case} 1:]
\item $t\geq \hat{t}$. 
In this case, since $u^{\no}_{t-1}\geq \hat{u}^{\no}_{t-1}$ for any $t\leq \stoptime_{\rewardMenu}$ by the construction of contract~$\widehat{\rewardMenu}$, 
it is sufficient to show that $\hat{U}_t \geq U_t$
for any $t\in [\hat{t},\stoptime_{\rewardMenu}]$. 
We first show that for any $t\in [\hat{t},\stoptime_{\rewardMenu}]$, 
if $\posterior^{\rsignal}_t\leq \posterior^{\no}_{\hat{t}}$, 
we must have $\underline{u}(\posterior^{\rsignal}_t) \geq \util^{\rsignal}_t$ in order to satisfy the dynamic incentive constraint in contract $\rewardMenu$. 
Next, we focus on the case where $\posterior^{\rsignal}_t> \posterior^{\no}_{\hat{t}}$
and show that 
$\hat{u}^{\rsignal}_t = \posterior^{\rsignal}_t \hat{z}_1 + (1-\posterior^{\rsignal}_t)\hat{z}_0 \geq u^{\rsignal}_t$. 
We prove this by contradiction. 
Suppose that $u^{\rsignal}_t > \posterior^{\rsignal}_t \hat{z}_1 + (1-\posterior^{\rsignal}_t)\hat{z}_0$. 
Recall that $(\reward^{\rsignal}_{t,0}, \reward^{\rsignal}_{t,1})$ are the options offered to the agent at time $t$ that attain expected utility $u^{\rsignal}_t$ under belief $\posterior^{\rsignal}_t$.
Moreover, in our construction, either $\hat{z}_0 = 0$, or $\hat{z}_1=1$, or both equalities hold. 
Therefore, the bounded constraints $\reward^{\rsignal}_{t,0},\reward^{\rsignal}_{t,1}\in[0,1]$
and the fact that agent with belief $\posterior^{\rsignal}_t$
prefers $(\reward^{\rsignal}_{t,0},\reward^{\rsignal}_{t,1})$ over $(\hat{z}_0,\hat{z}_1)$
imply that $\reward^{\rsignal}_{t,0} \geq \hat{z}_0.$
\begin{figure}
\centering
\begin{tikzpicture}[xscale=0.9,yscale=0.9]
\draw [<->] (0,4.2) -- (0,0) -- (8.2,0);

\draw [thick] (0.5,1.9375) -- (1, 1.875);
\draw [thick] plot [smooth, tension=0.5] coordinates { (1, 1.875) (1.9,1.85) (3, 2.08) (3.6, 2.6) (4, 3.5)};

\draw [thick] plot [smooth, tension=0.7] coordinates { (5, 2) (5.7,3) (6.2,3.1) (7, 3.8) (7.5, 3.85)};

\draw [dotted] (6.2, 0) -- (6.2,3.45);
\draw [blue] (0, 2.2) -- (8,3.8);
\filldraw [blue] (6.2,3.45) circle (1pt);
\draw (6.2, -0.4) node {$\posterior^{\rsignal}_{\hat{t}}$};

\draw [dotted] (2.9, 0) -- (2.9,2);
\draw (2.9, -0.4) node {$\posterior^{\no}_{\hat{t}}$};
\draw [dotted] (8, 0) -- (8,4);
\draw (8, -0.4) node {$1$};
\draw (0, -0.4) node {$0$};
\draw [red] (0,2) -- (2.14634,1.7317);
\draw [red] (2.14634,1.7317) -- (8, 4);

\draw [red, dashed] (2.14634,1.7317) -- (8, 1);
\draw [red, dashed] (0,0.9) -- (2.14634,1.7317);

\draw [dotted] (0, 4) -- (8,4);
\draw (-0.3, 4) node {$\hat{z}_1$};
\draw (-0.3, 0.9) node {$\hat{z}_0$};
\end{tikzpicture}
\caption{\footnotesize The blue line is the expected utility of the agent for choosing the menu option $(\reward^{\rsignal}_{t,0},\reward^{\rsignal}_{t,1})$ if the utility at belief $\posterior^{\rsignal}_{\hat{t}}$ is higher than the red line.}
\label{fig:low_continuation_util}
\end{figure}
See \Cref{fig:low_continuation_util} for an illustration.
Since $\posterior^{\no}_{\hat{t}} < \posterior^{\rsignal}_t$, 
this implies that the agent's utility at belief $\posterior^{\no}_{\hat{t}}$ given option $(\reward^{\rsignal}_{t,0}, \reward^{\rsignal}_{t,1})$ is strictly larger than his utility under $(\hat{z}_0,\hat{z}_1)$, 
i.e., 
\begin{align*}
\posterior^{\no}_{\hat{t}} \reward^{\rsignal}_{t,1} + (1-\posterior^{\no}_{\hat{t}})\reward^{\rsignal}_{t,0}
> \posterior^{\no}_{\hat{t}} \hat{z}_1 + (1-\posterior^{\no}_{\hat{t}})\hat{z}_0
= u^{\no}_{\hat{t}}.
\end{align*}
However, option $(\reward^{\rsignal}_{t,0}, \reward^{\rsignal}_{t,1})$ is a feasible choice for the agent 
at time $\hat{t}$ in dynamic contract~$\rewardMenu$ since $t\geq \hat{t}$, 
which implies that $\posterior^{\no}_{\hat{t}} \reward^{\rsignal}_{t,1} + (1-\posterior^{\no}_{\hat{t}})\reward^{\rsignal}_{t,0}
\leq u^{\no}_{\hat{t}}$.
This leads to a contradiction.

Finally, for $t\in [\hat{t},\stoptime_{\rewardMenu}]$, conditional on the event that the informative signal did not arrive at any time before $t$, 
since the agent expected utility given contract~$\widehat{\rewardMenu}$ is weakly higher compared to contract~$\rewardMenu$ 
given any arrival time of the Poisson signal, 
taking the expectation we have $\hat{U}_t \geq U_t$.

\item $t < \hat{t}$. 
In this case, the continuation value for both stopping effort immediately and exerting effort until time $\stoptime_{\rewardMenu}$ weakly decreases. 
However, we will show that the expected decrease for stopping effort is weakly higher.
For any $t<\hat{t}$, let $\hat{\reward}_t$ be the reward such that $u^{\no}_t = \posterior^{\no}_t \hat{\reward}_t + (1-\posterior^{\no}_t) \hat{z}_0$.
Note that $\hat{\reward}_t \geq \hat{z}_1$ and it is possible that $\hat{\reward}_t \geq 1$. 
The construction of $\hat{\reward}_t$ is only used in the intermediate analysis, not in the constructed scoring rules. 
Let $\tilde{u}_t(\posterior) \triangleq \posterior \hat{\reward}_t + (1-\posterior) \hat{z}_0$ be the expected utility of the agent for choosing option $(\hat{z}_0, \hat{\reward}_t)$ given belief $\posterior$.
This is illustrated in \cref{fig:larger_diff}.

By construction, the expected utility decrease for not exerting effort in $\cgame_t$ is 
\begin{align*}
u^{\no}_{t-1} - \hat{u}^{\no}_{t-1} 
= \tilde{u}_{t-1}(\posterior^{\no}_{t-1}) - \hat{u}^{\no}_{t-1} 
= \posterior^{\no}_{t-1} (\hat{\reward}_{t-1} - \hat{z}_1).
\end{align*}
Next, observe that for any time $t'\in [t,\stoptime_{\rewardMenu}]$, $u^{\rsignal}_{t'} \leq \tilde{u}_t(\posterior^{\rsignal}_{t'})$. 
This argument is identical to the proof in Case 1, and hence omitted here. 
Therefore, the expected utility decrease for exerting effort until $\stoptime_{\rewardMenu}$ is 
\begin{align*}
U_t - \hat{U}_t 
&= \sum_{t'=t}^{\stoptime_{\rewardMenu}}
\rbr{u^{\rsignal}_{t'} - \hat{u}^{\rsignal}_{t'}} \cdot f^{\rsignal}_t(t')
\leq \sum_{t'=t}^{\stoptime_{\rewardMenu}}\rbr{\tilde{u}_t(\posterior^{\rsignal}_{t'}) - \hat{u}^{\rsignal}_{t'}} \cdot f^{\rsignal}_t(t')\\
&= (\hat{\reward}_t - \hat{z}_1) \cdot \sum_{t'=t}^{\stoptime_{\rewardMenu}}\posterior^{\rsignal}_{t'} \cdot f^{\rsignal}_t(t')
\leq (\hat{\reward}_t - \hat{z}_1)\cdot \posterior^{\no}_{t-1}
\leq (\hat{\reward}_{t-1} - \hat{z}_1)\cdot \posterior^{\no}_{t-1}
\end{align*}
where the second inequality holds by Bayesian plausibility
and the last inequality holds since the no information belief drifts towards state $0$. 
Combining the inequalities, we have $U_t - \hat{U}_t \leq u^{\no}_{t-1} - \hat{u}^{\no}_{t-1}$. 

\begin{figure}[H]
\centering
\begin{tikzpicture}[xscale=0.9,yscale=0.9]
\draw [<->] (0,5.2) -- (0,0) -- (8.2,0);

\draw [thick] (0.5,1.9375) -- (1, 1.875);
\draw [thick] plot [smooth, tension=0.5] coordinates { (1, 1.875) (1.9,1.85) (3, 2.08) (3.6, 2.6) (4, 3.5)};

\draw [thick] plot [smooth, tension=0.7] coordinates { (5, 2) (5.7,3) (6.2,3.1) (7, 3.8) (7.5, 3.85)};

\draw [dotted] (6.2, 0) -- (6.2,3.1);
\draw (6.2, -0.4) node {$\posterior^{\rsignal}_{\hat{t}}$};

\draw [dotted] (2.9, 0) -- (2.9,2);
\draw (2.78, -0.4) node {$\posterior^{\no}_{\hat{t}}$};
\draw [dotted] (8, 0) -- (8,5);
\draw (8, -0.4) node {$1$};
\draw (0, -0.4) node {$0$};
\draw [red] (0,2) -- (2.14634,1.7317);
\draw [red] (2.14634,1.7317) -- (8, 4);

\draw [red, dashed] (2.14634,1.7317) -- (8, 1);
\draw [red, dashed] (0,0.9) -- (2.14634,1.7317);

\draw [dotted] (0, 4) -- (8,4);
\draw (-0.3, 4) node {$\hat{z}_1$};
\draw (-0.3, 0.9) node {$\hat{z}_0$};

\draw [blue] (0, 0.9) -- (8,5);
\draw [dotted] (3.22, 0) -- (3.22,2.55);
\draw (3.38, -0.4) node {$\posterior^{\no}_t$};
\draw [dotted] (7, 0) -- (7,3.8);
\draw (7, -0.4) node {$\posterior^{\rsignal}_t$};

\draw [dotted] (0, 5) -- (8, 5);
\draw (-0.3, 5) node {$\hat{r}_t$};

\end{tikzpicture}
\caption{\footnotesize The blue line is the utility function $\tilde{u}_t$, which serves as an upper bound on the utility $u^{\rsignal}_{t'}$ for any $t'\geq t$.}
\label{fig:larger_diff}
\end{figure}
\end{enumerate}

Combining the above two cases, we have $U_t - \hat{U}_t \leq u^{\no}_{t-1} - \hat{u}^{\no}_{t-1}$ for any $t\leq \stoptime_{\rewardMenu}$. 
Since the agent's optimal effort strategy is to stop at time $\stoptime_{\rewardMenu}$ given contract $\rewardMenu$, 
this implies that at any time $t\leq \stoptime_{\rewardMenu}$, if the agent has not received any informative signal by time $t$, 
the agent also has incentive to exert effort until time $\stoptime_{\rewardMenu}$ 
given contract $\widehat{\rewardMenu}$ that can be implemented as a scoring rule.
\end{proof}

\newpage{}
\begin{center}
{\Large{}{}{}Online Appendix for ``Incentivizing Forecasters to Learn: Summarized vs. Unrestricted Advice''}{\Large\par}
\par\end{center}

\begin{center}
{\large{}{}{}Yingkai Li and Jonathan Libgober}{\large\par}
\par\end{center}

\global\long\def\thesection{OA \arabic{section}}%
\global\long\def\thedefn{OA\arabic{defn}}%
\global\long\def\thelem{OA\arabic{lem}}%
\global\long\def\theprop{OA\arabic{prop}}%
\global\long\def\thecor{OA\arabic{cor}}%
\global\long\def\theassumption{OA\arabic{assumption}}%
\setcounter{section}{0}
\setcounter{page}{1}

\section{Optimality of Dynamic Contracts}
\label{apx:multi-source}
\begin{proof}[Proof of \cref{lem:max_stop_time}]
If $T \leq T_{\lambda,\prior,c}$, the lemma holds trivially. 
Next we focus on the case $T > T_{\lambda,\prior,c}$. 

Suppose there exists a contract $\rewardMenu$ such that $\stoptime_{\rewardMenu} > T_{\lambda,\prior,c}$. 
The prior belief in the continuation game $\cgame_{\stoptime_{\rewardMenu}}$ 
is $\posterior^{\no}_{\stoptime_{\rewardMenu}-1} < \posterior_{\lambda,c} \leq \frac{1}{2}$. 
By the definition of $\stoptime_{\rewardMenu}$, the agent's optimal strategy is to exert effort for one period given contract $\rewardMenu$. 
This implies that the agent has incentive to exert effort in continuation game $\cgame_{\stoptime_{\rewardMenu}}$ 
given the effort-maximizing scoring rule for $\cgame_{\stoptime_{\rewardMenu}}$. 
By \cref{prop:static_moral_hazard}, the effort-maximizing scoring rule for $\cgame_{\stoptime_{\rewardMenu}}$ is the V-shaped scoring rule $\score$ with kink at $\posterior^{\no}_{\stoptime_{\rewardMenu}-1}$. 
By simple algebraic calculation, the expected utility increase given scoring rule~$\score$ for exerting effort in $\cgame_{\stoptime_{\rewardMenu}}$
is $\posterior^{\no}_{\stoptime_{\rewardMenu}-1}(\lambda^{\rsignal}_1 - \lambda^{\rsignal}_0)$, 
which must be at least the cost of effort~$c$. 
However, this violates the assumption that $\posterior^{\no}_{\stoptime_{\rewardMenu}-1} < \posterior_{\lambda,c}$, a contradiction. 
\end{proof}

\begin{lemma}[Approximate Effort Maximization of Myopic-incentive Contracts]\label{lem:contract_approx_optimal}\,\\
Given any prior $\prior \in (0,\frac{1}{2})$, any cost of effort $c$, any constant $\kappa_0 > 0$,
and any $\eta > 0$, 
there exists $\epsilon > 0$ such that for any $T\geq T_{\lambda,\prior,c}$ and any $\lambda$ satisfying 
that $\lambda^s_{\state} \leq \frac{1}{4}$ for all $s\in \{\lsignal,\rsignal\}$ and $\state\in\{0,1\}$, 
and 
\begin{itemize}
    \item $\lambda^{\rsignal}_1 - \lambda^{\rsignal}_0 \geq \frac{1}{\prior}(c + \kappa_0)$;
    \hfill (sufficient-incentive)
    \item $\lambda_1^{\rsignal}+\lambda_1^{\lsignal} \leq  \lambda_0^{\rsignal}+\lambda_0^{\lsignal} +\epsilon $, \hfill (slow-drift) 
\end{itemize}
letting $\rewardMenu$ be the myopic-incentive contract (\cref{def:myopic-incentive}), 
we have $\posterior^{\no}_{\stoptime_{\rewardMenu}} - \posterior^{\no}_{T_{\lambda,\prior,c}}\leq \eta$.
\end{lemma}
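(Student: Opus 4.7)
The plan is to set up the Bellman recursion for the agent's continuation value under the myopic-incentive contract, reduce the IC constraint to a simple per-period condition on $\posterior^{\no}_s$, and then bound the gap between the IC-determined stopping belief and $\posterior^{\no}_{T_{\lambda,\prior,c}}$ using slow-drift.

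First I would compute the expected one-period signal payoff. At time $t$, upon a good-news arrival the agent updates to $\posterior^{\good}_t$ and takes $r^{\good}_t$ yielding utility $\posterior^{\good}_t$; upon a bad-news arrival the agent takes $r^{\bad}_t$ yielding $(1-\posterior^{\bad}_t) p_t$ with $p_t = \posterior^{\no}_t/(1-\posterior^{\no}_t)$. Bayes' rule collapses the product of signal probability and post-signal utility to $\posterior^{\no}_t \lambda^{\good}_1 \Delta$ and $\posterior^{\no}_t \lambda^{\bad}_0 \Delta$ respectively, so the continuation value satisfies
\begin{equation*}
V_t = -c\Delta + \posterior^{\no}_t(\lambda^{\good}_1+\lambda^{\bad}_0)\Delta + (1-q_t\Delta)\,V_{t+\Delta},
\end{equation*}
with terminal $V_{\stoptime_{\contract}} = \posterior^{\no}_{\stoptime_{\contract}}$ (all menu options coincide in value at the kink). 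Subtracting $\posterior^{\no}_t$ from both sides and using the exact Bayesian identity $(1-q_t\Delta)\posterior^{\no}_{t+\Delta} = \posterior^{\no}_t(1-(\lambda^{\good}_1+\lambda^{\bad}_1)\Delta)$, the slack $W_t \equiv V_t - \posterior^{\no}_t$ unrolls to
\begin{equation*}
W_t = \sum_{s=t}^{\stoptime_{\contract}-\Delta} P_t(s)\,[\posterior^{\no}_s(\lambda^{\bad}_0-\lambda^{\bad}_1) - c]\,\Delta,
\end{equation*}
where $P_t(s)$ denotes the no-signal probability from $t$ to $s$.

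Since $\posterior^{\no}_s$ is monotonically decreasing in $s$, the bracketed per-period net benefits are monotonically decreasing, so $W_t \geq 0$ at every $t \leq \stoptime_{\contract}$ if and only if $\posterior^{\no}_{\stoptime_{\contract}-\Delta} \geq c/(\lambda^{\bad}_0-\lambda^{\bad}_1)$. This pins down $\stoptime_{\contract}$ as the largest $t$ with $\posterior^{\no}_{t-\Delta}$ above the threshold $c/(\lambda^{\bad}_0-\lambda^{\bad}_1)$. Comparing to $\posterior_{\lambda,c} = c/(\lambda^{\good}_1-\lambda^{\good}_0)$, the identity $(\lambda^{\good}_1-\lambda^{\good}_0) - (\lambda^{\bad}_0-\lambda^{\bad}_1) = (\lambda^{\good}_1+\lambda^{\bad}_1) - (\lambda^{\good}_0+\lambda^{\bad}_0) \leq \epsilon$ from slow-drift forces the two denominators to be within $\epsilon$. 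Sufficient-incentive lower-bounds $\lambda^{\good}_1-\lambda^{\good}_0$ by $(c+\kappa_0)/\prior$, so for small $\epsilon$, $\lambda^{\bad}_0-\lambda^{\bad}_1$ is likewise bounded away from zero and the threshold gap $c/(\lambda^{\bad}_0-\lambda^{\bad}_1) - c/(\lambda^{\good}_1-\lambda^{\good}_0) = O(\epsilon)$. Since the per-step drift is bounded by $\posterior^{\no}_s(1-\posterior^{\no}_s)\epsilon\Delta/(1-q_s\Delta) \leq \epsilon\Delta/2$ under $\lambda^{\signal}_{\state} \leq 1/(4\Delta)$, $\posterior^{\no}_{\stoptime_{\contract}} - \posterior^{\no}_{T_{\lambda,\prior,c}}$ is at most the threshold gap plus one drift step, each $O(\epsilon)$; choosing $\epsilon$ sufficiently small yields the bound $\eta$.

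The main obstacle is verifying that the IC condition $W_t \geq 0$ really captures all feasible agent deviations from the myopic-incentive contract. I would check three cases: (a) lying about the current signal at time $t$ (the V-shape with kink at $\posterior^{\no}_t$ attains its peak $\posterior^{\no}_t$ at that belief), (b) deferring to a later-period option $r^{\signal}_{t'}$ with $t'>t$ (since $p_{t'}$ shrinks in $t'$ and $r^{\good}_{t'}$ is constant, the best such payoff is again $\posterior^{\no}_t$), and (c) waiting until $\stoptime_{\contract}$ for $r^{\no}$, which coincides in value with $r^{\good}$ at belief $\posterior^{\no}_t$. Each deviation yields exactly $\posterior^{\no}_t$, so the subtraction defining $W_t$ correctly accounts for the best stopping option. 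A secondary subtlety is ensuring the sufficient-horizon assumption $T \geq T_{\lambda,\prior,c}$ forces $\stoptime_{\contract}$ to be IC-determined rather than clipped by the contract horizon.
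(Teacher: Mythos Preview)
Your approach is correct and takes essentially the same route as the paper: both reduce the agent's incentive to continue under the myopic-incentive contract to a per-period comparison, showing the one-step gain is nonnegative precisely when $\posterior^{\no}_t$ exceeds a threshold that, under slow-drift, lies within $O(\epsilon)$ of $\posterior_{\lambda,c}$. Your Bellman/telescoping derivation makes the exact threshold $c/(\lambda^{\bad}_0-\lambda^{\bad}_1)$ explicit and then compares denominators via the identity $(\lambda^{\good}_1-\lambda^{\good}_0)-(\lambda^{\bad}_0-\lambda^{\bad}_1)\leq\epsilon$, whereas the paper writes the one-period gain as $\posterior^{\no}_t(\lambda^{\good}_1-\lambda^{\bad}_0)\Delta+(\posterior^{\no}_t-\posterior^{\no}_{t-\Delta})$ and absorbs the discrepancy into the drift bound $\posterior^{\no}_{t-\Delta}-\posterior^{\no}_t\leq\tfrac{1}{2}\epsilon\Delta$; these are the same computation organized slightly differently.
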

\begin{proof}[Proof of \cref{lem:contract_approx_optimal}]
For any time $t\geq 1$, 
given any information arrival probabilities $\lambda$
such that $\lambda_1^{\rsignal}+\lambda_1^{\lsignal} \leq  \lambda_0^{\rsignal}+\lambda_0^{\lsignal} +\epsilon $, 
we have 
\begin{align}
\posterior^{\no}_{t-1} - \posterior^{\no}_t &= 
\posterior^{\no}_{t-1} - \frac{\posterior^{\no}_{t-1}(1-\lambda^{\rsignal}_1-\lambda^{\lsignal}_1) }
{\posterior^{\no}_{t-1}(1-\lambda^{\rsignal}_1-\lambda^{\lsignal}_1) + 
(1-\posterior^{\no}_{t-1})(1-\lambda^{\rsignal}_0-\lambda^{\lsignal}_0) } \nonumber\\
& \leq \posterior^{\no}_{t-1} \rbr{1-\frac{(1-\lambda^{\rsignal}_1-\lambda^{\lsignal}_1)}{(1-\lambda^{\rsignal}_1-\lambda^{\lsignal}_1) + (1-\posterior^{\no}_{t-1}) \epsilon}}\nonumber\\
&\leq 2\posterior^{\no}_{t-1}(1-\posterior^{\no}_{t-1})\epsilon 
\leq \frac{1}{2}\epsilon.\label{eq:drift_bound}
\end{align}
The second inequality holds since $\lambda^{\rsignal}_1+\lambda^{\lsignal}_1\leq \frac{1}{2}$
and the last inequality holds since $\posterior^{\no}_{t-1}(1-\posterior^{\no}_{t-1}) \leq \frac{1}{4}$. 

Let
$\Delta_\lambda \triangleq \lambda^{\rsignal}_1-\lambda^{\rsignal}_0$.
By the sufficient-incentive assumption,
$\Delta_\lambda \ge \frac{c+\kappa_0}{\prior}$ and hence
$\frac{c}{\Delta_\lambda}\le \frac{c\prior}{c+\kappa_0}<\prior<\frac12$.
Therefore,
\[
\posterior_{\lambda,c}
=
\min\left\{\frac12,\frac{c}{\Delta_\lambda}\right\}
=
\frac{c}{\Delta_\lambda}.
\]

Fix any $\eta>0$, and let $\epsilon=\eta$. By \eqref{eq:drift_bound}, for every $t\ge 1$,
\[
0\le \posterior^{\no}_{t-1}-\posterior^{\no}_{t}\le \frac{\epsilon}{2}.
\]
Since $T_{\lambda,\prior,c}$ is the maximum time such that
$\posterior^{\no}_{T_{\lambda,\prior,c}-1}\ge \posterior_{\lambda,c}$,
we have
$\posterior^{\no}_{T_{\lambda,\prior,c}}
\ge \posterior_{\lambda,c}-\frac{\epsilon}{2}$.
Now consider any time $t$ such that
$\posterior^{\no}_{t-1}\ge \posterior^{\no}_{T_{\lambda,\prior,c}}+\eta$.
Then
\[
\posterior^{\no}_{t-1}
\ge \posterior_{\lambda,c}-\frac{\epsilon}{2}+\eta
= \posterior_{\lambda,c}+\frac{\eta}{2}
\ge \posterior_{\lambda,c}.
\]

Under the myopic-incentive contract, the expected reward gain from exerting effort for one period at time $t$ and then stopping is at least
\begin{align*}
\posterior^{\no}_{t-1}\lambda^{\rsignal}_1
+(1-\posterior^{\no}_{t-1})(1-\lambda^{\rsignal}_0)
\cdot \frac{\posterior^{\no}_{t-1}}{1-\posterior^{\no}_{t-1}}
-\posterior^{\no}_{t-1}
=
\posterior^{\no}_{t-1}\Delta_\lambda
\ge
\posterior_{\lambda,c}\Delta_\lambda
=
c.
\end{align*}
Hence the agent has an incentive to exert effort at time $t$.

Therefore, it cannot be that
$\posterior^{\no}_{\stoptime_{\rewardMenu}}
>
\posterior^{\no}_{T_{\lambda,\prior,c}}+\eta$.
Indeed, since $\posterior^{\no}_t$ is weakly decreasing in $t$, this inequality implies
$\stoptime_{\rewardMenu}<T_{\lambda,\prior,c}$, so the time
$t=\stoptime_{\rewardMenu}+1$ is feasible and satisfies
\[
\posterior^{\no}_{t-1}
=
\posterior^{\no}_{\stoptime_{\rewardMenu}}
>
\posterior^{\no}_{T_{\lambda,\prior,c}}+\eta.
\]
By the argument above, the agent would then still prefer to exert effort for one more period at time $\stoptime_{\rewardMenu}+1$, contradicting the definition of $\stoptime_{\rewardMenu}$ (together with our tie-breaking rule toward the largest stopping time). Thus,
\[
\posterior^{\no}_{\stoptime_{\rewardMenu}}
-
\posterior^{\no}_{T_{\lambda,\prior,c}}
\le \eta,
\]
and the claim holds.
\end{proof}

To prove \cref{thm:dynamic optimal}, we also utilize the following lemma to bound the difference in expected scores when the posterior beliefs differ by a small constant of $\epsilon$ given any bounded scoring rule. 
\begin{lemma}\label{lem:bounded_score}
For any bounded static scoring rule $\score$ with expected reward function $U_{\score}(\posterior)$ given posterior belief $\posterior$, 
we have 
\begin{align*}
\abs{U_{\score}(\posterior+\epsilon) - U_{\score}(\posterior)} \leq \epsilon, \qquad\forall \epsilon>0, \posterior\in[0,1-\epsilon].
\end{align*}
\end{lemma}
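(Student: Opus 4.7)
The plan is to use the envelope-style argument: since the scoring rule is proper (by incentive compatibility) and bounded, $U_{\score}$ is the upper envelope of affine functions of $\posterior$ whose slopes lie in a bounded interval. Concretely, I would represent the indirect utility as
\begin{align*}
U_{\score}(\posterior) = \max_{m} \bigl[\posterior \cdot r_1(m) + (1-\posterior)\cdot r_0(m)\bigr],
\end{align*}
where $m$ ranges over menu options (equivalently, over reports the agent could submit in the scoring rule), and $r_0(m), r_1(m)\in[0,1]$ are the state-contingent rewards corresponding to $m$ (which are bounded because $\score$ is a bounded scoring rule).

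Next, I would bound the difference in both directions. Let $m^{*}$ attain the maximum at $\posterior$ and let $m^{**}$ attain the maximum at $\posterior+\epsilon$. Then evaluating each maximizer at the other belief gives two one-sided bounds:
\begin{align*}
U_{\score}(\posterior+\epsilon) - U_{\score}(\posterior) &\leq \bigl[(\posterior+\epsilon)r_1(m^{**}) + (1-\posterior-\epsilon)r_0(m^{**})\bigr] - \bigl[\posterior \cdot r_1(m^{**}) + (1-\posterior)r_0(m^{**})\bigr]\\
&= \epsilon \bigl(r_1(m^{**}) - r_0(m^{**})\bigr),
\end{align*}
and symmetrically $U_{\score}(\posterior) - U_{\score}(\posterior+\epsilon) \leq \epsilon\bigl(r_0(m^{*}) - r_1(m^{*})\bigr)$. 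Since $r_0(m), r_1(m)\in[0,1]$ for every option $m$, both quantities $|r_1(m^{**}) - r_0(m^{**})|$ and $|r_0(m^{*}) - r_1(m^{*})|$ are at most~$1$, which yields $|U_{\score}(\posterior+\epsilon) - U_{\score}(\posterior)| \leq \epsilon$.

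I do not anticipate substantive obstacles here; the only mild care is making sure the claim applies to the correct object---i.e.\ that $U_{\score}(\posterior)$ refers to the agent's \emph{indirect} (truthful-report) utility, so that the envelope representation is valid. Given the menu representation already invoked earlier in the paper (\cref{lem:menu}) and the standing assumption that rewards lie in $[0,1]$, the argument above suffices.
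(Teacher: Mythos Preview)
Your proposal is correct and is essentially the same argument as the paper's: the paper observes that the subgradient of $U_{\score}$ at any $\posterior$ equals the reward difference between states $1$ and $0$, which lies in $[-1,1]$ since rewards are in $[0,1]$, and hence $U_{\score}$ is $1$-Lipschitz. Your envelope/optimizer computation is exactly the elementary verification of that subgradient bound.
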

\begin{proof}
For any static scoring rule $\score$, the subgradient of $U_{\score}$ evaluated at belief $\posterior$ equals its difference in rewards between realized states $0$ and $1$, 
which is bounded between $[-1,1]$ 
since the scoring rule is bounded within $[0,1]$. 
This further implies that 
$\abs{U_{\score}(\posterior+\epsilon) - U_{\score}(\posterior)} \leq \epsilon$ for any $ \epsilon>0$ and $\posterior\in[0,1-\epsilon]$.
\end{proof}

\begin{proof}[Proof of \cref{thm:dynamic optimal}]
By \cref{lem:contract_approx_optimal}, it is sufficient to show that there exists $\eta > 0$
and $\epsilon>0$ such that when the slow-drift condition is satisfied for constant $\epsilon$, 
for any contract $\rewardMenu$ that can be implemented as a scoring rule, we have 
$\posterior^{\no}_{\stoptime_{\rewardMenu}} - \posterior^{\no}_{T_{\lambda,\prior,c}} > \eta$. 
\begin{figure}
\centering
\begin{tikzpicture}[xscale=1,yscale=1]
\hspace{-10pt}
\draw [<->] (0,4) -- (0,0) -- (7.3,0);

\draw [thick] plot [smooth, tension=0.7] coordinates { (0.5, 2.5) (3, 1.5) (4.8,2) (6, 3)};

\draw [red] (0, 2.5) -- (7, 0.3);
\draw [red] (0, 0.2) -- (7, 3);

\draw [dotted] (3.22, 0) -- (3.22, 1.49);
\draw (3.22, -0.4) node {$\posterior^{\no}_{\stoptime_{\contract}}$};

\draw [dotted] (1.38, 0) -- (1.38, 2.05);
\draw (1.38, -0.4) node {$\posterior^{\lsignal}_{\stoptime_{\contract}}$};

\draw [dotted] (5.3, 0) -- (5.3, 2.3);
\draw (5.3, -0.4) node {$\posterior^{\rsignal}_{\stoptime_{\contract}}$};

\draw [dotted] (7, 0) -- (7, 3);
\draw (7, -0.3) node {$1$};

\end{tikzpicture}
\caption{\footnotesize The black curve is the expected score function $\expectScore$. 
The red lines are linear functions $\underline{U}$ and $\overline{U}$ respectively. }
\label{fig:static_bound}
\end{figure}

Suppose by contradiction there exists a contract $\rewardMenu$ that can be implemented as a scoring rule and
$\posterior^{\no}_{\stoptime_{\rewardMenu}} - \posterior^{\no}_{T_{\lambda,\prior,c}} \leq \eta$.
Let $\score$ be the scoring rule that implements contract $\rewardMenu$ and let $\expectScore(\posterior) = \expect[\state\sim\posterior]{\score(\posterior,\state)}$
be the expected score of the agent. 
Let $\underline{U}(\posterior)$ be a linear function such that 
$\underline{U}(\posterior^{\lsignal}_{\stoptime_{\rewardMenu}}) = \expectScore(\posterior^{\lsignal}_{\stoptime_{\rewardMenu}})$
and $\underline{U}(\posterior^{\no}_{\stoptime_{\rewardMenu}}) = \expectScore(\posterior^{\no}_{\stoptime_{\rewardMenu}})$. 
Let $\overline{U}(\posterior)$ be a linear function such that 
$\overline{U}(\posterior^{\rsignal}_{\stoptime_{\rewardMenu}}) = \expectScore(\posterior^{\rsignal}_{\stoptime_{\rewardMenu}})$
and $\overline{U}(\posterior^{\no}_{\stoptime_{\rewardMenu}}) = \expectScore(\posterior^{\no}_{\stoptime_{\rewardMenu}})$. 
See \Cref{fig:static_bound} for an illustration. 
Let $f^s_t \triangleq \posterior^{\no}_{t}\lambda^{s}_1 + (1-\posterior^{\no}_t)\lambda^{s}_0$ for any $s\in\{\lsignal,\rsignal\}$. 
At time $\stoptime_{\rewardMenu}$, the agent has incentive to exert effort, which implies that the cost of effort $c$ is at most the utility increase for exerting effort
\begin{align*}
&f^{\rsignal}_{\stoptime_{\rewardMenu}-1} \cdot\expectScore(\posterior^{\rsignal}_{\stoptime_{\rewardMenu}})
+ f^{\lsignal}_{\stoptime_{\rewardMenu}-1} \cdot\expectScore(\posterior^{\lsignal}_{\stoptime_{\rewardMenu}})
+ (1- f^{\rsignal}_{\stoptime_{\rewardMenu}-1} -f^{\lsignal}_{\stoptime_{\rewardMenu}-1} ) \cdot\expectScore(\posterior^{\no}_{\stoptime_{\rewardMenu}})
- \expectScore(\posterior^{\no}_{\stoptime_{\rewardMenu}-1})\\
& = f^{\rsignal}_{\stoptime_{\rewardMenu}-1} \cdot
(\expectScore(\posterior^{\rsignal}_{\stoptime_{\rewardMenu}})- 
\underline{U}(\posterior^{\rsignal}_{\stoptime_{\rewardMenu}}))
+ \underline{U}(\posterior^{\no}_{\stoptime_{\rewardMenu}-1})
- \expectScore(\posterior^{\no}_{\stoptime_{\rewardMenu}-1})
\leq f^{\rsignal}_{\stoptime_{\rewardMenu}-1} \cdot
(\expectScore(\posterior^{\rsignal}_{\stoptime_{\rewardMenu}})- 
\underline{U}(\posterior^{\rsignal}_{\stoptime_{\rewardMenu}}))
\end{align*}
where the equality holds by linearity of expectation 
and the inequality holds by the convexity of utility function $\expectScore$. 
Therefore, we have 
\begin{align*}
\expectScore(\posterior^{\rsignal}_{\stoptime_{\rewardMenu}})- 
\underline{U}(\posterior^{\rsignal}_{\stoptime_{\rewardMenu}})
&\geq \frac{c}{f^{\rsignal}_{\stoptime_{\rewardMenu}-1}}
= \frac{c}{\posterior^{\no}_{\stoptime_{\rewardMenu}-1}\lambda^{\rsignal}_1
+ (1-\posterior^{\no}_{\stoptime_{\rewardMenu}-1})\lambda^{\rsignal}_0}\\
&\geq\frac{(\lambda^{\rsignal}_1-\lambda^{\rsignal}_0) (1-\frac{\eta}{\posterior^{\no}_{\stoptime_{\rewardMenu}-1}})}{\lambda^{\rsignal}_1
+ \frac{1}{\posterior^{\no}_{\stoptime_{\rewardMenu}-1}}(1-\posterior^{\no}_{\stoptime_{\rewardMenu}-1})\lambda^{\rsignal}_0}
\geq \frac{(\lambda^{\rsignal}_1-\lambda^{\rsignal}_0)}
{\lambda^{\rsignal}_1
+ \frac{1}{\posterior^{\no}_{\stoptime_{\rewardMenu}-1}}(1-\posterior^{\no}_{\stoptime_{\rewardMenu}-1})\lambda^{\rsignal}_0}
- \frac{2\eta}{\underline{\kappa}_1}
\end{align*}
where the second inequality holds since  $\posterior^{\no}_{\stoptime_{\rewardMenu}} \leq \posterior^{\no}_{T_{\lambda,\prior,c}} + \eta
\leq \posterior_{\lambda,c} + \eta$, 
and the last inequality holds since $\lambda^{\rsignal}_0 \geq \underline{\kappa}_1$. 

Fix any $\gamma > 0$ such that $\gamma < \frac{D \kappa_{0}}{2(c+\kappa_{0})}$. For any sufficiently small $\eta$ and $\varepsilon$, let time $t \leq \stoptime_{\rewardMenu}$ be such that $\posterior^{\no}_{t-1} - \posterior^{\no}_{\stoptime_{\rewardMenu}-1} > \gamma$. 
First note that the convexity of $\expectScore$ 
and the constraint on rewards belonging to the unit interval at state $1$ implies that 
\begin{align*}
\overline{U}(\posterior^{\no}_{t-1})-\expectScore(\posterior^{\no}_{t-1})
\leq \frac{2\eta}{\underline{\kappa}_1} \cdot \frac{1-\posterior^{\no}_{\stoptime_{\rewardMenu}}}{1-\posterior^{\rsignal}_{\stoptime_{\rewardMenu}}}
\cdot \frac{\posterior^{\rsignal}_{\stoptime_{\rewardMenu}}-\posterior^{\no}_{t-1}}{\posterior^{\rsignal}_{\stoptime_{\rewardMenu}}-\posterior^{\no}_{\stoptime_{\rewardMenu}}}
\leq \frac{2\eta(\underline{\kappa}_1 + \bar{\kappa}_1)}{\underline{\kappa}^2_1}
\end{align*}
where the last inequality holds since 
$(1-\posterior^{\no}_{\stoptime_{\rewardMenu}})
\cdot \frac{\posterior^{\rsignal}_{\stoptime_{\rewardMenu}}-\posterior^{\no}_{t-1}}{\posterior^{\rsignal}_{\stoptime_{\rewardMenu}}-\posterior^{\no}_{\stoptime_{\rewardMenu}}} \leq 1$
and $ \frac{1}{1-\posterior^{\rsignal}_{\stoptime_{\rewardMenu}}}\leq \frac{\underline{\kappa}_1 + \bar{\kappa}_1}{\underline{\kappa}_1}$.
Moreover, 
\begin{align*}
\expectScore(\posterior^{\rsignal}_t)-\overline{U}(\posterior^{\rsignal}_t)
\leq \frac{2\eta}{\underline{\kappa}_1} \cdot \frac{1-\posterior^{\no}_{\stoptime_{\rewardMenu}}}{\posterior^{\rsignal}_{\stoptime_{\rewardMenu}}-\posterior^{\no}_{\stoptime_{\rewardMenu}}}
\leq \frac{2\eta}{\underline{\kappa}_1} \cdot \frac{\lambda^{\rsignal}_1}{\posterior^{\no}_{\stoptime_{\rewardMenu}}(\lambda^{\rsignal}_1-\lambda^{\rsignal}_0)}
\leq \frac{2\eta\bar{\kappa}_1}{\underline{\kappa}_1 c}.
\end{align*}
Therefore, the utility increase for exerting effort in one period at time $t$ is 
\begin{align*}
&f^{\rsignal}_{t-1} \cdot\expectScore(\posterior^{\rsignal}_t)
+ f^{\lsignal}_{t-1} \cdot\expectScore(\posterior^{\lsignal}_t)
+ (1- f^{\rsignal}_{t-1} -f^{\lsignal}_{t-1} ) \cdot\expectScore(\posterior^{\no}_t)
- \expectScore(\posterior^{\no}_{t-1})\\
&\leq f^{\rsignal}_{t-1} \cdot
\overline{U}(\posterior^{\rsignal}_t)
+ f^{\lsignal}_{t-1} \cdot\underline{U}(\posterior^{\lsignal}_t)
-(f^{\rsignal}_{t-1} + f^{\lsignal}_{t-1}) \cdot
\overline{U}(\posterior^{\no}_{t})
+ \epsilon
+ \frac{2\eta\bar{\kappa}_1}{\underline{\kappa}_1 c}
+ \frac{2\eta(\underline{\kappa}_1 + \bar{\kappa}_1)}{\underline{\kappa}^2_1}\\
&\leq \posterior^{\no}_{\stoptime_{\rewardMenu}}(\lambda^{\lsignal}_0-\lambda^{\lsignal}_1)-\gamma\underline{\kappa}_1
+ \epsilon
+ \frac{2\eta\bar{\kappa}_1}{\underline{\kappa}_1 c}
+ \frac{2\eta(\underline{\kappa}_1 + \bar{\kappa}_1)}{\underline{\kappa}^2_1}.
\end{align*}
Since $c = \posterior_{\lambda,c}(\lambda^{\rsignal}_1-\lambda^{\rsignal}_0)
\geq (\posterior^{\no}_{\stoptime_{\rewardMenu}}-\eta)(\lambda^{\lsignal}_0-\lambda^{\lsignal}_1)$, 
the agent suffers a loss at least
\begin{align*}
\gamma\underline{\kappa}_1
- \eta \bar{\kappa}_1
- \epsilon
- \frac{2\eta\bar{\kappa}_1}{\underline{\kappa}_1 c}
- \frac{2\eta(\underline{\kappa}_1 + \bar{\kappa}_1)}{\underline{\kappa}^2_1}
\end{align*}
for exerting effort in one period. 

Now consider the utility increase for exerting effort in continuation game $\cgame_t$, starting from belief $\posterior^{\no}_{t-1}$. 
Note that for any $\delta > 0$, by \eqref{eq:drift_bound} the no-information belief can decrease by at most $\epsilon/2$ in one period. Hence before the no-information belief drifts by a total distance~$\delta$, the agent must work for at least
\[
N \triangleq \left\lceil \frac{2\delta}{\epsilon}\right\rceil
\]
periods.
Moreover, in each such period the total probability of receiving a Poisson signal is at least
\[
f^{\rsignal}_{u-1}+f^{\lsignal}_{u-1}\geq 2\underline{\kappa}_1,
\]
because each arrival rate lies in $[\underline{\kappa}_1,\bar{\kappa}_1]$.
Therefore the probability of receiving at least one Poisson signal before the no-information belief drifts by $\delta$ is at least
\begin{align*}
1-(1-2\underline{\kappa}_1)^N
\geq
1-\exp\left(-2\underline{\kappa}_1 N\right).
\end{align*}

By continuity of the preceding one-period upper bound as a function of the current no-information belief, after possibly shrinking $\delta$ we may ensure that the loss is at least 
\begin{align*}
m \triangleq
\gamma\underline{\kappa}_1
- 2\delta
- \eta \bar{\kappa}_1
- \epsilon
- \frac{2\eta\bar{\kappa}_1}{\underline{\kappa}_1 c}
- \frac{2\eta(\underline{\kappa}_1 + \bar{\kappa}_1)}{\underline{\kappa}_1^2}
\end{align*}
in each period before the no-information belief drifts a $\delta$ distance. 
In contrast, the benefit from exerting effort after the no-information belief drifts a $\delta$ distance is at most $1$, and it can occur only if no Poisson signal arrives in the first $N$ periods. The probability of this event is at most
\[
(1-2\underline{\kappa}_1)^N \leq \exp\left(-2\underline{\kappa}_1 N\right).
\]
Moreover, the probability that the agent reaches period $j+1$ without a Poisson signal in the first $j$ periods is at most
$(1-2\underline{\kappa}_1)^j$,
for $j=0,\dots,N-1$.
Therefore, by decomposing the continuation gain into the incremental gains from each successive exertion decision, the maximal continuation gain from exerting effort in continuation game $\cgame_t$, relative to stopping immediately, is at most
\begin{align*}
-m\sum_{j=0}^{N-1}(1-2\underline{\kappa}_1)^j + (1-2\underline{\kappa}_1)^N
\leq -m\sum_{j=0}^{N-1}(1-2\underline{\kappa}_1)^j + \exp\left(-2\underline{\kappa}_1 N\right).
\end{align*}
Since $m>0$ when $\delta,\eta,\epsilon$ are chosen sufficiently small compared to $\gamma$, we can then keep $\delta>0$ fixed and choose $\epsilon$ sufficiently small so that $N=\lceil 2\delta/\epsilon\rceil$ is large enough for the right-hand side to be negative. Therefore, the agent's utility for exerting effort is smaller than not exerting effort in continuation game $\cgame_t$. This leads to a contradiction since the agent at time $t$ will not choose to exert effort given scoring rule $\score$.
\end{proof}

\section{General Dynamic Contracts}

\subsection{Menu Representation}
\begin{proof}[Proof of \cref{lem:menu}]
Recall that by \cref{lem:agent_effort_response}, it is without loss to assume that the agent uses a stopping strategy. 
Given any time $t$, let $r^s_t$ be the menu option chosen by the agent if the agent receives a Poisson signal $s\in S$ at time $t$.
Let $\reward^{\no}_{\stoptime_{\rewardMenu}}$ be the menu option the agent chooses if the agent does not receive any Poisson signal before stopping effort. 
By offering the menu options $\rewardMenu_t = \{\reward^{\signal}_{t'}\}_{t'\geq t,\signal\in\signals}\cup\{\reward^{\no}_{\stoptime_{\rewardMenu}}\}$
at time $t$, the agent's utilities for stopping and exerting effort remain the same, 
and hence this simplified menu representation implements the same stopping strategy. 
\end{proof}

\subsection{Communication-based Contracts}
\label{subapx:contract_definition}

Let $\messages_t$ be the message space of the agent at any time $t$. We denote the history at time~$t$ as $\history_t = \lbr{\mess_{t'}}_{t'\leq t}$. Let $\histories_t$ be the set of all possible histories at time $t$. Since we allow rewards to condition on the future realized outcome of interest (i.e., the state), we take the rewards to be within the unit interval. As mentioned in our model, we can interpret the reward as an endorsement that positively influences the forecaster's reputation. 
The agent thus receives a benefit according to the function: 
\begin{align*}
\contract:\histories_T \times \states \to [0,1]
\end{align*}
where $\contract(\history_T, \state)$ is the fraction of the total available reward provided to the agent
when his history of reports is $\history_T$, and the realized state is $\state$---or, alternatively, the probability that the agent receives the reward, so that $\contract(h_{T}, \theta)$ can be interpreted as an endorsement probability.\footnote{While we allow randomization over the event that the agent receives the full reward, we do not allow stochastic messages from the principal to the agent. A discussion of how the possibility of such randomization influences the results is deferred to \cref{sub:randomized}. Allowing stochastic messages would introduce a second wedge between scoring rules and arbitrary contracts, which we do not consider to maintain focus on the ability to elicit information over time.
} If the agent has exerted effort in~$\tilde{t}$ periods, his final payoff is $\contract(h_{T}, \theta)-c\tilde{t}$.

This representation of the contract is equivalent to our menu representation. In particular, given any contract $\contract$, at any time $t$, it is sufficient to provide the agent with menu options that will be implemented for the on-path beliefs $\posterior^s_t$ for receiving a Poisson signal $s$, or a belief $\posterior^N_t$ for not receiving any Poisson signal. 
Therefore, all our results extend naturally.

\subsection{Randomized Contracts}
\label{sub:randomized} 

Our goal has been to determine the maximum effort implementable within the class of contracts defined in \cref{sec:model:contracting} and \ref{subapx:contract_definition}. As histories only include messages sent by the agent, we implicitly rule out the use of stochastic messages sent from the designer to the agent  (as in, for instance, \cite{DPS2018}). Deterministic mechanisms have significant practical appeal, as it is not always obvious what a randomization might correspond to or how a mechanism designer could commit to implementing this. These issues have been discussed extensively in the contracting literature; we refer the reader to discussions in \cite{LaffontMartimort} as well as \cite{Bester2001} on this point to avoid detours. Now, randomization provides no additional benefit in implementing maximum effort with probability~1. But it is natural to ask whether some designer objectives may yield benefits to randomization. Our analysis speaks to this question as well. 

We discuss randomization formally. Let $\randomDevice=\lbr{\randomDevice_t}_{t\leq T}$ be a sequence of random variables with~$\randomDevice_t$ drawn from a uniform distribution in $[0,1]$. 
A \emph{randomized contract} is a mapping 
\begin{align*}
\contract(\cdot \given \randomDevice):\histories_T \times \states \to [0,1].
\end{align*}
Crucially, in randomized contracts, at any time $t$, 
the history of the randomization device $\lbr{\randomDevice_{t'}}_{t'\leq t}$ is publicly revealed to the agent before determining his choice of effort or the message sent to the designer. 
The randomization revealed before $t$ affects the agent's incentives after time $t$, 
and without it, such contracts reduce back to deterministic contracts. 
Implementing such randomized contracts would require either a public randomization device or the principal to commit to a random effort recommendation policy.

It may no longer be without loss of generality to restrict attention to stopping strategies under a randomized contract. 
In particular, the agent may decide whether to work or not, depending on the past realizations of the public randomization. 
The agent may also strategically delay exerting effort to wait for the realization of the public randomization.
As a result, simplifying the objective to maximizing the agent's stopping time is not appropriate for randomized contracts. 

At the same time, our analysis provides some insights into why randomization can expand the set of implementable strategies. We previously observed that it may be possible to get the agent to work from $\mu_{1}$ to $\mu_{2}$ and to get the agent to work from $\mu_{2}$ to $\mu_{3}$, but not from $\mu_{1}$ to $\mu_{3}$. This would occur if the reward necessary to get the agent to work to $\mu_{3}$ were so high that the agent would ``shirk-and-lie'' at $\mu_{1}$.  However, the agent may be willing to start working at $\mu_{1}$, not knowing whether the reward will be ``high'' or ``low''---but once the agent starts working, the designer can randomly inflate or decrease the rewards of the agent. Once time has passed, the outcome of the randomization can be revealed, and if the rewards inflate, the agent has incentives to exert effort for longer in the absence of a Poisson signal arrival---so that the realized stopping time increases for \emph{some} realization of the randomization.

We illustrate this intuition more formally assuming perfect ``good-news'' learning, i.e., $\lambda_{1}^{\rsignal} > 0 $ and $ \lambda_{0}^{\rsignal}=\lambda_{1}^{\lsignal} = \lambda_{0}^{\lsignal}=0$, where learning is perfect and only right-biased signal $\rsignal$ arrives with positive probability. We describe the resulting solution in \cref{sec:illustration}. 
In this setting, our results imply that, under deterministic contracts, a V-shaped scoring rule with parameters $r_0=1$ and $r_1\in[0,1]$ implements the effort-maximizing contract $\contract$; in particular, \cref{sec:illustration} characterizes when in fact effort-maximizing contracts require $r_1 < 1$. 
Recall that we denote $\stoptime_{\rewardMenu}$ as the stopping time in the effort-maximizing deterministic contract and let $\posterior^{\no}_{\stoptime_{\rewardMenu}}$
be the stopping belief when no Poisson signal is observed. Let $\delta\in(0, 1-r_1]$ be the maximum number such that 
(1) the agent has strict incentives to exert effort until time $\frac{2\stoptime_{\rewardMenu}}{3}$ 
absent signal arrival
given menu options $(1, 0)$ and $(0, r_1 - \delta)$;
and (2) the agent can be incentivized to exert effort 
given menu options $(1, 0)$ and $(0, r_1 + \delta)$
given belief $\posterior^{\no}_{\frac{\stoptime_{\rewardMenu}}{2}}$. 
Consider the randomized contract $\hat{\contract}$ that provides menu options $(1, 0)$ and $(0, r_1 - \delta)$ from time $0$ to $\frac{\stoptime_{\rewardMenu}}{2}$, 
and after time $\frac{\stoptime_{\rewardMenu}}{2}$, offers menu options $(1,0)$ and $(0,r_1 + \delta)$ with probability $\epsilon^2$, 
offers menu options $(1,0)$ and $(0,0)$ with probability~$\epsilon$, 
and offers the same menu options $(1, 0)$ and $(0, r_1 - \delta)$ otherwise. 
With sufficiently small $\epsilon>0$, 
the agent still has incentives to exert effort at any time $t\leq \frac{\stoptime_{\rewardMenu}}{2}$. 
Moreover, after time $\frac{\stoptime_{\rewardMenu}}{2}$, with probability $\epsilon^2$, 
the realized menu options are $(1,0)$ and $(0,r_1+\delta)$, and the agent can be incentivized to exert effort to a time strictly larger than $\stoptime_{\rewardMenu}$ in the absence of a Poisson signal.\footnote{This kind of modification may be of interest to a designer who only values extreme posterior beliefs. For instance, suppose the designer faced a decision problem where the possible decisions belonged to a set $A=\{0,1\}$. Consider a designer payoff function of $\val(0,\state)=0$ for all $\state\in\states$, 
and $\val(1,0)=1, \val(1,1)=-\frac{1-\posterior^{\no}_{\stoptime_{\rewardMenu}}}{\posterior^{\no}_{\stoptime_{\rewardMenu}}}$; plainly, the designer only seeks to change her action from 1 to 0 if the posterior belief is below $\posterior^{\no}_{\stoptime_{\rewardMenu}}$. 
In this case, the payoff under any deterministic contract is 0. However, under the randomized contract outlined, the designer would obtain a positive payoff when implementing the identified randomized contract.}

\section{Additional Review of Scoring Rules}
A scoring rule is \emph{proper} if it incentivizes the agent to truthfully report his belief to the mechanism, 
i.e., 
\begin{align*}
\expect[\state\sim\posterior]{\score(\posterior,\state)}
\geq \expect[\state\sim\posterior]{\score(\posterior',\state)}, 
\quad\forall \posterior,\posterior'\in\Delta(\states). 
\end{align*}
By the revelation principle, it is without loss of generality to focus on proper scoring rules when the designer adopts contracts that can be implemented as a scoring rule. 
\begin{lemma}[\citealp{Mcc-56}]
\label{prop:static_truthful}
For any finite state space $\states$, 
a scoring rule $\score$ is proper if there exists a convex function $\expectScore:\Delta(\states) \to \reals$ 
such that 
\begin{align*}
\score(\posterior,\state) = \expectScore(\posterior) + \subgradient(\posterior)\cdot(\state - \posterior)
\end{align*}
for any $\posterior\in \Delta(\states)$ and $\state\in\states$
where $\subgradient(\posterior)$ is a subgradient of $\expectScore$.\footnote{Here for finite state space $\states$, 
we represent $\state\in\states$ and $\posterior\in\Delta(\states)$ 
as $|\states|$-dimensional vectors 
where the $i$th coordinate of $\state$ is $1$ if the state is the $i$th element in $\states$  
and is $0$ otherwise,
and the $i$th coordinate of posterior $\posterior$ is the probability of the $i$th element in $\states$ given posterior $\posterior$. } 
\end{lemma}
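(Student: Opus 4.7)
The plan is to verify the proper-scoring inequality directly from the definition, by computing both sides of
\[
\expect[\state\sim\posterior]{\score(\posterior,\state)} \ \geq\ \expect[\state\sim\posterior]{\score(\posterior',\state)}
\]
when $\score$ takes the postulated affine-in-$\state$ form, and then invoking the subgradient inequality for the convex function $\expectScore$. Since the result is an ``if'' direction, nothing beyond this computation is needed.

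First I would take expectations on the left side. Because $\state$ is an indicator vector with $\expect[\state\sim\posterior]{\state}=\posterior$, the correction term $\subgradient(\posterior)\cdot(\state-\posterior)$ integrates to zero, giving
\[
\expect[\state\sim\posterior]{\score(\posterior,\state)} = \expectScore(\posterior) + \subgradient(\posterior)\cdot\expect[\state\sim\posterior]{\state-\posterior} = \expectScore(\posterior).
\]
Next I would take expectations on the right side at the same true belief $\posterior$ but with report $\posterior'$:
\[
\expect[\state\sim\posterior]{\score(\posterior',\state)} = \expectScore(\posterior') + \subgradient(\posterior')\cdot(\posterior-\posterior'),
\]
again using $\expect[\state\sim\posterior]{\state}=\posterior$.

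The key step is then to apply the subgradient inequality for the convex function $\expectScore$ at the point $\posterior'$ with subgradient $\subgradient(\posterior')$, which states
\[
\expectScore(\posterior) \ \geq\ \expectScore(\posterior') + \subgradient(\posterior')\cdot(\posterior-\posterior').
\]
Combining the three displays yields $\expect[\state\sim\posterior]{\score(\posterior,\state)} \geq \expect[\state\sim\posterior]{\score(\posterior',\state)}$ for all $\posterior,\posterior'\in\Delta(\states)$, which is precisely properness.

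I do not anticipate any real obstacle: the computation is immediate, and the only nontrivial ingredient is the standard fact that a convex function on $\Delta(\states)$ admits a subgradient at every (relative-interior) point, together with the subgradient inequality, both of which are classical. The only small bookkeeping point is the vector notation convention spelled out in the footnote (treating $\state$ as a one-hot vector so $\expect[\state\sim\posterior]{\state}=\posterior$), which makes the cross-term $\subgradient(\cdot)\cdot(\state-\posterior)$ mean-zero under $\posterior$; I would state this once at the beginning so that both expectation computations read cleanly.
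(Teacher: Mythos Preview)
Your argument is correct and is the standard proof of this direction: both expectation computations are right (using $\expect[\state\sim\posterior]{\state}=\posterior$), and the subgradient inequality for the convex function $\expectScore$ at $\posterior'$ immediately yields properness. The paper does not supply its own proof of this lemma; it simply states the result and attributes it to \citet{Mcc-56}, so there is nothing to compare against beyond noting that your proof is exactly the classical one.
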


\section{Comparative Statics of Scoring Rules}
\label{sub:comparative statics of static}
As discussed in \cref{sect:discussions}, the ideal situation cannot be implemented in dynamic contracts since we claim that the effort-maximizing static scoring rule at time $\stoptime$ is not effort-maximizing at time $t<\stoptime$. 
However, this argument alone is insufficient since, in earlier times, the agent is more uncertain about the states and hence it is easier to incentivize. 
In this appendix, we formalize this intuition using comparative statics on static scoring rules.

To simplify the exposition, we consider a specific static environment where
, if the agent exerts effort, the agent may receive an informative signal in $\lbr{\rsignal,\lsignal}$ that is partially informative about the state. 
Otherwise, the agent does not receive any signals and the prior belief is not updated. 
Let $f_{\theta,s}\in(0,1)$ be the probability of receiving signal $s$ conditional on state~$\theta$, with $\sum_{s\in\{\rsignal,\lsignal\}} f_{\theta,s} = 1$ for any $\theta$.
That is, signals are not perfectly revealing. 
We focus on the case when the prior $\prior < \frac{1}{2}$.

\cref{prop:static_moral_hazard} shows that the utility function of the effort-maximizing scoring rule is V-shaped with a kink at the prior,
that is, the effort-maximizing scoring rule offers the agent the following two options:
$(0,1)$ and $(\frac{\prior}{1-\prior},0)$.
The agent with prior belief $\prior$ is indifferent between these two options. 
Moreover, any belief $\posterior > \prior$ would strictly prefer $(0,1)$
and any belief $\posterior<\prior$ would strictly prefer $(\frac{\prior}{1-\prior},0)$.

Next, we conduct comparative statics.
The expected score increase for exerting effort under the effort-maximizing scoring rule is
\begin{align*}
\inc(\prior) \triangleq (1-\prior)\cdot f_{0,\lsignal}\cdot \frac{\prior}{1-\prior} + \prior \cdot f_{1,\rsignal} - \prior
= \prior(f_{0,\lsignal}+f_{1,\rsignal} - 1).
\end{align*}
Since $f_{0,\lsignal} > f_{1,\lsignal}$, 
we have $f_{0,\lsignal}+f_{1,\rsignal} = f_{0,\lsignal}+ 1 - f_{1,\lsignal} > 1$.
Therefore, the expected score increase is monotone increasing in prior $\prior$. 
That is, the closer the prior is to $\frac{1}{2}$, 
the easier it is to incentivize the agent to exert effort. 

Next, we conduct comparative statics on prior $\prior'$ by fixing the scoring rule $\score$ to be effort-maximizing for~$\prior$, 
i.e., $\score$ is the V-shaped scoring rule with a kink at $\prior$. 
The expected score increase for exerting effort given scoring rule $\score$ is 
\begin{align*}
\inc(\prior'; \prior) &\triangleq (1-\prior')\cdot f_{0,\lsignal}\cdot \frac{\prior}{1-\prior} + \prior' \cdot f_{1,\rsignal} - \prior'\nonumber\\
&= \prior'(f_{1,\rsignal}-1-f_{0,\lsignal}\cdot \frac{\prior}{1-\prior}) + f_{0,\lsignal}\cdot \frac{\prior}{1-\prior}.
\end{align*}
Since $f_{1,\rsignal}<1$, 
the strength of the incentives the designer can provide is strictly decreasing in prior $\prior'$. Therefore, even though the prior is closer to $\frac{1}{2}$, it is easier to incentivize the agent to exert effort; 
the effort-maximizing scoring rule for lower priors may not be sufficient to incentivize the agent (assuming that the cost of effort is the same in both settings).

\section{Complete Solution under Perfect Good News Learning}
\label{sec:illustration}

\subsection{Effort-Maximizing Contracts} 

Here we describe our findings in the model where the time interval is $\Delta$ and take the $\Delta \rightarrow 0$ limit. We focus on the setting with a single perfectly revealing signal, i.e., $$\lambda^{\lsignal}_0=\lambda^{\lsignal}_1=\lambda^{\rsignal}_0 = 0, ~~ \lambda^{\rsignal}_{1} > 0.$$ 
We take the horizon $T$ to be sufficiently large so that it will not be a binding constraint. Illustrating this solution highlights the tensions in maximizing the incentives to exert effort at different points in time. 

For this learning environment, \cref{thm:perfect} implies that a scoring rule with two reward functions implements maximum effort: $(1,0)$, corresponding to a guess of state~0, and $(0, r_{1})$, corresponding to a guess of state 1. The value of $r_{1}$, the reward when guessing state 1 (correctly), depends on the initial prior, $\prior$ (which coincides with $\mu_{0}^{\no}$).  

We describe how $r_{1}$ is determined. Providing a higher reward for guessing state 1 encourages the agent to continue exerting effort, even as this state appears increasingly unlikely. In fact, one can show the agent is indifferent between continuing effort and selecting a reward when $\mu_{\tau}^{\no} = \frac{c}{\lambda_{1}^{\rsignal}r_{1}}$ (see Appendix for details). If the event that $\theta=1$ is not too likely according to the initial prior, then setting $r_{1}=1$ gets the agent to work for as long as possible. However, if the probability that $\theta=1$ is initially high, setting $r_{1}=1$ may violate the agent's initial incentive constraints. 

Figure \ref{fig:solution} illustrates the agent's value function when $r_{1}=1$, assuming the agent works until $\mu_{\tau}^{\no} = \frac{c}{\lambda_{1}^{\rsignal}}$, along with the expected payoff when selecting each reward function. While the adverse selection constraint holds for this contract, moral hazard is violated if the event that $\theta=1$ is sufficiently likely initially. This can be seen by observing that the value function is below the expected payoff when choosing $(0,1)$, so the agent would prefer to guess state 1 rather than exert any effort at all. 

\begin{figure}
\centering
\includegraphics[scale=.4]{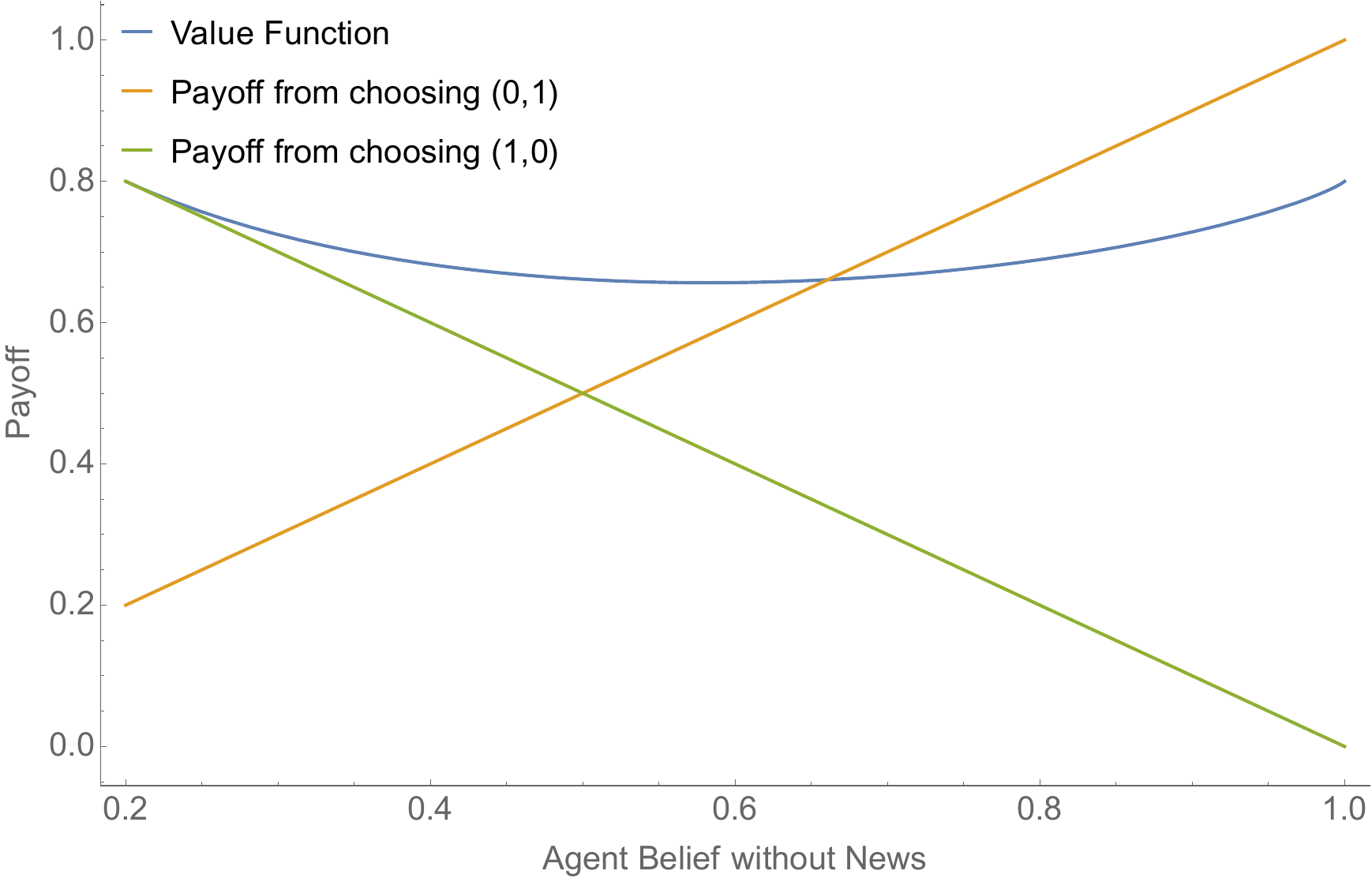}
\caption{\footnotesize Value function with perfect learning; $r_1=1, c=.2, \lambda_{1}^{\rsignal}=1$.}
\label{fig:solution}
\end{figure}

In this case, lowering $r_{1}$ is necessary to motivate the agent to begin working. The cost, of course, is that now the agent stops working once $\mu_{\tau}^{\no} = \frac{c}{\lambda_{1}^{\rsignal}r_{1}}$, so that the agent stops sooner when $r_{1}$ is lower. Now, when $\posterior_{0}^{\no} < c/ \lambda_{1}^{\rsignal}$, it is impossible to induce the agent to acquire any information at all. Outside of this range, a pair of thresholds, $\posterior^{*}, \posterior^{**}$ satisfying $c/ \lambda_{1}^{\rsignal} < \posterior^{*} < \posterior^{**} < 1$ determine the form of the effort-maximizing scoring rule:

\begin{itemize} 

\item For $c/\lambda_{1}^{\rsignal} \leq \mu_{0}^{\no} \leq \posterior^{*}$,  the effort-maximizing scoring rule sets $r_1=1$. 

\item For $\posterior^{*} < \mu_{0}^{\no}  \leq \posterior^{**}$,  the effort-maximizing scoring rule sets $r_1 < 1$,  with the exact value pinned down by the condition that at time $0$, the agent is indifferent between (i) working absent signal arrival until their belief is $c/ (\lambda_{1}^{\rsignal} r_1)$ and (ii) never working.

\item For $\mu_{0}^{\no}  > \posterior^{**}$,  it is impossible to induce the agent to acquire any information. 
\end{itemize}
\noindent  Of course, when $\mu_{0}^{\no} \in (\mu^{*}, \mu^{**})$, if the agent works, beliefs may eventually leave this region. If the agent had \emph{started} at such a belief, more effort could be induced by setting $r_{1}=1$. On the other hand, it is clear why this reoptimization cannot help. When $\mu_{0}^{\no} \in (\mu^{*}, \mu^{**})$, $r_{1} < 1$ will be set so that the initial moral hazard constraint binds---but if the agent expected $r_{1}$ to increase later, he would simply shirk and claim to have observed a Poisson signal once the reward is increased to $1$. Given that such adjustments are impossible, effort-maximizing mechanisms cannot utilize dynamics and are therefore static. 

The lesson is that the tensions between optimizing incentives to exert effort both earlier and later may be unavoidable. A designer may be unable to re-optimize rewards because the re-optimized rewards would violate an incentive constraint at some other time. One deceptive aspect of this example is that the agent's moral hazard constraint only ever binds at the stopping belief and (possibly) at their initial belief. A technical challenge we face in Section \ref{sec:single_source}, for instance, is that if signals are not fully revealing, it may be that the moral hazard constraint binds somewhere ``in between.'' This feature will imply extra reward functions should be provided to the agent in the effort-maximizing scoring rule. Still, this example illustrates some of the intuition on how effort-maximizing rewards vary with the agent's beliefs.

\subsection{Details Behind the Calculation}
\label{apx:illustration_detail}

Here we provide some additional details behind the calculations in Section \ref{sec:illustration}. We consider any scoring rule which involves the choice between $(r_{0},0)$ and $(0,r_{1})$, with the former being chosen when stopping in the absence of any signal arrival and the latter being chosen if one does occur. We note that stopping and accepting contract $(r_0, 0)$ delivers payoff $r_0(1- \mu_{t}^{\no})$; if, at time $\stoptime$, the agent continues for a length $\Delta$ and then stops, the payoff is: 

\begin{equation*} 
-c \Delta + (1- \lambda_{1}^{\rsignal} \mu_{\stoptime}^{\no} \Delta)r_0 (1-\mu_{\stoptime+\Delta}^{\no})+ \lambda_{1}^{\rsignal} \mu_{\stoptime}^{\no} \Delta r_1. 
\end{equation*}

Imposing indifference between stopping and continuing yields:

\begin{equation*} 
r_0\frac{\mu_{\stoptime}^{\no}-\mu_{\stoptime+ \Delta}^{\no}}{\Delta} + \lambda_{1}^{\rsignal} \mu_{\stoptime}^{\no} r_1- \lambda_{1}^{\rsignal} \mu_{\stoptime}^{\no} (1-\mu_{\stoptime+\Delta}^{\no})r_0 = c
\end{equation*}

As $\Delta \rightarrow 0$, $\frac{\mu_{\stoptime}^{\no}-\mu_{\stoptime+ \Delta}^{\no}}{\Delta} \rightarrow -\dot{\mu}_{\stoptime}^{\no}$;  substituting in for this expression and using continuity of beliefs yields the expression for the stopping belief and the stopping payoff: 

\begin{equation*} 
r_0\lambda_{1}^{\rsignal} \mu_{\stoptime}^{\no}(1-\mu_{\stoptime}^{\no}) + \lambda_{1}^{\rsignal} \mu_{\stoptime}^{\no} r_1- \lambda_{1}^{\rsignal} \mu_{\stoptime}^{\no} (1-\mu_{\stoptime}^{\no})r_0 = c.
\end{equation*}

\noindent Algebraic manipulations show this coincides with the expression for the stopping belief from the main text. In particular, note that this stopping belief is independent of $r_{0}$ (as in the main text). From this, it immediately follows that $r_{0}=1$ maximizes effort, as it does not influence the length of time the agent works but may make the agent more willing to initially start exerting effort. 

We now solve for the agent's value function, $V(\mu_{t}^{\no})$, for all agent beliefs $\mu_{t}^{\no} > \mu_{\stoptime}^{\no}$ (assuming the agent works until time $\stoptime$---recalling that beliefs ``drift down''). Writing out the HJB yields: 

\begin{equation*} 
V(\mu_{t}^{\no}) = -c \Delta + \lambda_{1}^{\rsignal} \mu_{t}^{\no} \Delta r_1 + (1- \lambda_{1}^{\rsignal} \mu_{t}^{\no} \Delta) V(\mu_{t+\Delta}).
\end{equation*}

\noindent From this, we obtain the following differential equation: 

\begin{equation*} 
V'(\mu_{t}^{\no}) \lambda_{1}^{\rsignal} \mu_{t}^{\no}(1-\mu^{\no}_{t})= -c+\lambda_{1}^{\rsignal} \mu_{t}^{\no}(r_1 - V(\mu_{t}^{\no})).
\end{equation*}

\noindent Solving this first-order differential equation gives us the following expression for the value function, up to a constant $k$ (which is pinned down by the condition $V(\mu_{\stoptime}^{\no})=(1-\mu_{\stoptime}^{\no})$): 

\begin{equation*} 
V(\mu_{t}^{\no})= k(1-\mu_{t}^{\no}) + \frac{r_1 \lambda_{1}^{\rsignal}-c+c(1-\mu_{t}^{\no})  \log \left( \frac{1-\mu_{t}^{\no}}{\mu_{t}^{\no}} \right)}{\lambda_{1}^{\rsignal}}. 
\end{equation*} 

Note that $V''(\mu_{t}^{\no}) > 0$ for this solution, as well as that $V'(c/(r_1 \lambda_{1}^{\rsignal}))=-1$, so that the value function is everywhere above $(1-\mu_{t}^{\no})$. Thus, the agent would \emph{never} shirk and choose option $(1,0)$ prior to time $\stoptime$; so, as long as the value function is also above $r_1\mu_{t}^{\no}$, the moral hazard constraint does not bind before $\stoptime$. This shows that $r_{1}$ should be set so that the initial moral hazard constraint holds, as discussed in the main text.

\section{Two Periods}
\label{apx:two_period}
In this section, we consider a simple two-period model and show that in the two-period environment, without any assumptions on the information acquisition process, the optimal contract always has a static implementation as a scoring rule. 

\begin{proposition}
In the two-period environments, there exists a scoring rule that implements maximum effort.
\end{proposition}

\begin{proof}
First, note that if in the optimal contract, the agent only exerts effort for one period or does not exert effort at all, the contract for the second period is irrelevant; hence, it is immediate that there exists a scoring rule that implements maximum effort.

Now we focus on the case in which the agent can be incentivized to exert effort in two periods. Without loss of generality, we can assume that the principal offers a menu $\rewardMenu_2 = \{r^s_2\}_{s\in S} \cup \reward^{\no}_2$ at time $2$, 
and a menu $\rewardMenu_1 = \{r^s_1\}_{s\in S} \cup \rewardMenu_2$ at time $1$. 
In this case, consider another menu $\widehat{\rewardMenu}$ where $\widehat{\rewardMenu}_1 = \widehat{\rewardMenu}_2 = \rewardMenu_1$. 
The agent's incentive to exert costly effort weakly increases in the first period as his continuation payoff weakly increases. Thus, the agent exerts effort for at least one period. 
Moreover, observe that if the agent knows that he will choose not to exert effort in the second period, he could pretend to receive a Poisson signal in the first period and obtain a payoff from menu $\rewardMenu_1$ before entering the second period. 
Therefore, by including $\rewardMenu_1$ in $\widehat{\rewardMenu}_2$, the agent's payoff from not exerting effort remains the same. 
On the other hand, the agent has more options to choose from in the second period if he exerts effort. His incentive for exerting effort also weakly increases in the second period. 

Combining all cases, there exists a scoring rule that implements maximum effort.
\end{proof}

Intuitively, the tension in dynamic incentives is that if we provide a higher reward to the agent at certain periods, the agent has a stronger incentive to exert effort in previous periods but a weaker incentive to exert effort in later periods. 
However, in the two-period model, the latter effect disappears since when we adjust the rewards in the second period, there won't be any future periods. 
Therefore, a static implementation, i.e., a scoring rule, implements the maximum effort. 

\end{document}